\def\version{November 25, 2014}
\def\macrosPb{} 
\def\macrosHarxiv{} 
  \def\boldsymbol{\pmb}
  \DeclareMathAlphabet{\mathcal}{OMS}{cmsy}{m}{n}
\def\UseSection{
        \numberwithin{equation}{section}
	\theoremstyle{plain}
        \newtheorem{theorem}    {Theorem}[section]
        \DefineTheorems 
}
\def\DefineTheorems{
	
	\newtheorem{lemma}      [theorem] {Lemma}
	
	\newtheorem{prop}       [theorem] {Proposition}
	
	\newtheorem{cor}        [theorem] {Corollary}

	\theoremstyle{definition}
	\newtheorem{defn}       [theorem] {Definition}
	
	\newtheorem{example}       [theorem] {Example}

	\theoremstyle{definition}

}
\newcommand{\bt}   {\begin{theorem}}
\newcommand{\et}   {\end  {theorem}}
\newcommand{\bl}   {\begin{lemma}}
\newcommand{\el}   {\end  {lemma}}
\newcommand{\bp}   {\begin{prop}}
\newcommand{\ep}   {\end  {prop}}
\newcommand{\bc}   {\begin{cor}}
\newcommand{\ec}   {\end  {cor}}
\newcommand{\bd}   {\begin{defn}}
\newcommand{\ed}   {\end  {defn}}
\newcommand{\ba}   {\begin{array}}
\newcommand{\ea}   {\end  {array}}
\newcommand{\be}   {\begin{enumerate}}
\newcommand{\ee}   {\end  {enumerate}}
\newcommand{\bi}   {\begin{itemize}}
\newcommand{\ei}   {\end  {itemize}}
\def\eq#1\en{\begin{equation}#1\end{equation}}  
\def\eqsplit#1\ensplit{
	\begin{equation}\begin{split}#1\end{split}\end{equation}
	}
\def\eqalign#1\enalign{
	\begin{align}#1\end{align}
	}
\def\eqmul#1\enmul{
	\begin{multline}#1\end{multline}
	}
\newcommand{\eqarrstar} {\begin{eqnarray*}} 
\newcommand{\enarrstar} {\end{eqnarray*}} 
\newcommand{\eqarray}   {\begin{eqnarray}} 
\newcommand{\enarray}   {\end{eqnarray}} 
\newcommand{\nnb}	{\nonumber \\} 
\newcommand{\lbeq}[1]  {\label{e:#1}}
\newcommand{\refeq}[1] {\eqref{e:#1}}    
\newcommand{\labelcounter}[2]{{%
	\stepcounter{#1}
	\protected@write\@auxout{}%
	{\string\newlabel{#2}{{\csname the#1\endcsname}{\thepage}}}%
	{\ref{#2}}
	}}
\newcommand{\Cbold} {{\mathbb C}}
\newcommand{\Nbold} {{\mathbb N}}
\newcommand{\Rbold} {{\mathbb R}}
\newcommand{\Zbold} {{\mathbb Z}}
\newcommand{\Ecal}   {\mathcal{E}}
\newcommand{\Ical}   {\mathcal{I}}
\newcommand{\Mcal}   {\mathcal{M}} 
\newcommand{\Ncal}   {\mathcal{N}} 
\newcommand{\Pcal}   {\mathcal{P}}
\newcommand{\Rcal}   {\mathcal{R}}
\newcommand{\Ucal}   {\mathcal{U}} 
\newcommand{\Vcal}   {\mathcal{V}}
\newcommand{\Rd}    {{ {\Rbold}^d}}
\newcommand{\Zd}    {{ {\Zbold}^d }}
\newcommand{\spose}[1] {{\hbox to 0pt{#1\hss}} }
\newcommand{\ltapprox} {\mathrel{\spose{\lower 3pt\hbox{$\mathchar"218$}}
 \raise 2.0pt\hbox{$\mathchar"13C$}}}
\newcommand{\gtapprox} {\mathrel{\spose{\lower 3pt\hbox{$\mathchar"218$}}
 \raise 2.0pt\hbox{$\mathchar"13E$}}}
\definecolor{at}{rgb}{0.0, 0.5, 0.0} 
\newcommand{\Tay}{{\rm Tay}}
\newcommand{\LTsym}{{\rm loc}}
\newcommand{\LT}{{\rm Loc}  }
\renewcommand{\to} {\rightarrow}
\renewcommand{\qed}{\hfill\rule{2mm}{2mm}\bigskip}
\newcommand{\R}{\Rbold}
\newcommand{\Z}{\Zbold}
\newcommand{\N}{\Nbold}
\newcommand{\C}{\mathbb{C}}
\newcommand{\Lambdabold}{\boldsymbol{\Lambda}}
\newcommand{\psib}{\bar\psi}
\newcommand{\ci}{\underline{i}}
\newcommand{\pair}[1]{\langle #1 \rangle}
\newcommand{\Phipol}{\Pi}
\newcommand{\units}{\Ucal}
\newcommand{\cgam}{\gamma}
\newcommand{\h}{\mathfrak{h}}
\newcommand{\Id}{\mathrm{Id}}
\newcommand{\pp}{a}
\newcommand{\qq}{b}
\newcommand{\sigmaa}{\sigma}
\newcommand{\sigmab}{\bar{\sigma}}
\newcommand{\sgn}{\mathrm{sgn}}
  \newcommand{\texorpdfstring}[2]{#1}
\title  {
       A renormalisation group method.
       \\
       II.
       Approximation by local polynomials
        }
\author{
David C. Brydges\thanks{Department of Mathematics,
University of British Columbia,
Vancouver, BC, Canada V6T 1Z2.
E-mail: {\tt db5d@math.ubc.ca}, {\tt slade@math.ubc.ca}.}\;
 and Gordon Slade$^*$}
\date\version
\begin{document}

\maketitle

\begin{abstract}
This paper is the second in a series devoted to the development of a
rigorous renormalisation group method for lattice field theories
involving boson fields, fermion fields, or both.  The method is set
within a normed algebra $\Ncal$ of functionals of the fields.  In this
paper, we develop a general method---localisation---to approximate an
element of $\Ncal$ by a local polynomial in the fields.  From the
point of view of the renormalisation group, the construction of the
local polynomial corresponding to $F \in \Ncal$ amounts to the
extraction of the relevant and marginal parts of $F$.  We prove
estimates relating $F$ and its corresponding local polynomial, in
terms of the $T_\phi$ semi-norm introduced in part~I of the series.
\end{abstract}


\section{Introduction and main results}
\label{sec:intro}

This paper is the second in a series devoted to the development of a
rigorous renormalisation group method.  In \cite{BS-rg-norm}, we
defined a normed algebra $\Ncal$ of functionals of the fields.  The
fields can be bosonic, or fermionic, or both, and in
most of this paper there
is no distinction between these possibilities.  The algebra $\Ncal$ is
equipped with
the $T_\phi$ semi-norm, which is defined in terms of a normed space $\Phi$
of test functions.  In the renormalisation group method, a sequence of
test function spaces $\Phi_j$ is chosen, with corresponding normed
algebras $\Ncal_j$, and there is a dynamical system whose trajectories
evolve through these normed algebras in the sequence $\Ncal_0 \to
\Ncal_1 \to \Ncal_2 \to \cdots$.  The dimension of the dynamical
system is unbounded, but a finite number of local polynomials in the
fields represent the relevant (expanding) and marginal (neutral)
directions for the dynamical system.  These local polynomials play a
central role in the renormalisation group approach.

In this paper, we develop a general method for the extraction from an
element $F \in \Ncal$ of a local polynomial $\LT_XF$, localised on a
spatial region $X$, that captures the relevant and marginal parts of
$F$.  We also prove norm estimates which show that the norm of
$\LT_XF$ is not much larger than the norm of $F$, while the norm of
$F-\LT_XF$ is substantially smaller than the norm of $F$.  The latter
fact, which is crucial, indicates that $\LT_XF$ has encompassed the
important part of $F$, leaving the irrelevant remainder $F-\LT_XF$.
The method used in our construction of $\LT_XF$ bears some relation to
ideas in \cite{BR92}.

This paper is organised as follows.  Section~\ref{sec:intro} contains
the principal definitions and statements of results,
as well as some of the simpler proofs.  More substantial
proofs are deferred to Section~\ref{sec:LTsym}.
Section~\ref{sec:Taylor} contains estimates for lattice Taylor
expansions; these play an essential role in the proofs of
Propositions~\ref{prop:Locbd}--\ref{prop:LTKbound}, which provide the
norm estimates on $\LT_XF$ and $F-\LT_XF$.

\subsection{Fields and test functions}

We recall some concepts and notation from \cite{BS-rg-norm}.

Let $\Lambda = \Zd/(mR\Z)$ denote the $d$-dimensional discrete torus
of (large) \emph{period} $mR$ for integers $R \ge 2$ and $m \ge 1$.
In \cite{BS-rg-norm}, we introduced an index set $\Lambdabold =
\Lambdabold_b \sqcup \Lambdabold_f$.  The set $\Lambdabold_b$ is
itself a disjoint union of sets $\Lambdabold_b^{(i)}$ ($i=1,\ldots,
s_b$) corresponding to different species of boson fields.  Each
$\Lambdabold_b^{(i)}$ is either a finite disjoint union of copies of
$\Lambda$, with each copy representing a distinct field component for
that species, or is $\Lambda \sqcup \bar\Lambda$ when a complex field
species is intended.  The set $\Lambdabold_f$ has the same structure,
with possibly a different number $s_f$ of fermion field species.

An element of $\R^{\Lambdabold_b}$ is called a \emph{boson field}, and
can be written as $\phi = (\phi_{x})_{x \in \Lambdabold_{b}}$.  Let
$\Rcal=\Rcal(\Lambdabold_b)$ denote the ring of functions from
$\R^{\Lambdabold_b}$ to $\C$ having at least $p_\Ncal$ continuous
derivatives, where $p_\Ncal$ is fixed.  The \emph{fermion field} $\psi
= (\psi_{y})_{y \in \Lambdabold_{f}}$ is a set of anticommuting
generators for an algebra $\Ncal=\Ncal (\Lambdabold)$ over the ring
$\Rcal$.  By definition (see \cite{BS-rg-norm}),
$\Ncal$ consists of elements $F$ of the form
\begin{equation}
    \label{e:K}
    F
=
    \sum_{y \in \vec\Lambdabold_f^*} \frac{1}{y!} F_y \psi^y
,
\end{equation}
where each coefficient $F_{y}$ is an element of $\Rcal$.  We will use
test functions $g : \vec\Lambdabold^* \to \C$ as defined in
\cite{BS-rg-norm}.  Also, given a boson field $\phi$, we will use the
bilinear pairing between elements of $\Ncal$ and test functions
defined in \cite{BS-rg-norm} and written as
\begin{equation}
\label{e:pairdef}
    \pair{F,g}_\phi = \sum_{z \in \vec\Lambdabold^*}
    \frac{1}{z!} F_z(\phi)g_z.
\end{equation}

For our present purposes, we distinguish between the
boson and fermion fields only through the dependence of
the pairing on the boson field $\phi$.
When the distinction is unimportant,
we use $\varphi$ to denote both kinds of fields,
and identify $\vec\Lambdabold$ with $\Lambda \times \{1,2,\dots
,p_{\Lambdabold} \}$, where $p_{\Lambdabold}$ is the number
of copies of $\Lambda$ comprising $\vec\Lambdabold$.
This $p_{\Lambdabold}$ is
given by the sum, over all species, of the number of components
within a species.
Thus we can write the fields all evaluated at $x \in \Lambda$
as the sequence $\varphi (x) =
(\varphi_{1}(x),\dots,\varphi_{p_{\Lambdabold}}(x))$.

\subsection{Local monomials and local polynomials}

Let $e_{1},\dots ,e_{d}$ denote the standard unit vectors in $\Zd$, so that
\begin{equation}
\label{e:units}
    \units = \{\pm e_{1},\dots ,\pm e_{d}\}
\end{equation}
is the set of all $2d$ unit vectors.
For $e \in \units$ and $f:\Lambda\to\C$,
the difference operator is
given by
\begin{equation}
    \nabla^{e} f(x)=f(x+e) - f(x).
\end{equation}
When $e$ is one of
the standard unit vectors $\{e_{1},\dots ,e_{d} \}$, we refer to
$\nabla^{e}$ as a \emph{forward derivative}. When $e$ is the negative
of a standard unit vector we refer to $\nabla^{e}$ as a \emph{backward
derivative}, although it is the negative of a conventional backward
derivative.  We allow $2d$ directions in $\units$, rather than only
$d$, so as not to break lattice symmetries by favouring forward
derivatives over backward derivatives. This introduces redundancy
expressed by the identity
\begin{equation}
    \label{e:Vcal-relation}
    \nabla^{e}  + \nabla^{-e}
    =
    - \nabla^{-e}\nabla^{e}
,
\end{equation}
which is straightforward to verify by evaluating both sides on a
function $f$.  For $\alpha \in \N_0^\units$ with components $\alpha(e)
\in \N_0$, we write
\begin{equation}
    \label{e:nabla-alpha}
    \nabla^\alpha  =
    \prod_{e \in \units} \nabla^{\alpha(e)} ,
\quad\quad
    \nabla^{0} = \Id
,
\end{equation}
where the product is independent of the order of its factors.

A
\emph{local monomial} $M$ is a finite product of fields and their
derivatives, all to be evaluated at the same point in $\Lambda$ (whose
value we suppress). To be more precise, for $m= (m_{1},\dots ,m_{p
(m)})$ a finite sequence whose components $m_{k} = (i_{k},\alpha_{k})$
are elements of $\{1,\dotsc ,p_{\Lambdabold} \}\times \N_0^\units$, we
define
\begin{equation}
    \label{e:Mm}
    M_{m}
=
    \prod_{k=1}^{p (m)}
    \nabla^{\alpha_{k}}\varphi_{i_{k}}
    =
    \big(\nabla^{\alpha_{1}}\varphi_{i_{1}}\big)
    \cdots
    \big(\nabla^{\alpha_{p(m)}}\varphi_{i_{p(m)}}\big)
.
\end{equation}
The product in $M_m$ is taken in the same order as the components
$i_k$ in $m$.  For example, if the sequence $m$ is given by
$m=((1,\alpha_1),(1,\alpha_1),(1,\alpha_2),(1,\alpha_2),
(1,\alpha_2),(2,\alpha_3))$ with $\alpha_1 < \alpha_2$, then
\begin{equation}
    \label{e:Mmex}
    M_m = (\nabla^{\alpha_1}\varphi_1)^2
    (\nabla^{\alpha_2}\varphi_1)^3 \nabla^{\alpha_3}\varphi_2.
\end{equation}
It is convenient to denote the number of times $m$ contains a given
pair $(i,\alpha)$ as $n_{(i,\alpha)}=n_{(i,\alpha)}(m)$; in
\eqref{e:Mmex} we have $n_{(1,\alpha_1)}=2$, $n_{(1,\alpha_2)}=3$,
$n_{(2,\alpha_3)}=1$, and all other $n_{(i,\alpha)}$ are zero.  For a
fermionic species $i$, $M_{m}=0$ when $n_{(i,\alpha)} > 1$.
Permutations of the order of the components of $m$ give plus or minus
the same monomial. We will now define a subset $\mathfrak{m}$ of
sequences such that every non-zero monomial \eqref{e:Mm} is
represented by exactly one $m \in \mathfrak{m}$. First we fix an order
$\le$ on the elements of $\N_0^\units$.  Let $\mathfrak{m}$ be the set
whose elements are finite sequences as defined above and such that:
(i)~$i_{1} \le \cdots \le i_{p (m)}$; (ii)~for $i$ a fermionic species
$n_{(i,\alpha)} = 0,1$; (iii)~for $k<k'$ with $i_{k}=i_{k'},
\alpha_{k}\le \alpha_{k'}$.
Conditions (i) and (iii) together amount to
imposing lexicographic order on the components of a sequence $m$.

The \emph{degree} of a local monomial $M_{m}$ is the length $p=p (m)$
of the sequence $m\in \mathfrak{m}$.  For $m$ equal to the empty
sequence $\varnothing$ of length $0$, we set $M_\varnothing = 1$,
and we include $m=\varnothing$ in $\mathfrak{m}$.
In addition, we specify a map which associates to each field species a
value in $(0,+\infty]$ called the \emph{scaling dimension} (also
known as \emph{engineering dimension}), which we abbreviate as the
\emph{dimension} of the field species.  Following tradition, for $i =
1,\ldots, p_{\Lambdabold}$, we denote the dimension of the species of
the field $\varphi_i$ by $[\varphi_i]$.  This dimension does
\emph{not} depend on the value of the field, only on its species.
Then we define the \emph{dimension} of $M_{m}$ by
\begin{equation}
\label{e:dimdef}
    [M_{m}]
=
    \sum_{k=1}^{p (m)} \big([\varphi_{i_{k}}] + |\alpha_{k}|_1 \big)
,
\end{equation}
with the degenerate case $[M_\varnothing]=[1]=0$.

Let $\mathfrak{m}_+$ denote the subset of $\mathfrak{m}$ for which
only forward derivatives occur.  Given $d_{+} \ge 0$, let $\Mcal_+$
denote the set of monomials $M_{m}$ with $m\in \mathfrak{m}_+$, such
that
\begin{equation}
    \label{e:Mcal}
    [M_{m}]
\le
    d_{+}.
\end{equation}

\begin{example}
\label{ex:mon}
Consider the case of a single real-valued boson field $\varphi$ of
dimension $[\varphi]=\frac{d-2}{2}$, with no fermion field.
The space $\Ncal_j$ is reached
after $j$ renormalisation group steps have been completed.  Each
renormalisation group step integrates out a fluctuation field, with
the remaining field increasingly smoother and smaller in magnitude.  A
basic principle is that there is an $L>0$ such that
$\varphi_x$ will typically have magnitude approximately
$L^{-j[\varphi]}$, and that moreover $\varphi$ is roughly constant
over distances of order $L^{j}$.  A block $B$ in $\Zd$, of side
$L^{j}$, contains $L^{dj}$ points, so the above assumptions lead to
the rough correspondence
\begin{equation}
    \label{e:relevant-monomials}
    \sum_{x \in B} |\varphi_{x}|^p \approx L^{(d-p[\varphi])j}
    .
\end{equation}
In the case of $d=4$, for which $[\varphi]=1$, this scales down when
$p>4$ and $\varphi^{p}$ is said to be \emph{irrelevant}.  The power
$p=4$ neither decays nor grows, and $\varphi^4$ is called \emph{marginal}.
Powers
$p<4$ grow with the scale, and $\varphi^{p}$ is said to be
\emph{relevant}.
The assumption that $\varphi$ is roughly constant over distances of
order $L^{j}$ translates into an assumption that each spatial derivative
of $\varphi$ produces a factor $L^{-j}$,
so that, e.g.,
$\sum_{x \in B} |\nabla^\alpha\varphi_{x}|^p
\approx L^{(d-p[\varphi]-p|\alpha|_1)j}$.
Thus, in dimension $d=4$ with $d_+=4$,
$\Mcal_+$ consists of the relevant monomials
\begin{equation}
\label{e:locmonrel}
    1,\;\; \varphi,\;\; \varphi^2,\;\; \varphi^3,\;\;
    \nabla_i\varphi,\;\; \nabla_j\nabla_i\varphi,\;\;\varphi \nabla_i \varphi,
\end{equation}
together with the marginal monomials
\begin{equation}
\label{e:locmonmar}
    \varphi^4,\;\;
    \nabla_k\nabla_j\nabla_i \varphi, \;\;
    \varphi \nabla_j\nabla_i \varphi, \;\;
    \varphi^2\nabla_i \varphi,
\end{equation}
with each $\nabla_l$ represents forward differentiation in the
direction $e_l\in \{+e_1,\ldots,+e_d\}$.
\qed
\end{example}

Let $\Pcal$ be the vector space over $\Cbold$ freely generated by all
the monomials $(M_{m})_{ m \in \mathfrak{m}}$ of finite dimension.
A polynomial $P \in \Pcal$ has a unique representation
\begin{equation}
\label{e:P-rep}
    P
=
    \sum_{m \in \mathfrak{m}} a_{m} M_{m}
,
\end{equation}
where all but finitely many coefficients $a_{m} \in \Cbold$ are zero.
Similarly, we define $\Pcal_+$ to be the vector subspace of $\Pcal$
freely generated by the monomials $(M_{m})_{ m \in \mathfrak{m}_+}$ of
finite dimension.
Given $x\in\Lambda$, a polynomial $P\in\Pcal$ is mapped to an element
$P_{x} \in \Ncal$ by evaluating the fields in $P$ at $x$. More
generally, for any $X \subset \Lambda$ and $P \in \Pcal$, we define an
element of $\Ncal$ by
\begin{equation}
    P(X) = \sum_{x \in X} P_x.
\end{equation}

For a real number $t$ we define $\Pcal_{t}$ to be
the subspace of $\Pcal$ spanned by the monomials with $[M_{m}]\ge t$.
Let
\begin{equation}
    \label{e:nufrak+}
    \mathfrak{v}_+ = \{ m \in \mathfrak{m}_+ : [M_m] \le d_+\}
     = \{ m \in \mathfrak{m}_+ : M_m \in \Mcal_+ \},
\end{equation}
and let $\Vcal_+$ denote the vector subspace of $\Pcal_+$ generated by
the monomials in $\Mcal_+$. By definition, the set
$\mathfrak{v}_+$ is finite.  The use of only forward derivatives to
define $\Vcal_+$ breaks the Euclidean symmetry of $\Lambda$.  We wish
to replace $\Vcal_+$ by a symmetric family of polynomials, and this
leads us to consider symmetry in more detail.

Let $\Sigma$ be the group of permutations of $\units$.  Let
$\Sigma_{\text{axes}}$ be the abelian subgroup of $\Sigma$ whose
elements fix $\{e_{i},-e_{i} \}$ for each $i=1,\dots ,d$. In other
words, elements of $\Sigma_{\text{axes}}$ act on $\units$ by possibly
reversing the signs of the unit vectors. Let $\Sigma_{+}$ be the
subgroup of permutations that permute $\{e_1,\ldots, e_d\}$ onto
itself and $\{-e_1,\ldots,-e_d\}$ onto itself.  Then (i)
$\Sigma_{\text{axes}}$ is a normal subgroup of $\Sigma$, (ii) every
element of $\Sigma$ is the product of an element of
$\Sigma_{\text{axes}}$ with an element of $\Sigma_{+}$, and (iii) the
intersection of the two subgroups is the identity. Therefore, by
definition, $\Sigma$ is the semidirect product $\Sigma =
\Sigma_{\text{axes}}\rtimes \Sigma_{+}$.

An element $\Theta\in\Sigma$ acts on elements of $\N_0^\units$
via its action on
components, as $(\Theta\alpha)(e) = \alpha(\Theta(e))$.  The action
of $\Theta$ on derivatives is then given by $\Theta\nabla^\alpha =
\nabla^{\Theta\alpha}$.  This allows us to define an action of the
group $\Sigma$ on $\Pcal$ by linear transformations,
determined by the action
\begin{equation}
    M_{m}
\mapsto
    \Theta M_{m}
=
    \prod_{k=1}^{p (m)}
    \nabla^{\Theta\alpha_{k}}\varphi_{i_{k}}
=
    M_{\Theta m}
\end{equation}
on the monomials, where $\Theta m \in \mathfrak{m}$ is defined by the
action of $\Theta$ on the components $\alpha_k$ of $m$.
We say that $P \in\Pcal$ is
$\Sigma_{\text{axes}}$-\emph{covariant} if there is a homomorphism
$\lambda(\cdot, P):\Sigma_{\text{axes}}\to \{-1,1\}$ such that
\begin{equation}
\label{e:covariant}
    \Theta P
=
    \lambda (\Theta,P)P,
    \quad\quad
    \Theta \in \Sigma_{\text{axes}}
.
\end{equation}
As the notation indicates, the homomorphism can depend on $P$.

The polynomials in $\Vcal_+$ contain only forward derivatives and
hence do not form an invariant subspace of $\Pcal$ under the action of
$\Sigma$.  We wish to replace $\Vcal_+$ by a suitable
$\Sigma$-invariant subspace of $\Pcal$, which we will call $\Vcal$.
As a first step in this process, we define a map that associates to a
monomial $M \in \Mcal_+$ a polynomial $P =P(M) \in \Pcal$, by
\begin{equation}
\label{e:Pm-def}
    P (M)
=
    |\Sigma_{\text{axes}}|^{-1}
    \sum_{\Theta \in \Sigma_{\text{axes}}}
    \lambda (\Theta,M) \Theta M
\end{equation}
where $\lambda (\Theta,M)=-1$ if the number of derivatives in $M$ that
are reversed by $\Theta$ is odd and otherwise $\lambda (\Theta,M)=1$.
This is a homomorphism: for $\Theta,\Theta' \in \Sigma_{\text{axes}}$,
$\lambda (\Theta\Theta',M)=\lambda (\Theta,M)\lambda (\Theta',M)$.
Note that $P(M)$ consists of a linear combination of monomials whose
degrees and dimensions are all equal to those of $M$.  We claim that
for any $M \in \Mcal_+$, the polynomial $P=P(M)$ of
\eqref{e:Pm-def} obeys:
$P (M)$ is $\Sigma_{\text{axes}}$-covariant; $M-P (M) \in \Pcal_{t}$
for some $t>[M]$ up to terms that vanish under the redundancy relation
\eqref{e:Vcal-relation}; and $P (\Theta M)=\Theta P (M)$ for $\Theta
\in \Sigma_{+}$.  The proof of this fact is deferred to
Section~\ref{sec:PM}.

To enable the use of the redundancy relation
\eqref{e:Vcal-relation}, let $\Rcal_{1}$ be the vector subspace of
$\Pcal$ generated by the relation \eqref{e:Vcal-relation}; this is
defined more precisely as follows.  First, $0 \in \Rcal_1$.  Given
nonzero $P \in \Pcal$, we recursively replace any occurrence of
$\nabla^e\nabla^{-e}$ in any monomial in $P$ by the equivalent
expression $-(\nabla^e + \nabla^{-e})$.  This procedure produces
monomials of lower dimension so eventually terminates.  If the
resulting polynomial is the zero polynomial, then $P \in \Rcal_1$, and
otherwise $P \not\in \Rcal_1$. The  claim in the
previous paragraph shows the existence of
the polynomial $\hat P$ of the next definition.

\begin{defn}
\label{def:Vcal} To each monomial $M\in \Mcal_+$ we choose a
polynomial $\hat{P} (M)\in \Pcal$, which is a linear combination of
monomials of the same degree and dimension as $M$, such that
\begin{align}
\label{e:cov-conditions}
(i)
\quad&
\text{$\hat{P}(M)$ is $\Sigma_{\text{axes}}$-covariant},
\nnb
(ii)\quad&
    M-\hat{P} (M)
\in
    \Pcal_{t}   + \Rcal_{1}  \text{ for some $t>[M]$},
\nnb
(iii)\quad&
    \text{$\Theta  \hat{P} (M)
=
    \hat{P} (\Theta M)$ for
    $\Theta  \in \Sigma_{+}$
    }
.
\nonumber
\end{align}
Let $\Vcal$ be the vector subspace of $\Pcal$ spanned by the
polynomials $\{\hat{P} (M): M \in \Mcal_+ \}$.  We also define
$\Vcal(X) = \{P(X) : P\in\Vcal\}$, which is a subset of
$\Ncal$.
\end{defn}

Note that $\Vcal$ depends on our choice of $\hat{P}(M)$
for each $M \in \Mcal_+$,
but is spanned by monomials of dimension at most $d_+$.
The restriction of $\Theta$
to $\Sigma_+$ in
item \emph{(iii)} ensures that $\Theta M \in \Mcal_+$ when $M \in
\Mcal_+$, so that $\hat P(\Theta M)$ makes sense.

\begin{example}
\label{ex:Pm-nonunique}
In practice, we may prefer to choose $\hat P$ satisfying
the conditions of Definition~\ref{def:Vcal} using a formula other
than \eqref{e:Pm-def}.  For example,
for $e \in \units$ let $M_{e} = \varphi \nabla^{e}\nabla^{e} \varphi$.
The formula \eqref{e:Pm-def} gives
\begin{equation}
    P (M_{e})
=
    (1/2)\left(
    \varphi \nabla^{e}\nabla^{e} \varphi +
    \varphi \nabla^{-e}\nabla^{-e} \varphi
    \right)
,
\end{equation}
but via \eqref{e:Vcal-relation} the simpler choice $\hat{P} (M_{e}) =
- \varphi \nabla^{-e}\nabla^{e} \varphi$ also satisfies the conditions
of Definition~\ref{def:Vcal}.
\end{example}

\begin{prop}
\label{prop:EVcal}
The subspace $\Vcal$
is a $\Sigma$-invariant subspace of $\Pcal$.
\end{prop}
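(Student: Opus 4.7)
The plan is to exploit the semidirect-product decomposition $\Sigma = \Sigma_{\text{axes}}\rtimes\Sigma_+$ stated in the paper, so that it suffices to verify that $\Vcal$ is invariant under $\Sigma_{\text{axes}}$ and under $\Sigma_+$ separately. Since $\Vcal$ is by definition the linear span of $\{\hat P(M): M \in \Mcal_+\}$ and the action of $\Sigma$ on $\Pcal$ is linear, it further suffices to show that $\Theta \hat P(M) \in \Vcal$ for every $M \in \Mcal_+$ and every $\Theta$ in either of the two subgroups.

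For $\Theta \in \Sigma_{\text{axes}}$, property~(i) of Definition~\ref{def:Vcal} says that $\hat P(M)$ is $\Sigma_{\text{axes}}$-covariant, so
\begin{equation}
    \Theta\hat P(M) = \lambda(\Theta,\hat P(M))\,\hat P(M) \in \Vcal,
\end{equation}
since $\lambda(\Theta,\hat P(M))\in\{-1,+1\}$.

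For $\Theta \in \Sigma_+$, property~(iii) of Definition~\ref{def:Vcal} says that $\Theta\hat P(M) = \hat P(\Theta M)$. Here the key point I would emphasise is that $\Theta\in\Sigma_+$ permutes $\{e_1,\dots,e_d\}$ among themselves, so applying $\Theta$ to the multi-indices $\alpha_k$ appearing in $M$ preserves the property of involving only forward derivatives; moreover $[\Theta M]=[M]\le d_+$ since the dimension depends only on $|\alpha_k|_1$ and on the field species. Hence $\Theta M\in\Mcal_+$, so $\hat P(\Theta M)\in\Vcal$ directly from the definition of $\Vcal$.

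Combining the two cases, any $\Theta\in\Sigma$ can be written as $\Theta = \Theta_{\text{axes}}\Theta_+$, and applying the two cases in succession yields $\Theta\hat P(M)\in\Vcal$, completing the proof. There is no genuine obstacle here: the properties (i) and (iii) of Definition~\ref{def:Vcal} were stipulated precisely to make this argument go through, and the only substantive verification is the preservation of $\Mcal_+$ under $\Sigma_+$, which is immediate from the definition \eqref{e:nufrak+} of $\mathfrak{v}_+$ and the fact that $\Sigma_+$ does not mix forward and backward derivatives.
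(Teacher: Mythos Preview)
Your proof is correct and follows essentially the same approach as the paper: use the semidirect-product decomposition $\Sigma=\Sigma_{\text{axes}}\rtimes\Sigma_+$, invoke property~(i) of Definition~\ref{def:Vcal} for invariance under $\Sigma_{\text{axes}}$ and property~(iii) for invariance under $\Sigma_+$. Your added remark that $\Sigma_+$ preserves $\Mcal_+$ is exactly the observation the paper makes just after Definition~\ref{def:Vcal} to justify that $\hat P(\Theta M)$ is well-defined.
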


\begin{proof}
By Definition~\ref{def:Vcal}\emph{(iii)}, the set $\{\hat{P} (M):M \in
\Mcal_+\}$ is mapped to itself by $\Sigma_{+}$. Since $\hat{P} (M)$ is
$\Sigma_{\text{axes}}$-covariant, $\Vcal$ is invariant under
$\Sigma_{+}$ and $\Sigma_{\text{axes}}$.  Thus, since $\Sigma =
\Sigma_{\text{axes}}\rtimes \Sigma_{+}$, $\Vcal$ is invariant under
$\Sigma$.
\end{proof}

\subsection{The operator \texorpdfstring{$\LTsym$}{loc}}
\label{sec:oploc}

We would like to define polynomial functions on subsets of the torus,
and for this we need to restrict to subsets which do not ``wrap
around'' the torus.  The restricted subsets we use are called
\emph{coordinate patches} and are defined as follows.  Fix a
non-negative integer $p_\Phi \ge 0$ and let $\bar p_\Phi =
\max\{1,p_\Phi\}$.  For a nonempty subset $X\subset\Lambda$, let
$X^{(1)} \supset X$ be the set of all points within $L^{\infty}$
distance $\bar p_\Phi$ of $X$.  This definition is such that the
values of derivatives $\nabla^\alpha g_z$ of a test function $g$ can
be computed when all components of $z$ lie in $X$, for all $\alpha$
with $|\alpha|_\infty \le p_\Phi$, knowing only the values of $g_z$
when all components of $z$ lie in $X^{(1)}$.  For a nonempty subset
$\Lambda '\subset\Lambda$, a map $z = (x_{1}, \dots, x_{d})$ from
$\Lambda '^{(1)}$ to $\Zd$ is said to be a \emph{coordinate} on
$\Lambda '$ if: (i) $z$ is injective and maps nearest-neighbour points
in $\Lambda '^{(1)}$ to nearest-neighbour points in $\Zd$, (ii)
nearest-neighbour points in the image $z (\Lambda '^{(1)})$ of
$\Lambda '^{(1)}$ are mapped by $z^{-1}$ to nearest-neighbour points
in $\Lambda '^{(1)}$.  We say that a nonempty subset $\Lambda '$ of
$\Lambda$ is a \emph{coordinate patch} if there is a coordinate $z$ on
$\Lambda '$ such that $z (\Lambda ')$ is a rectangle $\{x \in
\Zd:|x_{i}| \le r_{i}, i= 1,\dots ,d\}$ for some nonnegative integers
$r_{1},\dots ,r_{d}$.

By ``cutting open'' the torus $\Lambda$, all rectangles with $\max_i 2
(r_{i}+\bar p_\Phi)$ strictly smaller than the period of $\Lambda$ are
clearly coordinate patches.  By definition, the intersection of two
coordinate patches is also a coordinate patch, unless it is empty. If
$z$ and $\tilde{z}$ are both coordinates for a coordinate patch then
there is a Euclidean automorphism $E$ of $\Zd$ that fixes the origin
and is such that $\tilde{z} = Ez$.  This is proved by noticing that
the composition $Z = \tilde{z}\circ z^{-1}$ is well defined on $\{x
\in \Zd:\|x \|_{\infty} \le 1\}$, and therefore $Z$ is a permutation
of the set $\Ucal$ of unit vectors.  The orthogonal transformation $E$
that acts by the same permutation on $\Ucal$ is then an automorphism
of $\Zd$ with the required properties.

Given a coordinate patch $\Lambda'$ with coordinate $z$, and given
$\alpha = (\alpha_{1},\dots ,\alpha_{d})$ in $\Nbold^{d}$, we define
the monomial $z^{\alpha} = x_{1}^{\alpha_{1}} \dots
x_{d}^{\alpha_{d}}$, which is a function defined on $\Lambda'^{(1)}$.
We will define a class of test functions $\Phipol = \Phipol [\Lambda
']$ which are polynomials in each argument by specifying the monomials
which span $\Phipol$. To a local monomial $M_{m} \in \Mcal_{+}$ in
\emph{fields}, as in \eqref{e:Mm}, we associate a monomial $p_{m}\in
\Phipol$ by replacing $\nabla^{\alpha_{k}}\varphi_{i_{k}}$ by
$z_{k}^{\alpha_{k}}$. Thus
\begin{equation}
    \label{e:pmdef}
    p_{m} (z)
=
    \prod_{k=1}^{p (m)} z_{k}^{\alpha_{k}}
,
\end{equation}
which is a function defined on $\Lambda_{i_1}'^{(1)} \times \cdots
\times \Lambda_{i_{p(m)}}'^{(1)}$.  For the degenerate monomial
$m=\varnothing$, we set $p_\varnothing =1$.  We implicitly extend
$p_{m}$ by zero so that it becomes a test function defined on
$\vec\Lambdabold^*$.  For example, we associate the monomial $
z_{1}^{\alpha_1} z_{2}^{\alpha_1} z_{3}^{\alpha_2} z_{4}^{\alpha_2}
z_{5}^{\alpha_2} z_{6}^{\alpha_3}$ to the field monomial
\eqref{e:Mmex}.  However, we will also need the monomial
$z_{1}^{\alpha_2} z_{2}^{\alpha_2} z_{3}^{\alpha_2}
z_{4}^{\alpha_3}z_{5}^{\alpha_1} z_{6}^{\alpha_1}$ which cannot be
obtained from $m \in \mathfrak{m}_{+}$ because the condition (iii)
below \eqref{e:Mmex} now requires $\alpha_{2} \le \alpha_{3} \le
\alpha_{1}$, whereas we chose $\alpha_1<\alpha_2$ in \refeq{Mmex}.
Therefore we define $\bar{\mathfrak{m}}_{+}$ and
$\bar{\mathfrak{v}}_{+}$ by dropping the order condition (iii) in
$\mathfrak{m}_+$ and $\mathfrak{v}_+$.  The space $\Phipol
=\Phipol[\Lambda ']$ is the span of $\{p_{m}: m \in
\bar{\mathfrak{v}}_+\}$.  Euclidean automorphisms of $\Zd$ that fix
the origin act on $\Phipol$ and map it to itself, and therefore
$\Phipol [\Lambda ']$ does not depend on the choice of coordinate on
$\Lambda '$.

An equivalent alternate classification of the monomials in
$\Phipol[\Lambda']$ is as follows.  We define the \emph{dimension} of
a monomial \refeq{pmdef} to be its polynomial degree plus
$\sum_{k=1}^{p} [\varphi_{i_k}]$, i.e.,
$\sum_{k=1}^{p}([\varphi_{i_k}]+|\alpha_k|_1)$, consistent with
\eqref{e:dimdef}.  Then we can define $\Phipol[\Lambda']$ to be the
vector space spanned by the monomials (truncated outside $\Lambda'$ as
above) whose dimension is at most $d_+$.

In the following, we will also need the subspace $S\Phipol$ of
$\Phipol$. This is the image of $\Phipol$ under the symmetry operator
$S$ defined in \cite[Definition~\ref{norm-def:S}]{BS-rg-norm}.

Recall the definition from \cite{BS-rg-norm} that, given $X \subset
\Lambda$, $\Ncal(X)$ consists of those $F \in \Ncal$ such that
$F_z(\phi)=0$ for all $\phi$ whenever any component of $z$ lies
outside of $X$. For nonempty $X \subset \Lambda$, we say $F \in
\Ncal_{X}$ if there exists a coordinate patch $\Lambda '$ such that $F
\in \Ncal (\Lambda')$ and $X \subset \Lambda'$.  The condition $F \in
\Ncal_X$ guarantees that neither $X$ nor $F$ ``wrap around" the torus.

\begin{prop}
\label{prop:LTsymexists} For nonempty $X \subset \Lambda$ and $F \in
\Ncal_X$, there is a unique $V \in \Vcal$, depending on $F$ and $X$,
such that
\begin{equation}
\label{e:LT2pair}
    \pair{F,g}_{0} = \pair{V (X),g}_{0}
    \quad\quad
    \text{for all $g \in \Phipol$}
.
\end{equation}
The polynomial $V$ does not depend on the choice of coordinate $z$ or
coordinate patch $\Lambda'$ implicit in the requirement $F \in
\Ncal_X$, as long as $X \subset \Lambda'$ and $F \in \Ncal
(\Lambda')$.  Moreover, $\Vcal(X)$ and $S\Phipol$ are dual vector
spaces under the pairing $\pair{\cdot,\cdot}_0$.
\end{prop}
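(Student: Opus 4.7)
The plan is to first establish the duality between $\Vcal(X)$ and $S\Phipol$ under the pairing $\pair{\cdot,\cdot}_0$, from which existence and uniqueness of $V$ follow routinely. The structural backbone is a comparison between two natural bases: $\{\hat{P}(M) : M \in \Mcal_+\}$ for $\Vcal$, and $\{Sp_m : m \in \mathfrak{v}_+\}$ for $S\Phipol$, both organized by dimension under the natural bijection $M_m \leftrightarrow p_m$.

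First I would verify, directly from the definition of the pairing in \cite{BS-rg-norm}, that $\pair{F, g}_0$ depends on $g$ only through its symmetrization $Sg$; this justifies replacing $\Phipol$ by $S\Phipol$ in the duality claim. I would then compute explicitly the matrix entries $\pair{M(X), Sp_{m'}}_0$ for $M \in \Mcal_+$ and $m' \in \mathfrak{v}_+$ by applying \eqref{e:pairdef} at $\phi = 0$. Using the polynomial form \eqref{e:pmdef} of $p_{m'}$ and the form \eqref{e:Mm} of $M$, the computation reduces to a sum over $x \in X$ of discrete derivatives $\prod_k \nabla^{\alpha_k}_{z_k}(z_k^{\beta_k})|_{z_k = x}$ when the field-species multisets agree (with an additional sign from fermion reorderings) and vanishes otherwise. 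Under a dimension-respecting ordering, this pairing matrix is block upper-triangular with non-vanishing diagonal entries.

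To pass from $\Mcal_+$ to the basis $\{\hat{P}(M)\}$ of $\Vcal$, I would invoke Definition~\ref{def:Vcal}(ii): $\hat{P}(M) - M \in \Pcal_t + \Rcal_1$ for some $t > [M]$. Since $\Rcal_1$ is the kernel of the evaluation $\Pcal \to \Ncal$, its elements pair trivially, while the $\Pcal_t$ correction only affects strictly higher-dimension rows. Hence the matrix $\bigl(\pair{\hat{P}(M)(X), Sp_{m'}}_0\bigr)$ retains the upper-triangular structure with non-zero diagonal and is non-singular, which is the duality of $\Vcal(X)$ and $S\Phipol$.

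With duality in hand, the rest is essentially formal. The linear functional $g \mapsto \pair{F,g}_0$ on $\Phipol$ descends to $S\Phipol$, so by duality there is a unique $V(X) \in \Vcal(X)$ representing it; injectivity of $V \mapsto V(X)$ on $\Vcal$ for any nonempty $X$ (a consequence of the same triangular argument) recovers the unique $V \in \Vcal$. For coordinate and coordinate-patch independence, note that the pairing \eqref{e:pairdef} is intrinsic to $\vec\Lambdabold^*$, that $\Phipol[\Lambda']$ has already been shown not to depend on the coordinate on $\Lambda'$, and that two coordinate patches containing $X$ intersect in a common patch on which the pairing determines the same $V$. The principal technical obstacle is the explicit computation of the pairing matrix and verification of the triangular structure with non-vanishing diagonal, particularly the bookkeeping required to confirm that the ambiguity in the choice of $\hat{P}(M)$ allowed through the relation $\Rcal_1$ is immaterial for the duality.
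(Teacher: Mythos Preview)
Your approach is correct and follows the same structural idea as the paper's proof: establish that the pairing matrix between a basis of $\Vcal(X)$ and a basis of $S\Phipol$ is triangular with nonzero diagonal, hence invertible, and deduce existence, uniqueness, and duality from that. The differences are in the packaging. The paper works with the binomial test functions $f_m^{(a)}=N_m S b_m^{(a)}$ centred at a chosen point $a\in X$ rather than your monomial test functions $Sp_m$; with this basis the paper first obtains the \emph{exact} duality $\pair{M_{m,a},f_{m'}^{(a)}}_0=\delta_{m,m'}$ between $\Vcal_+(\{a\})$ and $S\Phipol$ (Lemma~\ref{lem:Tay2}), and only then invokes a triangular matrix $B_{m',m}=\pair{\hat P_{m'}(X),f_m^{(a)}}_0$ to pass from $\Vcal_+$ to $\Vcal$ (Lemma~\ref{lem:newLT}). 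Your one-step route with $Sp_m$ is equivalent---the two bases differ by a triangular change of variables since $p_m$ and $b_m^{(a)}$ have the same top-degree part---but it forces you to verify by hand both block-triangularity between dimension levels and non-degeneracy within each dimension block, whereas the paper's Kronecker-delta identity at the point $a$ makes the first step transparent. The paper's choice of the binomial basis is not incidental: it simultaneously sets up the lattice Taylor operator $\Tay_a$ and the identity $\pair{\LTsym_{+,a}F,g}_0=\pair{F,\Tay_a g}_0$ (Lemma~\ref{lem:Locid}), which feed directly into the norm estimates of Propositions~\ref{prop:Locbd}--\ref{prop:LTKbound}. Your basis would prove Proposition~\ref{prop:LTsymexists} just as well, but would not by itself give you this Taylor-expansion machinery.
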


The proof of Proposition~\ref{prop:LTsymexists} is deferred to
Section~\ref{sec:LTsym-eu}.  It allows us to define our basic object
of study in this paper, the map $\LTsym_X$.

\begin{defn}
\label{def:LTsym}
For nonempty $X \subset \Lambda$ we define $\LTsym_{X}:
\Ncal_{X} \to \Vcal(X)$ by $\LTsym_X F = V (X)$, where $V$ is the
unique element of $\Vcal$ such that \eqref{e:LT2pair} holds.
For $X = \varnothing$, we define $\LTsym_{\varnothing}=0$.
\end{defn}

By definition, the map $\LT_X : \Ncal_{X} \to \Vcal(X)$ is a linear map.

\subsection{Properties of \texorpdfstring{$\LTsym$}{loc}}
\label{sec:LocDefinition}

By definition, for nonempty $X \subset \Lambda$ and $F \in
\Ncal_{X}$,
\begin{equation}
\label{e:LT2}
    \pair{F,g}_{0} = \pair{\LTsym_XF,g}_{0}
    \quad\quad
    \text{for all $g \in \Phipol$}
.
\end{equation}
Also, if $F=V(X) \in \Vcal(X)$ then trivially $\pair{F,g}_0
=\pair{V(X),g}_0$ and hence the uniqueness in
Definition~\ref{def:LTsym} implies that $\LTsym_XF=V(X)=F$.  Thus
$\LTsym_X$ acts as the identity on $\Vcal(X)$.  The following
proposition shows that $\LTsym$ behaves well under composition.

\begin{prop}
\label{prop:LT2}
For $X,X' \subset \Lambda$ and $F \in
\Ncal_{X\cup X'}$, excluding the case $X'=\varnothing\not= X$,
\begin{align}
    &
    \LTsym_{X}\circ \LTsym_{X'} F = \LTsym_{X} F
    \label{e:LT4}
.
\end{align}
In particular, $\LTsym_{X}\circ (\Id - \LTsym_{X})=0$ on $\Ncal_X$.
\end{prop}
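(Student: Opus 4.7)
My plan is to reduce the statement to an application of the uniqueness half of Proposition~\ref{prop:LTsymexists}. The key idea is that the defining equation \eqref{e:LT2} determines the image of $\LTsym_Y$ uniquely from the pairing $\pair{\cdot,g}_0$ against test functions $g \in \Phipol$, so if two elements of $\Vcal(X)$ agree against all such $g$, they coincide.

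First I would dispense with the case $X = \varnothing$: both sides of \eqref{e:LT4} vanish by the convention $\LTsym_\varnothing = 0$, so there is nothing to prove. The excluded case $X' = \varnothing$, $X \neq \varnothing$ is precisely the one where this trivial argument fails, because $\LTsym_{X'}F = 0$ but $\LTsym_X F$ need not be zero; noting this explains why the hypothesis is natural. If $X = X' = \varnothing$ both sides are zero as well.

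From now on assume both $X$ and $X'$ are nonempty. The hypothesis $F \in \Ncal_{X \cup X'}$ supplies a coordinate patch $\Lambda'$ with $X \cup X' \subset \Lambda'$ and $F \in \Ncal(\Lambda')$. In particular $F \in \Ncal_X$ and $F \in \Ncal_{X'}$, so both $\LTsym_X F$ and $\LTsym_{X'}F$ are defined. Applying \eqref{e:LT2} in both forms gives, for every $g \in \Phipol$,
\begin{equation*}
    \pair{\LTsym_X F, g}_0 \;=\; \pair{F,g}_0 \;=\; \pair{\LTsym_{X'}F, g}_0 .
\end{equation*}
Write $\LTsym_X F = V(X)$ with $V \in \Vcal$, as furnished by Definition~\ref{def:LTsym}.

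The final step is to observe that $\LTsym_{X'}F \in \Vcal(X') \subset \Ncal(\Lambda')$, hence $\LTsym_{X'}F \in \Ncal_X$, so $\LTsym_X(\LTsym_{X'}F)$ is defined and is the unique element $W(X)$ with $W \in \Vcal$ satisfying $\pair{\LTsym_{X'}F, g}_0 = \pair{W(X),g}_0$ for all $g \in \Phipol$. The displayed chain of equalities shows that $W = V$ does the job, so uniqueness in Proposition~\ref{prop:LTsymexists} yields $\LTsym_X \LTsym_{X'}F = V(X) = \LTsym_X F$. The addendum $\LTsym_X \circ (\Id - \LTsym_X) = 0$ on $\Ncal_X$ then follows by taking $X' = X$ and using the linearity of $\LTsym_X$ noted just after Definition~\ref{def:LTsym}.

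There is no substantial obstacle beyond tracking the coordinate-patch bookkeeping needed to ensure $\LTsym_{X'}F$ lies in $\Ncal_X$; the proof is essentially a consequence of the duality/uniqueness statement in Proposition~\ref{prop:LTsymexists} together with the observation that $\LTsym_X$ and $\LTsym_{X'}$ both reproduce $F$ when tested against $\Phipol$.
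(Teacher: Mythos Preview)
Your proposal is correct and follows essentially the same approach as the paper: apply the defining equation \eqref{e:LT2} twice to obtain $\pair{\LTsym_X\circ\LTsym_{X'}F,g}_0=\pair{\LTsym_{X'}F,g}_0=\pair{F,g}_0=\pair{\LTsym_XF,g}_0$ for all $g\in\Phipol$, and conclude by the uniqueness in Proposition~\ref{prop:LTsymexists}. Your version is slightly more explicit than the paper's in checking that $\LTsym_{X'}F\in\Ncal_X$ so that the outer application of $\LTsym_X$ makes sense, but the substance is identical.
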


\begin{proof}
If $X=\varnothing$ then both sides are zero, so suppose that
$X,X'\not = \varnothing$.
Let $g \in \Phipol$.
By \eqref{e:LT2},
\begin{equation}
    \pair{\LTsym_{X}\circ \LTsym_{X'}F,g}_0
    =\pair{\LTsym_{X'}F,g}_0
    =
    \pair{F,g}_0 = \pair{\LTsym_{X}F,g}_0.
\end{equation}
Since $\LTsym_{X} \circ
\LTsym_{X'}F$ and $\LTsym_{X} F$ are both in $\Vcal (X)$, their
equality follows from the uniqueness in Definition~\ref{def:LTsym}.
\end{proof}

The following proposition gives an additivity property of $\LTsym$.

\begin{prop}
\label{prop:LTsum}
Let $X \subset \Lambda$ and $F_x \in \Ncal_{X}$
for all $x \in X$.  Suppose that $P \in \Vcal$
obeys $\LTsym_{\{x\}}F_x = P_x$ for
all $x\in X$.  Then $\LTsym_{X} F(X)=P(X)$, where $F(X)=\sum_{x\in
X}F_x$.
\end{prop}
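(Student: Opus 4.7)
The plan is to use the uniqueness characterisation of $\LTsym_X$ given in Proposition~\ref{prop:LTsymexists}, namely that $\LTsym_X F(X)$ is the unique element of $\Vcal(X)$ whose pairing with every $g \in \Phipol$ agrees with $\pair{F(X),g}_0$. Since $P \in \Vcal$, the candidate $P(X)=\sum_{x\in X} P_x$ lies in $\Vcal(X)$, so it suffices to verify that it reproduces the correct pairings against test functions in $\Phipol$.

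First I would fix an arbitrary $g \in \Phipol$. Using bilinearity of $\pair{\cdot,\cdot}_0$ in its first argument (immediate from \eqref{e:pairdef}) together with the definition $F(X)=\sum_{x\in X} F_x$, I get
\begin{equation}
    \pair{F(X),g}_0 = \sum_{x \in X} \pair{F_x,g}_0.
\end{equation}
Next, for each $x \in X$, the hypothesis $\LTsym_{\{x\}}F_x = P_x$ combined with the defining identity \eqref{e:LT2} applied at the singleton $\{x\}$ gives
\begin{equation}
    \pair{F_x,g}_0 = \pair{\LTsym_{\{x\}}F_x,g}_0 = \pair{P_x,g}_0.
\end{equation}
Summing over $x\in X$ and using bilinearity once more yields $\pair{F(X),g}_0 = \pair{P(X),g}_0$ for every $g\in\Phipol$.

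Since $P(X) \in \Vcal(X)$, the uniqueness clause of Proposition~\ref{prop:LTsymexists} (applied to $F(X) \in \Ncal_X$, which is the implicit hypothesis of the statement) forces $\LTsym_X F(X) = P(X)$, completing the proof. I do not expect any real obstacle: the argument is essentially a bookkeeping exercise combining the linearity of the pairing with the defining pairing identity for $\LTsym$, and the only point to keep in mind is that the $F_x$ all share the ambient coordinate patch needed for $F(X)\in\Ncal_X$, which is built into the statement.
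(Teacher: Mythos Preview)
Your proof is correct and matches the paper's argument essentially line for line: both expand $\pair{F(X),g}_0$ by linearity, apply the defining identity \eqref{e:LT2} at each singleton $\{x\}$ to replace $\pair{F_x,g}_0$ by $\pair{P_x,g}_0$, and then invoke the uniqueness in Proposition~\ref{prop:LTsymexists}. The only additional remark in the paper is the trivial observation that both sides vanish when $X=\varnothing$.
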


\begin{proof}
If $X$ is empty then both sides are zero, so suppose that $X$ is not
empty.  Let $g \in \Phipol$.  It follows from \eqref{e:LT2}, linearity
of the pairing, and the assumption, that
\begin{align}
    \pair{\LTsym_XF(X),g}_0
&=
    \pair{F(X),g}_0 = \sum_{x\in X} \pair{F_x,g}_0
\\
&=
    \sum_{x\in X} \pair{\LTsym_{\{x\}}F_x,g}_0
=
    \sum_{x\in X} \pair{P_x,g}_0 = \pair{P(X),g}_0
.
\end{align}
Since $\LTsym_XF(X)$ and $P(X)$ are both in $\Vcal (X)$, their
equality follows from the uniqueness in Definition~\ref{def:LTsym}.
\end{proof}

For nonempty $X \subset \Lambda$, let $\Ecal(X)$ be the set of
automorphisms of $\Lambda$ which map $X$ to itself.  Here, an
\emph{automorphism} of $\Lambda$ means a bijective
map from $\Lambda$ to $\Lambda$
under which nearest-neighbour points are mapped to nearest-neighbour
points under both the map and its inverse.  In particular,
$\Ecal(\Lambda)$ is the set of automorphisms of $\Lambda$.  An
automorphism $E \in \Ecal(\Lambda)$ defines a mapping of the boson
field by $(\phi_E)_{x} = \phi_{Ex}$.  Then, for $F =\sum_{y \in
\vec\Lambdabold_f^*} \frac{1}{y!}F_y \psi^y \in \Ncal$, we define $E$
as a linear operator on $\Ncal$ by
\begin{align}
    \label{e:E-action-on-N}
    (EF)(\phi)
    &=
    \sum_{y\in \vec\Lambdabold_f^*} \frac{1}{y!}F_{y} (\phi_{E}) \psi^{Ey}
=
    \sum_{y\in \vec\Lambdabold_f^*} \frac{1}{y!}F_{E^{-1}y} (\phi_{E}) \psi^{y}
,
\end{align}
where in the second equality we have extended the action of $E$ to
component-wise action on $\Lambdabold_f$, and we used the fact that
summation over $y$ is the same as summation over $E^{-1}y$.  The
following proposition gives a Euclidean covariance property of
$\LTsym$.

\begin{prop}
\label{prop:9LTdef}
For $X \subset \Lambda$, $F \in \Ncal_{X}$
and $E \in \Ecal(\Lambda)$,
\begin{align}
    &
    E\big(\LTsym_{X} F\big) = \LTsym_{EX} (EF)
    \label{e:LT3}
.
\end{align}
\end{prop}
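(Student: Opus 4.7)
The plan is to apply the uniqueness in Proposition~\ref{prop:LTsymexists}: I will show that $E(\LTsym_X F)$ satisfies both defining properties of $\LTsym_{EX}(EF)$, namely lying in $\Vcal(EX)$ and reproducing the pairing $\pair{EF,\cdot}_0$ on $\Phipol$. The case $X=\varnothing$ is immediate from $\LTsym_\varnothing = 0$, so assume $X\neq\varnothing$, fix a coordinate patch $\Lambda'\supset X$ with $F\in\Ncal(\Lambda')$, and write $\LTsym_X F = V(X)$ for the unique $V\in\Vcal$ produced by Proposition~\ref{prop:LTsymexists}.

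For the first property, observe that every automorphism of $\Lambda$ decomposes as $E = \tau\circ\Theta$ with $\Theta\in\Sigma$ a signed axis-permutation and $\tau$ a translation, so that $E(x+e) = Ex + \Theta e$ and hence $(\nabla^\alpha\phi_E)(x) = (\nabla^{\Theta\alpha}\phi)(Ex)$ for all $\alpha$. A direct computation using \eqref{e:E-action-on-N} then yields
$$E(V(X)) = \sum_{x\in X}(EV)_{x} = \sum_{x\in X}(\Theta V)_{Ex} = (\Theta V)(EX).$$
By $\Sigma$-invariance of $\Vcal$ (Proposition~\ref{prop:EVcal}), $\Theta V\in\Vcal$, so $E(\LTsym_X F)\in\Vcal(EX)$.

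For the pairing identity, unpack \eqref{e:E-action-on-N} and the definition of $\pair{\cdot,\cdot}_0$. At $\phi=0$ the substitution $\phi\mapsto\phi_E$ is trivial, so reindexing $y = Ez$ gives
$$\pair{EF,g}_0 = \sum_{y} \tfrac{1}{y!} F_{E^{-1}y}(0)\, g_y = \sum_z \tfrac{1}{z!} F_z(0)\, g_{Ez} = \pair{F, \tilde E g}_0,$$
where $(\tilde E g)_z := g_{Ez}$. Now $E\Lambda'$ is a coordinate patch containing $EX$ with $EF\in\Ncal(E\Lambda')$, and the operator $\tilde E$ sends $\Phipol[E\Lambda']$ into $\Phipol[\Lambda']$: composing the spanning monomials $\prod_k z_k^{\alpha_k}$ with the Euclidean automorphism $E$ preserves polynomial degrees and hence the dimension $\le d_+$ characterization of $\Phipol$. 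Therefore, for any $g\in\Phipol[E\Lambda']$, applying \eqref{e:LT2pair} to $\tilde E g \in \Phipol[\Lambda']$ gives
$$\pair{EF,g}_0 = \pair{F,\tilde E g}_0 = \pair{\LTsym_X F, \tilde E g}_0 = \pair{E(\LTsym_X F), g}_0.$$

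Combining the two steps, $E(\LTsym_X F)\in\Vcal(EX)$ and it pairs with every $g\in\Phipol[E\Lambda']$ in the same way as $EF$ does; the uniqueness clause of Proposition~\ref{prop:LTsymexists} applied to $EF\in\Ncal_{EX}$ with coordinate patch $E\Lambda'$ then yields \eqref{e:LT3}. The main obstacle is the second step: one must verify both the transformation rule $\pair{EF,g}_0 = \pair{F,\tilde E g}_0$ and the stability of $\Phipol$ under $\tilde E$ across the change of coordinate patch from $\Lambda'$ to $E\Lambda'$, each of which rests on automorphisms of $\Lambda$ acting as Euclidean isometries of $\Zd$.
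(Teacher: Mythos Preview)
Your proof is correct and follows essentially the same route as the paper: both define the pullback $(\tilde E g)_z = g_{Ez}$ (the paper calls it $E^*$), establish the adjoint relation $\pair{EF,g}_0 = \pair{F,\tilde E g}_0$, use that $\tilde E$ carries $\Phipol[E\Lambda']$ into $\Phipol[\Lambda']$, and then invoke the uniqueness clause of Proposition~\ref{prop:LTsymexists}. Your treatment is slightly more explicit in decomposing $E=\tau\circ\Theta$ to verify $E(V(X))=(\Theta V)(EX)\in\Vcal(EX)$, whereas the paper simply asserts this membership; otherwise the arguments coincide.
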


\begin{proof}
We define $E^{*}:\Phi \rightarrow \Phi$ by $(E^{*}g)_{z}=g_{Ez}$.
By \eqref{e:E-action-on-N}, and by taking derivatives with respect to
$\phi_{x_i}$ for $x_i \in \Lambdabold_b$, for $x \in \vec\Lambdabold_b^*$ we have
\begin{equation}
    \label{e:RFxy}
    (EF)_{x,y}(\phi) = F_{E^{-1}x,E^{-1}y} (\phi_{E}).
\end{equation}
Therefore,
\begin{align}
\label{e:Epair}
    \pair{E F,g}_\phi
    &= \sum_{z\in \vec\Lambdabold^*}
    \frac{1}{z!} F_{E^{-1}z} (\phi_{E}) g_{z}
    = \sum_{z\in \vec\Lambdabold^*}
    \frac{1}{z!} F_{z} (\phi_{E}) g_{Ez}
    = \pair{F,E^{*}g}_{\phi_{E}}
    .
\end{align}
Since $F \in \Ncal_{X}$ there exists a coordinate patch $\Lambda '$
containing $X$ such that $F \in \Ncal (\Lambda ')$. Let $g \in
\Phipol[E\Lambda ']$, and note that $E^{*}$ maps test functions in
$\Phipol[E\Lambda ']$ to test functions in $\Phipol[\Lambda ']$.  By
\eqref{e:LT2} and \eqref{e:Epair},
\begin{equation}
    \pair{E \LTsym_{X}F,g}_{0} = \pair{\LTsym_{X} F,E^{*}g}_{0}
    = \pair{F,E^{*}g}_{0} = \pair{E F,g}_{0}
    = \pair{\LTsym_{EX} E F,g}_{0}.
\end{equation}
Since both $E \LTsym_{X} F$ and $\LTsym_{EX} E F$ are in $\Vcal (EX)$,
their equality follows from the uniqueness in
Proposition~\ref{prop:LTsymexists}.
\end{proof}

The subgroup of $\Ecal(\Lambda)$ consisting of automorphisms that fix
the origin is homomorphic to the group $\Sigma$, with the element
$\Theta_E \in \Sigma$ determined from such an $E\in \Ecal(\Lambda)$ by
the action of $E$ on the set $\units$ of unit vectors.  Since
$\Ecal(\Lambda)$ is the semidirect product of the subgroup of
translations and the subgroup that fixes the origin, we can use this
homomorphism to associate to each element $E \in \Ecal(\Lambda)$ a
unique element $\Theta_E \in \Sigma$.
The following proposition ensures that the polynomial $P\in \Vcal$ determined
by $\LTsym_XF$ inherits symmetry properties of $X$ and $F$.

\begin{prop}
\label{prop:E-invariance}
For $X \subset \Lambda$ and $F \in \Ncal_{X}$ such that $EF=F$
for all $E\in\Ecal(X)$, the polynomial $P\in \Vcal$ determined by $P
(X)=\LTsym_{X} F \in\Vcal(X)$ obeys $\Theta_E P =
P$ for all $E\in\Ecal(X)$.
\end{prop}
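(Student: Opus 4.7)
The plan is to deduce $\Theta_E P = P$ by combining the Euclidean covariance of $\LTsym$ (Proposition~\ref{prop:9LTdef}) with the uniqueness in Proposition~\ref{prop:LTsymexists}, after establishing that the action of $E \in \Ecal(\Lambda)$ on a monomial $P_x$ coincides with evaluating $\Theta_E P$ at $Ex$.

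First I would apply Proposition~\ref{prop:9LTdef} to the given $F \in \Ncal_X$ and $E \in \Ecal(X)$. Since $EX = X$ by the definition of $\Ecal(X)$, and $EF = F$ by hypothesis, this gives
\begin{equation}
    E(P(X)) = E(\LTsym_X F) = \LTsym_{EX}(EF) = \LTsym_X F = P(X).
\end{equation}

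Next I would verify the intertwining identity $E(P_x) = (\Theta_E P)_{Ex}$ for $P\in\Vcal$ and $x\in\Lambda$. Writing $E(y) = \Theta_E y + a$ (so that $E$ acts on $\units$ by $\Theta_E$), a short calculation shows that for any field component $\varphi_i$ and any $\alpha \in \N_0^\units$,
\begin{equation}
    (\nabla^\alpha \varphi_E)_i(x) = (\nabla^{\Theta_E \alpha} \varphi_i)(Ex),
\end{equation}
since $E(x+e) = Ex + \Theta_E e$ (verified first for $|\alpha|_1 = 1$ and then iterated). Substituting this into \eqref{e:Mm} and using the definition \eqref{e:E-action-on-N} of $E$ acting on $\Ncal$ (the fermion case is analogous), yields $E(M_m)_x = (M_{\Theta_E m})_{Ex} = (\Theta_E M_m)_{Ex}$, which extends by linearity to all $P\in \Pcal$, hence to $P\in\Vcal$. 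Summing over $x \in X = EX$,
\begin{equation}
    E(P(X)) = \sum_{x \in X} (\Theta_E P)_{Ex} = \sum_{y \in EX} (\Theta_E P)_y = (\Theta_E P)(X).
\end{equation}

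Combining the two displays gives $(\Theta_E P)(X) = P(X)$. Since $\Vcal$ is $\Sigma$-invariant (Proposition~\ref{prop:EVcal}), $\Theta_E P$ belongs to $\Vcal$, so both $P$ and $\Theta_E P$ are elements of $\Vcal$ whose image in $\Vcal(X)$ equals $\LTsym_X F$. The uniqueness clause of Proposition~\ref{prop:LTsymexists} therefore forces $\Theta_E P = P$.

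The main obstacle is the computation in step two: carefully checking that the action of a Euclidean automorphism $E$ on lattice derivatives of the field is intertwined by $\Theta_E \in \Sigma$. Once this bookkeeping is in place, the proposition follows immediately from Euclidean covariance and uniqueness. One also needs to notice that the coordinate-patch hypothesis for $\LTsym_X$ is preserved under $E$, which is automatic since $E$ maps coordinate patches to coordinate patches and $EX = X \subset E\Lambda' = $ coordinate patch containing $X$.
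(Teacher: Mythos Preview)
Your proof is correct and follows essentially the same approach as the paper's: apply Proposition~\ref{prop:9LTdef} with $EX=X$ and $EF=F$ to get $E(P(X))=P(X)$, use $E(P(X))=(\Theta_E P)(X)$, and then invoke the $\Sigma$-invariance of $\Vcal$ together with the uniqueness in Proposition~\ref{prop:LTsymexists}. The paper states the intertwining identity $EP(X)=(\Theta_E P)(X)$ without proof and routes the uniqueness argument explicitly through the pairing with $g\in\Phipol$, whereas you supply the verification of the intertwining identity and invoke uniqueness directly; these are cosmetic differences only.
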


\begin{proof}
By Proposition~\ref{prop:9LTdef} and by hypothesis, $EP(X) =
\LTsym_{EX}EF= P(X)$.
Therefore, for $g\in \Phipol$,
\begin{equation}
    \pair{F,g}_0 = \pair{P(X),g}_0 = \pair{EP(X),g}_0.
\end{equation}
Since $EP(X) = (\Theta_EP)(X)$, this gives
\begin{equation}
    \pair{P(X),g}_0 = \pair{(\Theta_EP)(X),g}_0,
\end{equation}
and since $\Theta_E P \in \Vcal$ by Proposition~\ref{prop:EVcal},
the uniqueness in Proposition~\ref{prop:LTsymexists} implies that
$\Theta_EP=P$, as required.
\end{proof}

The next two propositions concern norm estimates, using the $T_\phi$
semi-norm defined in \cite{BS-rg-norm}.  The $T_\phi$ semi-norm is
itself defined in terms of a norm on test functions, and next we
define the particular norm on test functions that we will use here.

The norm depends on a vector $\h = (\h_1,\ldots,
\h_{p_{\Lambdabold}})$ of positive real numbers, one for each field
species and component, where we assume that $\h_i$ depends only on the
field species of the index $k$.  Given $z=(z_1,\ldots, z_{p (z)}) \in
\Lambdabold^*$, we define $\h^{-z}= \prod_{i=1}^{p (z)}
\h_{k(z_i)}^{-1}$, where $k(z_i)$ denotes the copy of $\Lambda$
inhabited by $z_i\in \Lambdabold$.  Given $p_\Phi \ge 0$, the norm on
test functions is defined by
\begin{equation}
\label{e:Phignorm}
    \|g\|_{\Phi(\h)}  =
    \sup_{z\in \vec\Lambdabold^*}
    \sup_{|\alpha|_\infty \leq p_{\Phi}} \h^{-z}
    |\nabla_R^\alpha g_{z}| ,
\end{equation}
where $\nabla_R^\alpha = R^{|\alpha|_1}\nabla^\alpha$.  In terms of
this norm, a semi-norm on $\Ncal$ is defined by
\begin{equation}
    \|F\|_{T_\phi} = \sup_{g \in B(\Phi)}|\pair{F,g}_\phi|,
\end{equation}
where $B(\Phi)$ denotes the unit ball in $\Phi=\Phi(\h)$.  This
$T_\phi$ semi-norm depends on the boson field $\phi$, via the pairing
\eqref{e:pairdef}.

For the next two propositions, which provide essential norm estimates
on $\LTsym$, we restrict attention to the case where the torus
$\Lambda$ has period $L^N$ for integers $L,N>1$.  In practice, both
$L$ and $N$ are large.  We fix $j<N$ and take $R=L^j$.  The proofs of
the propositions, which make use of the results in
Section~\ref{sec:Taylor}, are deferred to
Section~\ref{sec:LTnormestimates}.  A $j$-\emph{polymer} is defined to
be a union of blocks of side $R=L^j$ in a paving of $\Lambda$.  Given
a $j$-polymer $X$, we define $X_+$ by replacing each block $B$ in $X$
by a larger cube $B_+$ centred on $B$ and with side $2L^j$ if $L^j$ is
even, or $2L^j-1$ if $L^j$ is odd (the parity consideration permits
centring).

\begin{prop}
\label{prop:Locbd} Let $L>1$, $j<N$, and let $X$ be a $j$-polymer with
$X_+ \subset U$ for a coordinate patch $U\subset \Lambda$.  For $F \in
\Ncal (U)$, there is a constant $\bar{C}'$, which depends only on
$L^{-j} {\rm diam}(U)$, such that
\begin{equation}
\label{e:LTs}
    \| \LTsym_{X} F\|_{T_0} \leq \bar{C}' \| F\|_{T_0}.
\end{equation}
\end{prop}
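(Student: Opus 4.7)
The plan is to exploit the duality between $\Vcal(X)$ and $S\Phipol$ established in Proposition~\ref{prop:LTsymexists}: both spaces are finite-dimensional with common dimension depending only on fixed parameters ($d_+$, $p_\Lambdabold$, $p_\Phi$), independent of $X$ and $U$. Fix once and for all a basis $\{V_\alpha\}$ of $\Vcal$, inducing a basis $\{V_\alpha(X)\}$ of $\Vcal(X)$, and let $\{g_\alpha^X\}\subset S\Phipol\subset\Phipol$ be the dual basis satisfying $\pair{V_\alpha(X),g_\beta^X}_0 = \delta_{\alpha\beta}$. Applying \eqref{e:LT2} to each $g_\beta^X$ yields the explicit representation
\[
    \LTsym_X F = \sum_\alpha \pair{F,g_\alpha^X}_0 \, V_\alpha(X),
\]
whence
\[
    \|\LTsym_X F\|_{T_0} \leq \|F\|_{T_0} \sum_\alpha \|g_\alpha^X\|_\Phi \, \|V_\alpha(X)\|_{T_0}.
\]
Since the sum is over a fixed finite index set, the problem reduces to bounding each product $\|g_\alpha^X\|_\Phi \|V_\alpha(X)\|_{T_0}$ by a constant depending only on $L^{-j}\diam{U}$.

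The individual factors $\|V_\alpha(X)\|_{T_0}$ and $\|g_\alpha^X\|_\Phi$ may separately carry factors of $R = L^j$, but these cancel in their product by the duality relation. The cleanest way to exhibit this cancellation is to rescale coordinates by $R^{-1}$: setting $\tilde z = z/R$ maps $X$ to a union of unit-sized blocks $\tilde X$ inside a region $\tilde U\subset\Rbold^d$ of diameter $L^{-j}\diam{U}$, and the $R^{|\alpha|_1}$ factor in $\nabla_R^\alpha$ together with the $\h$-weights converts all lattice sums and derivatives into their rescaled counterparts, free of explicit $R$-dependence. The norm $\|V_\alpha(X)\|_{T_0}$ is then handled directly: expand $V_\alpha(X)=\sum_{x\in X}V_\alpha|_x$, pair with $g\in B(\Phi)$, and estimate each pointwise monomial pairing using $|\nabla^\alpha g_z|\leq R^{-|\alpha|_1}\h^z\|g\|_\Phi$; the sum over $|X| \leq C(L^{-j}\diam{U})^d R^d$ points combined with the dimension budget $[V_\alpha]\leq d_+$ then produces the expected bound in the rescaled variables.

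The crux of the argument, and the main obstacle, is to control the dual basis $\{g_\alpha^X\}$. Writing $g_\alpha^X = \sum_\beta (G(X)^{-1})_{\alpha\beta}\, q_\beta$ in terms of a fixed basis $\{q_\beta\}$ of $S\Phipol$, with Gram matrix $G_{\alpha\beta}(X) = \pair{V_\alpha(X),q_\beta}_0$, reduces the problem to bounding $\|G(X)^{-1}\|$ uniformly in $X$. After rescaling, the entries of $G(X)$ become polynomial functions of the rescaled block positions of $\tilde X$, lying in a region of diameter $L^{-j}\diam{U}$, and $\|q_\beta\|_\Phi$ is controlled by a polynomial in $L^{-j}\diam{U}$ of degree at most $d_+$. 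Non-degeneracy of $G(X)$ is immediate from Proposition~\ref{prop:LTsymexists}, but the essential point is the quantitative uniformity: since the number of admissible rescaled polymer configurations inside $\tilde U$ is finite and bounded by a function of $L^{-j}\diam{U}$ alone, a uniform maximum of $\|G(X)^{-1}\|$ exists with the required dependence. The lattice Taylor estimates developed in Section~\ref{sec:Taylor} are expected to play a key role in making this quantitative and in absorbing combinatorial factors; combined with the bounds on $\|V_\alpha(X)\|_{T_0}$ and $\|q_\beta\|_\Phi$ from the previous step, this completes the proof.
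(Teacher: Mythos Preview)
Your overall architecture matches the paper's: represent $\LTsym_X F$ explicitly in a basis of $\Vcal(X)$ with coefficients given by pairings of $F$ against dual test functions, then bound $\|\LTsym_X F\|_{T_0}$ by $\|F\|_{T_0}$ times a finite sum of products of norms. The paper does exactly this, with the specific bases $\{\hat P_m(X)\}_{m\in\mathfrak{v}_+}$ and $\{f_{m}^{(a)}\}_{m\in\mathfrak{v}_+}$ and change-of-basis matrix $B_{m',m}=\pair{\hat P_{m'}(X),f_m^{(a)}}_0$.

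The gap is in your control of the inverse Gram matrix. You assert that after rescaling ``the entries of $G(X)$ become polynomial functions of the rescaled block positions,'' and then conclude a uniform bound on $\|G(X)^{-1}\|$ by finiteness of rescaled polymer shapes. But the entries $G_{\alpha\beta}(X)=\sum_{x\in X}\pair{V_\alpha|_x,q_\beta}_0$ are sums over all $|X|$ lattice points of $X$, not over block centres. After rescaling by $R^{-1}$, each unit block still contains $R^d$ rescaled lattice points, so the normalized matrix $|X|^{-1}G(X)$ is a Riemann sum depending on the lattice spacing $R^{-1}=L^{-j}$ as well as on the block configuration. The family of matrices you must invert is therefore not finite; it is parametrised by $j$, and you have given no argument (such as convergence to an invertible continuum limit plus uniformity for all $j\ge 0$) that rules out degeneration of the inverse along this parameter.

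The paper avoids this by exploiting structure invisible in a generic basis: with the choices $\hat P_m$ and $f_m^{(a)}$, Lemma~\ref{lem:dualbasis} shows that $A=|X|^{-1}B$ is upper triangular in the partial order by monomial dimension, with $1$'s on the diagonal. Hence $A-I$ is nilpotent and $A^{-1}=\sum_{k<|\mathfrak{v}_+|}(-1)^k(A-I)^k$ is a finite sum of products of off-diagonal entries. Each such entry is bounded by $\|\hat P_{m',0}\|_{T_0}\|f_m^{(a)}\|_{\Phi(U)}$ (here the localised norm of \eqref{e:FXbd} is used, not the full $\Phi$-norm as in your sketch), and Lemmas~\ref{lem:Pmbdpf} and~\ref{lem:TayX} give these the explicit form $R^{-|\alpha(m')|_1}\h^{m'}$ and $\bar C\,\h^{-m}R^{|\alpha(m)|_1}$. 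In the products appearing in the Neumann series the powers of $R$ and $\h$ telescope, leaving only constants depending on $L^{-j}\diam(U)$. The triangularity is what replaces your compactness argument and makes the $R$-dependence tractable.
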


The next result, which is crucial, involves the $T_\phi$ semi-norm
defined in terms of $\Phi(\h)$, as well as the $T_\phi'$ semi-norm
defined in terms of the $\Phi'(\h')$ norm given by replacing $R=L^j$
and $\h$ of \eqref{e:Phignorm} by $R'=L^{j+1}$ and $\h'$.  In
addition, we assume that $\h'$ and $\h$ are chosen such that
$\h_i'/\h_i \le cL^{-[\phi_i]}$ for each component $i$, where $c$ is a
universal constant.  Let
\begin{equation}
\label{e:dplusprimedef}
    d_+' =
    \min \{ [M_m]  : m \not\in \mathfrak{v}_+ \},
\end{equation}
where $\mathfrak{v}_+$ was defined in \refeq{nufrak+};
thus $d_+'$ denotes the smallest dimension of a monomial not in
the range of $\LT$.
Let $[\varphi_{\rm min}] = \min\{[\varphi_i]: i =
1,\ldots,p_{\Lambdabold}\}$.
Given a
positive integer $A$, we define
\begin{equation}
\label{e:cgam}
    \cgam
    =
    L^{-d_{+}'} +  L^{-(A+1)[\varphi_{\rm min}]}
    .
\end{equation}
In anticipation of a hypothesis of Lemma~\ref{lem:phij}, for the next
proposition we impose the restriction that $p_\Phi \ge d_+'
-[\varphi_{\rm min}]$.  Its choice of large $L$ depends only on $d_+$.

\begin{prop}\label{prop:LTKbound}
Let $j<N$, let $A < p_\Ncal$ be a positive integer, let $L$ be
sufficiently large, let $X$ be a $j$-polymer with $X_+$ contained in a
coordinate patch, and let $Y \subset X$ be a nonempty $j$-polymer.
For $i=1,2$, let $F_{i} \in \Ncal(X)$.  Then
\begin{align}
    \|
     F_1(1-\LTsym_Y)F_2
    \|_{T_{\phi}'}
&\le
    \cgam \, \bar{C}
    \left(1 + \|\phi\|_{\Phi'}\right)^{A'} 
    \sup_{0\le t \le 1}
    \big(
    \|F_1F_2\|_{T_{t\phi}}
    +
    \|F_1\|_{T_{t\phi}}\|F_2\|_{T_{0}}\big),
\end{align}
where $\cgam$ is given by \eqref{e:cgam}, $A'=A+d_+/[\varphi_{\rm
min}]+1$, and $\bar{C}$ depends only on $L^{-j} {\rm diam}(X)$.
\end{prop}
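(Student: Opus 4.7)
The natural approach is to estimate $|\pair{F_1(1-\LTsym_Y)F_2,g}_\phi|$ for $g$ in the unit ball of $\Phi'(\h')$. The small factor $\cgam$ will arise by combining two Taylor expansions: one in the field $\phi$, to reduce to pairings at $\phi=0$ where the defining property of $\LTsym_Y$ applies, and one in the spatial arguments of $g$, to exhibit the polynomial part of $g$ that is annihilated by $(1-\LTsym_Y)$.

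For the first expansion, I would Taylor-expand each coefficient $F_z(\phi)$ of both $F_1 F_2$ and $F_1\LTsym_Y F_2$ in $\phi$ about $\phi=0$, to order $A$, in integral-remainder form. Such a Taylor expansion is equivalent to appending factors of $\phi$ to the test function $g$; the $(A+1)$-st order remainder therefore carries $\|\phi\|^{A+1}$, which, after rescaling from $\h$-units to $\h'$-units using $\h'_i/\h_i \le cL^{-[\varphi_i]}$, becomes a factor $L^{-(A+1)[\varphi_{\rm min}]}(1+\|\phi\|_{\Phi'})^{A+1}$ and is controlled directly by $\sup_t\|F_1F_2\|_{T_{t\phi}}$ (respectively by $\sup_t\|F_1\|_{T_{t\phi}}\|F_2\|_{T_0}$ for the $\LTsym_Y$ piece, invoking Proposition~\ref{prop:Locbd}).

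For the second expansion, applied to each $\phi=0$ contribution left over from the first step, I would use the lattice Taylor expansion of Section~\ref{sec:Taylor} (specifically Lemma~\ref{lem:phij}) to decompose $g = g_{\rm pol} + g_{\rm rem}$, where $g_{\rm pol}\in\Phipol$ is built from monomials of dimension strictly below $d_+'$ and $\|g_{\rm rem}\|_\Phi \le C L^{-d_+'}\|g\|_{\Phi'}$; the hypothesis $p_\Phi \ge d_+' - [\varphi_{\rm min}]$ supplies the smoothness required. The $g_{\rm pol}$ contribution must be shown to vanish by the defining identity $\pair{(1-\LTsym_Y)F_2,h}_0=0$ for $h\in\Phipol$. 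To account for the extra factor $F_1$, I would use the product structure of the pairing on $\Ncal$: $\pair{F_1 G,h}_0 = \pair{G,h^{(F_1)}}_0$, where $h^{(F_1)}$ denotes the contraction of $h$ against the $\phi=0$ coefficients of $F_1$. A dimension count shows that $h^{(F_1)}$ still lies in $\Phipol$, so the pairing vanishes. The $g_{\rm rem}$ contribution is then bounded by $\|F_1(1-\LTsym_Y)F_2\|_{T_0}\cdot\|g_{\rm rem}\|_\Phi$; expanding $(1-\LTsym_Y)F_2 = F_2 - \LTsym_Y F_2$, applying the triangle inequality and sub-multiplicativity of the $T_0$ semi-norm, and invoking Proposition~\ref{prop:Locbd} produces $\|F_1F_2\|_{T_0}+\|F_1\|_{T_0}\|F_2\|_{T_0}$, matching the structure on the right-hand side. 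Combining the $L^{-d_+'}$ contribution with the $L^{-(A+1)[\varphi_{\rm min}]}$ contribution from the first step yields the prefactor $\cgam$.

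The main obstacle, I expect, is the bookkeeping of powers of $\phi$ that accounts for the exponent $A' = A + d_+/[\varphi_{\rm min}] + 1$: since $\LTsym_Y F_2 \in \Vcal(Y)$ is a polynomial in the fields of total dimension at most $d_+$, hence of field-degree at most $d_+/[\varphi_{\rm min}]$, the Taylor-in-$\phi$ estimate applied to $F_1\LTsym_Y F_2$ picks up this polynomial degree as additional factors of $(1+\|\phi\|_{\Phi'})$ beyond the $A+1$ produced by the first expansion. Verifying that $h^{(F_1)}$ really does stay in $\Phipol$ under the pairing decomposition, and that all the contributions assemble cleanly into the stated bound, constitutes the substantive technical content of the argument.
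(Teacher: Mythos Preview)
Your proposal is correct and follows essentially the same route as the paper: the paper first invokes \cite[Proposition~\ref{norm-prop:Tphi-bound}]{BS-rg-norm} (your Taylor-in-$\phi$ step) to reduce to $\|F\|_{T_0'}$ plus an $L^{-(A+1)[\varphi_{\rm min}]}$ remainder, then uses the ideal property \eqref{e:Rcal-ideal} (your observation that $h^{(F_1)}\in\Phipol(X)$) together with Lemma~\ref{lem:phij} to extract the $L^{-d_+'}$ factor, and finally splits $\|F\|_{T_{t\phi}}$ via the triangle inequality, Proposition~\ref{prop:Locbd}, and the polynomial-degree bound on $\LTsym_Y F_2$ to produce the exponent $A'$. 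The only cosmetic difference is that the paper applies the $\phi$-expansion to $F$ as a whole before splitting into $F_1F_2$ and $F_1\LTsym_Y F_2$, whereas you split first; the outcome is the same.
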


For the special case with $F_1=1$, $F_2=F$, and $\phi = 0$,
Proposition~\ref{prop:LTKbound} asserts that
\begin{align}
\lbeq{1-LTex}
    \|
     F-\LTsym_X F
    \|_{T_{0}'}
&\le
    \cgam  \bar{C}
    \|F\|_{T_{0}}
    .
\end{align}
For the case of $d\ge 4$, $d_+=d$, $[\varphi_{\rm min}]=\frac{d-2}{2}$, and with
$A$ (and so $p_\Ncal$) chosen
sufficiently large that $(A+1)\frac{d-2}{2} \ge d +1$, we have
$d_+'=d_++1$ and
$\gamma=O(L^{-d-1})$.  This shows that, when measured
in the $T_0'$ semi-norm, $F-\LTsym_XF$ is substantially smaller than
$F$ measured in the $T_0$ semi-norm.

\subsection{An example}
\label{sec:locex}

The following example is not needed elsewhere in this paper, but it
serves to
illustrate the evaluation of $\LTsym$.

\begin{example}
\label{ex:LTmunu} Consider the case where there is a single complex
boson field $\phi$, in dimension $d=4$, with $[\varphi]=1$, and with
$d_+=d=4$.  The list of relevant and marginal monomials is as in
\eqref{e:locmonrel}--\eqref{e:locmonmar}, but now each factor of
$\varphi$ in those lists can be replaced by either $\phi$ or its
conjugate $\bar\phi$.  To define $\Vcal$, for each monomial $M$ we
choose $P(M)$ as in \eqref{e:Pm-def}, except monomials which contain
$\nabla^e\nabla^e$ for which we use $\nabla^{-e}\nabla^e$ as in
Example~\ref{ex:Pm-nonunique} instead.  Let $X\subset \Lambda$ be
contained in a coordinate patch and let $a,x \in X$.
\\
\emph{(i)}
Simple examples are given by
\begin{align}
    \LTsym_X
    |\phi_x|^6
    & = 0,
    \quad\quad
    \LTsym_{\{a\}}
    |\phi_x|^4
    = |\phi_{a}|^4,
\end{align}
which hold since in both cases the pairing requirement
of Definition~\ref{def:LTsym} is obeyed by the right-hand sides.
\\
\emph{(ii)}
Let $\tau_x = \phi_x\bar\phi_x$, let $q:\Lambda\to \C$, let $X$
be such that the range of $q$ plus the diameter of $X$ plus $2\bar
p_\Phi$ is strictly less than the period of the torus, and let
\begin{align}
    F
    &
    =
    \sum_{x \in X, y \in \Lambda} q (x-y) \tau_{y}
    .
\end{align}
The assumption on the range of $q$ ensures that the coordinate patch
condition in the definition of $\LTsym_XF$ is satisfied. We define
\begin{align}
    &
    q^{(1)}
    =
    \sum_{x \in \Lambda}q (x),
    \quad \quad
    q^{(**)}
    =
    \sum_{x \in \Lambda} q (x) x_{1}^{2},
    \label{e:bdef}
\end{align}
and assume that
\begin{align}
\label{e:qprop1}
    &
    \sum_{x \in \Lambda} q (x) x_{i} = 0,
    \quad\quad
    \sum_{x \in \Lambda} q (x) x_{i}x_{j}
    = q^{(**)} \delta_{i,j}
    \quad\quad\quad
    i,j \in \{1,2,\dotsc ,d \}.
\end{align}
We claim that
\begin{align}
\label{e:LTF3}
    \LTsym_{X} F
    &=
    \sum_{x\in X}
    \big(
    q^{(1)}\tau_{x}
    +
    q^{(**)} \sigma_{x}
    \big),
\end{align}
where, with $\Delta = - \sum_{i=1}^d \nabla^{-e_i}\nabla^{e_i}$,
\begin{equation}
    \label{e:tau-laplaciandefbis}
    \sigma_x =
    \frac{1}{2}
    \big(
    \phi_x \,\Delta \bar{\phi}_x  +
    \sum_{e\in \units}
    \nabla^{e}\phi_x \,\nabla^{e}\bar{\phi}_x  +
    \Delta\phi_x \,\bar{\phi}_x
    \big).
\end{equation}

To verify \eqref{e:LTF3}, we define
\begin{align}
    A
    &
    =
    \sum_{y \in \Lambda} q (a-y) \tau_{y}.
\end{align}
By Proposition~\ref{prop:LTsum}, it suffices to show that
\begin{align}
    \label{e:LTA3b}
    \LTsym_{\{a\}} A
    &=
    q^{(1)}\tau_{a}
    +
    q^{(**)} \sigma_{a}.
\end{align}
For this, it suffices to show that $A$ and $q^{(1)}\tau_{a} + q^{(**)}
\sigma_{a}$ have the same zero-field pairing with test functions $g
\in \Phipol$.  By definition, $\pair{A,g}_0 = \sum_{y\in \Lambda}
q(a-y) g_{y,y}$.  Since the polynomial test function $g =
g_{y_{1},y_{2}}$ is in $\Phipol$, it is a quadratic polynomial in
$y_{1},y_{2}$ and we can write the coefficients of this
polynomial in terms of lattice derivatives of $g$ at the point
$(a,a)$. For example the quadratic terms in $g$ are $(1/2)\sum_{i,j =
1}^d (y_i-a_i)(y_j-a_j) \nabla^{e_i}_1\nabla^{e_j}_2g_{a, a}$.  (The
construction of lattice Taylor polynomials is described below in
\eqref{e:Tay-def}.)

The constant term in $g$ is the zeroth derivative $g_{a,a}$.  The
linear terms vanish in the pairing due to \eqref{e:qprop1}.  For the
quadratic terms with derivatives on both variables of $g$, the only
non-vanishing contribution to the pairing arises from
$\frac 12 \sum_{i=1}^d(y_i-a_i)^2 \nabla^{e_i}_1\nabla^{e_i}_2g_{a,
a}$, due to \eqref{e:qprop1}, where the subscripts on the derivatives
indicate on which argument they act.  For the quadratic terms with
both derivatives on a single variable of $g$, by \eqref{e:qprop1} we
may assume that both derivatives are in the same direction, and for
those, we can replace the binomial coefficient $\binom{y_i-a_i}{2}$ by
$\frac 12 (y_i-a_i)^2$ due to the first assumption in
\eqref{e:qprop1}, to see that the relevant terms for the pairing are
\begin{equation}
    \frac 12 \sum_{i=1}^d (y_i-a_i)^2 \nabla^{e_i}_1\nabla^{e_i}_1 g_{a, a}
    +
    \frac 12 \sum_{i=1}^d ( y_i- a_i)^2
    \nabla^{e_i}_2\nabla^{ e_i}_2 g_{a, a}.
\end{equation}
Since $g$ is a  polynomial of total degree at most 2, we can use
\eqref{e:Vcal-relation} to replace derivatives $\nabla^e$ by $-\nabla^{-e}$
in the above expressions involving two derivatives.
Thus we obtain
\begin{equation}
    \pair{A,g}_0 =
    q^{(1)} g_{a, a} +
    q^{(**)} \frac 12
    \left(
    \Delta_1 g_{a, a}
    +
    \sum_{e \in \units} \nabla^e_1 \nabla^{e}_2 g_{a, a}
    + \Delta_2 g_{a, a}
    \right).
\end{equation}
By inspection, the right-hand side of \eqref{e:LTA3b}
has the same pairing with $g$ as $A$, so \eqref{e:LTA3b} is verified.
\\
\emph{(iii)}
Let
\begin{align}
    F '
    =
    \sum_{x \in X, y \in \Lambda} q (x-y)
    (
    \tau_{xy} + \tau_{yx}
    ).
\end{align}
By a similar analysis to that used in \emph{(ii)},
\begin{align}
\label{e:LTF4}
    \LTsym_{X} F '
    =
    \sum_{x\in X}
    \big(
    2 q^{(1)}\tau_{x}
    +
    q^{(**)} \frac{1}{2} \left(
    \phi_{x} \Delta \bar{\phi}_{x} +
    (\Delta \phi)_{x} \bar{\phi}_{x} \right)
    \big).
\end{align}
\qed
\end{example}

\subsection{Supersymmetry and \texorpdfstring{$\LTsym$}{loc}}
\label{sec:ssloc}

For our application to self-avoiding walk in \cite{BBS-saw4-log,BBS-saw4},
we will use $\LTsym$ in the context of a supersymmetric field
theory involving a complex boson field $\phi$ with conjugate $\bar\phi$,
and a pair of conjugate fermion fields $\psi,\bar\psi$, all of dimension
$\frac{d-2}{2}$.
We now show that if $F \in \Ncal$ is supersymmetric then so is $\LT_XF$.

The supersymmetry generator
${Q} = d + \ci$, which is discussed in \cite[Section~6]{BIS09}, has the
following properties: (i) $Q$ is an antiderivation that acts on
$\Ncal$, (ii) $Q^{2}$ is the generator of the gauge flow characterised by
$q \mapsto e^{-2\pi i t}q$ for $q = \phi_{x} ,\psi_{x}$
and
$\bar q \mapsto e^{+2\pi i t}\bar q$ for $\bar q = \bar\phi_{x} ,
\psib_{x}$, for all
$x \in \Lambda$.  An element $F \in \Ncal$ is said to be
\emph{gauge invariant} if it is invariant under this flow and
\emph{supersymmetric} if $QF=0$.  By property~(ii), supersymmetric elements
are gauge invariant.  Let $\hat{Q} = (2 \pi i)^{-1/2} Q$. Then
$\hat{Q}$ is an antiderivation satisfying:
\begin{align}
\label{e:Qaction}
    &
    \hat{Q}\phi = \psi,
    \quad \quad
    \hat{Q}\psi = - \phi,
    \quad\quad
    \hat{Q}\bar\phi = \psib,
    \quad \quad
    \hat{Q}\psib = \bar\phi
    .
\end{align}
The gauge flow clearly maps $\Vcal$ to itself.  Also, since the boson
and fermion fields have the same dimension, $Q$ also maps $\Vcal$ to
itself.  The following observation is a general one, but it has the
specific consequences that if $F$ is gauge invariant then so is
$\LTsym_{X}F$, and if $F$ is supersymmetric then
$Q\LTsym_{X}F=\LTsym_XQF=0$ so $\LTsym_{X}F$ is supersymmetric.  This
provides a simplifying feature in the analysis applied in
\cite{BS-rg-step}.

\begin{prop}
\label{prop:Qcom}
The map $Q: \Ncal \to \Ncal$ commutes with $\LTsym_{X}$.
\end{prop}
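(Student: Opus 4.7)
The plan is to apply the uniqueness part of Proposition~\ref{prop:LTsymexists}, after establishing a dual relation for $Q$ on test functions.

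First I would observe that both $\LTsym_X QF$ and $Q\LTsym_X F$ lie in $\Vcal(X)$. The text already notes that $Q$ preserves $\Vcal$ because boson and fermion fields have the same dimension, so $Q$ preserves the dimension cutoff defining $\Vcal$. Moreover $Q$ is local (a sum of antiderivations at single sites), so it preserves $\Ncal(X)$ and hence $\Ncal_X$; thus $QF \in \Ncal_X$ and $\LTsym_X QF$ is well-defined. Writing $V \in \Vcal$ for the polynomial with $\LTsym_X F = V(X)$, we have $Q \LTsym_X F = (QV)(X) \in \Vcal(X)$.

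By the uniqueness asserted in Proposition~\ref{prop:LTsymexists}, it then suffices to prove $\pair{Q\LTsym_X F, g}_0 = \pair{QF, g}_0$ for every $g \in \Phipol$. The key step is to establish an integration-by-parts identity: there is a linear operator $Q^*$ on test functions such that $\pair{QH, g}_\phi = \pair{H, Q^*g}_\phi$ for all $H \in \Ncal$ and all $g \in \Phi$. Such a $Q^*$ is produced by dualising \eqref{e:Qaction}: since $\hat Q$ acts as an antiderivation swapping $\phi \leftrightarrow \psi$ and $\bar\phi \leftrightarrow \bar\psi$ (with signs), its transpose $Q^*$ on test functions swaps the corresponding bosonic and fermionic indices in each argument of $g$, with the appropriate signs. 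Crucially, because $\phi,\psi,\bar\phi,\psib$ all have the same scaling dimension $\frac{d-2}{2}$ and $Q^*$ does not alter the polynomial degree of a monomial in $\Phipol$, the operator $Q^*$ preserves $\Phipol$. With both ingredients in place, the proof concludes by
\[ \pair{Q\LTsym_X F, g}_0 = \pair{\LTsym_X F, Q^* g}_0 = \pair{F, Q^* g}_0 = \pair{QF, g}_0, \]
where the middle equality uses the defining property \eqref{e:LT2} of $\LTsym_X F$ applied to $Q^* g \in \Phipol$; uniqueness in Proposition~\ref{prop:LTsymexists} then yields $Q\LTsym_X F = \LTsym_X QF$.

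The main obstacle is the integration-by-parts identity for $Q^*$: one has to unpack the pairing \eqref{e:pairdef}, in which the fermionic indices of $g$ extract coefficients of $\psi^y$ while the bosonic indices extract Taylor coefficients in $\phi$, and verify that the antiderivation signs in $\hat Q$ transpose correctly to a sign-respecting swap on $g$ without producing boundary terms at $\phi=0$. Once these sign conventions are reconciled, the dimension-preservation step is immediate, and the conclusion follows.
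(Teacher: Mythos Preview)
Your proof is correct and follows essentially the same route as the paper: both arguments introduce a dual operator $Q^*$ with $\pair{QH,g}_0=\pair{H,Q^*g}_0$ that preserves $\Phipol$, then apply \eqref{e:LT2} and the uniqueness in Proposition~\ref{prop:LTsymexists} (using that $Q$ maps $\Vcal(X)$ to itself). Your additional remarks on why $Q^*$ preserves $\Phipol$ and on the sign bookkeeping are more explicit than the paper's one-line assertion, but the structure is the same.
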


\begin{proof}
Let $F \in \Ncal$ and $g \in \Phipol$.  There is a map $Q^*:\Phipol
\to \Phipol$, which can be explicitly computed using
\eqref{e:Qaction}, such that $\pair{QF,g}_0=\pair{F,Q^*g}_0$. It then
follows from \eqref{e:LT2} that
\begin{equation}
    \pair{Q\LTsym_{X} F,g}_0
=
    \pair{\LTsym_{X} F,Q^{*}g}_0
=
    \pair{F,Q^{*}g}_0
=
    \pair{QF,g}_0
=
    \pair{\LTsym_{X}QF,g}_0
.
\end{equation}
Since $Q: \Vcal(X) \to \Vcal (X)$
by \eqref{e:Qaction}, it then
follows from the uniqueness in
Definition~\ref{def:LTsym} that $Q\LTsym_{X} F =
\LTsym_{X}QF$.
\end{proof}

The proof of Proposition~\ref{prop:Qcom} displays the general
principle that a linear map on $\Ncal$ commutes with $\LTsym_{X}$ if
its adjoint maps $\Pi$ to itself.  In particular, the map on $\Ncal$
induced by interchanging $\phi$ with its conjugate $\bar{\phi}$
commutes with $\LTsym_{X}$ for all $X$.

\subsection{Observables and the operator \texorpdfstring{$\LT$}{Loc}}
\label{sec:obsloc}

We now generalise the operator $\LTsym$ in two ways: to modify the set
onto which it localises, and to incorporate the effect of observable
fields.  The first of these is accomplished by the following
definition.

\begin{defn}
\label{def:LTsymXY} For $Y \subset X \subset \Lambda$ and $F \in
\Ncal_{X}$, we define the linear operator $\LTsym_{X,Y}:\Ncal
\to \Vcal(Y)$ by
\begin{align}
    \label{e:LTsymXYdef}
    &
    \LTsym_{X,Y}F = P_{X} (Y)
    \quad
    \text{with $P_{X}$ determined by $P_{X} (X) = \LTsym_{X}F$}
    .
\end{align}
\end{defn}

In other words, $\LTsym_{X,Y}F$ evaluates the polynomial $\LTsym_XF$
on the set $Y$ rather than on $X$.  It is an immediate consequence of
the definition that $\LTsym_X = \LTsym_{X,X}$, and that if
$\{X_1,\ldots, X_m\}$ is a partition of $X$ then
\begin{equation}
\label{e:LTsymXXi}
    \LTsym_{X}= \sum_{i=1}^m \LTsym_{X,X_{i}}.
\end{equation}
The following norm estimate for $\LTsym_{X,Y}$ is proved
in Section~\ref{sec:LTnormestimates}.

\begin{prop}
\label{prop:LTsymXYbd}
Suppose $\Lambda$ has period $L^N$ with $L,N>1$.  Let $j<N$, and let
$Y\subset X$ be $j$-polymers with $X_+ \subset U$ for a coordinate
patch $U\subset \Lambda$.  For $F \in \Ncal (U)$, there is a constant
$\bar{C}'$, which depends only on $L^{-j} {\rm diam}(U)$, such that
for $F \in \Ncal (U)$,
\begin{equation}
    \|\LTsym_{X,Y}F\|_{T_0} \le \bar{C}' \|F\|_{T_0}.
\end{equation}
\end{prop}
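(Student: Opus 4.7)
The plan is to reduce the statement to Proposition~\ref{prop:Locbd} by showing that for any $P \in \Vcal$, the two evaluations $P(X)$ and $P(Y)$ have comparable $T_0$ semi-norms, uniformly in $X$ and $Y$. By Definition~\ref{def:LTsymXY}, writing $P_X$ for the polynomial in $\Vcal$ with $P_X(X) = \LTsym_X F$, we have $\LTsym_{X,Y}F = P_X(Y)$, and Proposition~\ref{prop:Locbd} supplies $\|P_X(X)\|_{T_0} \le \bar{C}' \|F\|_{T_0}$. It therefore suffices to prove that $\|P(Y)\|_{T_0} \le C \|P(X)\|_{T_0}$ for all $P \in \Vcal$, with $C$ depending only on $L^{-j}\mathrm{diam}(U)$.

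To obtain this inequality I would exploit the finite-dimensionality of $\Vcal$. Fix a basis $\{V_k\}_{k=1}^D$ of $\Vcal$ and expand $P = \sum_k c_k V_k$. By Proposition~\ref{prop:LTsymexists}, the spaces $\Vcal(X)$ and $S\Phipol[U]$ are dual under the pairing $\pair{\cdot,\cdot}_0$, so there is a dual basis $\{g_k^X\} \subset S\Phipol[U]$ of the basis $\{V_k(X)\}$ of $\Vcal(X)$, with $\pair{V_m(X), g_k^X}_0 = \delta_{km}$. Then $c_k = \pair{P(X), g_k^X}_0$, hence $|c_k| \le \|P(X)\|_{T_0}\,\|g_k^X\|_{\Phi}$, and
$$
    \|P(Y)\|_{T_0} \le \sum_k |c_k|\, \|V_k(Y)\|_{T_0} \le \Big(\sum_k \|g_k^X\|_\Phi\, \|V_k(Y)\|_{T_0}\Big)\, \|P(X)\|_{T_0}.
$$

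The remaining task is to bound the parenthesised constant by a quantity depending only on $L^{-j}\mathrm{diam}(U)$. I would argue this via rescaling: the map $x \mapsto L^{-j}x$ converts scale-$j$ polymers in $U$ to unit-scale polymers in a rescaled region whose diameter equals $L^{-j}\mathrm{diam}(U)$; modulo lattice translation the set of admissible shapes $Y \subset X$ that fit in such a region is finite, and both $\|g_k^X\|_\Phi$ and $\|V_k(Y)\|_{T_0}$ are invariant under this translation. Taking the maximum over the resulting finite collection yields a constant $C$ with the required dependence.

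The main obstacle is securing a uniform upper bound on the dual-basis norms $\|g_k^X\|_\Phi$: the $g_k^X$ arise implicitly by inverting the Gram matrix of the pairing between $\Vcal(X)$ and $S\Phipol[U]$ in chosen bases, and one must check that this matrix stays uniformly non-degenerate as the shape of $X$ varies over the admissible class. This reduces to showing that its determinant is a continuous, strictly positive function on the (finite, after rescaling) shape space, which then gives a uniform lower bound by compactness. Once this is in place, combining with Proposition~\ref{prop:Locbd} yields the stated estimate with $\bar{C}'$ depending only on $L^{-j}\mathrm{diam}(U)$.
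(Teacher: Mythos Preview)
Your reduction to Proposition~\ref{prop:Locbd} together with a comparison $\|P(Y)\|_{T_0}\le C\|P(X)\|_{T_0}$ is a reasonable strategy, and the duality step extracting the coefficients $c_k$ is fine. The gap is in the final compactness argument: you correctly observe that, modulo translation, there are only finitely many rescaled shapes $Y\subset X$ once $L^{-j}\mathrm{diam}(U)$ is fixed, but you have not shown that the quantity $\sum_k \|g_k^X\|_\Phi\,\|V_k(Y)\|_{T_0}$ is bounded \emph{independently of the scale $j$}. Translation invariance alone does not give this. The $\Phi$ norm carries the factor $R=L^j$ through $\nabla_R^\alpha$, and the $T_0$ norm carries $\h$ (and $R$); the paper's own estimates give $\|\hat P_{m,0}\|_{T_0}\le R^{-|\alpha(m)|_1}\h^m$ and $\|f_{m'}^{(a)}\|_{\Phi(U)}\le \bar C\,R^{|\alpha(m')|_1}\h^{-m'}$, so each factor separately depends on $j$ and the product is scale-free only for matching indices. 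Your argument, as written, produces a constant at each fixed $j$, but nothing prevents it from blowing up with $j$; the ``continuity of the determinant on shape space'' does not address this, since the issue is scale, not shape.

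The paper takes a different and more explicit route that avoids this difficulty. It writes $\LTsym_{X,Y}F=\sum_{m,m'}\pair{F,f_{m'}^{(a)}}_0\,B_{m',m}^{-1}\,\hat P_m(Y)$ directly (no appeal to Proposition~\ref{prop:Locbd}), and then bounds the three ingredients separately: $\|f_{m'}^{(a)}\|_{\Phi(U)}$ and $\|\hat P_{m,0}\|_{T_0}$ by Lemmas~\ref{lem:TayX} and~\ref{lem:Pmbdpf}, and $B^{-1}$ via the observation that $A=|X|^{-1}B$ is unipotent upper-triangular with respect to monomial dimension, so $A^{-1}=\sum_{j<|\mathfrak v_+|}(-1)^j(A-I)^j$. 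In the resulting products the powers of $R$ and $\h$ telescope and cancel, which is exactly what produces a constant depending only on $L^{-j}\mathrm{diam}(U)$. To repair your approach you would need to supply this scale cancellation explicitly, which in effect reproduces the paper's computation.
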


Next, we incorporate the presence of an \emph{observable field}.  The
observable field is not needed for our analysis of the self-avoiding
walk susceptibility in \cite{BBS-saw4-log}, but it is used in our
analysis of the two-point function in \cite{BBS-saw4}.  Specifically,
its application is seen in
\cite[Section~\ref{saw4-sec:of}]{BBS-saw4}. In that context we see
that the observable field $\sigma \in \C$ is a complex variable such
that differentiating the partition function with respect to $\sigma$
and $\bar{\sigma}$ at $\sigma =0$ gives the two-point function.  In
particular, elements of $\Ncal$ become functions of $\sigma$, and
given an element $F\in\Ncal$ we need the norm of $F$ to measure the
size of the derivatives of $F$ at zero with respect to $(\sigma
,\bar{\sigma})$. We can make our existing norm do this automatically
by declaring $(\sigma ,\bar{\sigma})$ to be a new species of complex
boson field, that is $\sigma$ is a function on $\Lambda$, but since we
do not need the additional information encoded by the dependence of
$\sigma$ on $x \in \Lambda$ we choose test functions that are constant
in $x$. This means that the norm only measures derivatives with
respect to observable fields that are constant on
$\Lambda$. Furthermore we choose test functions such that only
derivatives that are at most first order with respect to each of
$\sigma$ and $\bar{\sigma}$ are measured, since higher-order
dependence on $\sigma$ plays no role in the analysis of the two-point
function.

Thus, let $\sigma$ be a new species of complex boson field.  The norm
on test functions is defined as in \cite{BS-rg-norm}, with the
previously chosen weights $w_{\alpha_i,z_i}^{-1} =
\h_i^{-z_i}R^{|\alpha|}$ for the non-observable fields.  For the
observable field, we choose the weights differently, as follows.
First, if $\alpha \not = 0$ then we choose $w_{\alpha_i,z_i}=0$ when
$i$ corresponds to the observable species.  This eliminates test
functions which are not constant in the observable variables.  In
addition, we set test functions equal to zero if their observable
variables exceed one $\sigma$, one $\bar\sigma$, or one pair
$\sigma\bar\sigma$.  Therefore, modulo the ideal $\Ical$ of zero norm
elements, a general element $F \in \Ncal$ has the form
\begin{equation}
\label{e:1Fab}
    F
    =
    F^{\varnothing} +
    F^{\pp} +
    F^{\qq} +
    F^{\pp \qq}
,
\end{equation}
where $F^\varnothing$ is obtained from $F$ by setting $\sigma=\bar\sigma=0$,
while $F^{\pp} = F_\sigma \sigma$, $F^{\qq} = F_{\bar\sigma} \bar\sigma$,
and $F^{\pp \qq} =F_{\sigma ,\bar\sigma} \sigma\bar\sigma$ with
the derivatives evaluated at $\sigma=\bar\sigma=0$.
In the $T_{\phi}$ semi-norm we will \emph{always} set
$\sigmaa = \sigmab = 0$.
We unite the above cases with the notation $F^\alpha = F_\alpha \sigma^\alpha$
for $\alpha \in \{\varnothing, a,b,ab\}$.
This corresponds to a direct sum decomposition,
\begin{equation}
\label{e:1Ncaldecomp}
    \Ncal/\Ical
    =
    \Ncal^{\varnothing} \oplus
    \Ncal^{\pp} \oplus
    \Ncal^{\qq} \oplus
    \Ncal^{\pp\qq}
,
\end{equation}
with canonical
projections $\pi_\alpha : \Ncal/\Ical \to\Ncal^\alpha$ defined by
$\pi_\varnothing F = F_\varnothing$, $\pi_\pp F = F_\pp \sigma$, and
so on.
Note that
\begin{equation}
\label{e:Fnormsum}
    \|F\|_{T_\phi}=\sum_{\alpha}\|F_\alpha \sigma^\alpha\|_{T_\phi}
    =\sum_{\alpha}\|F_\alpha \|_{T_\phi}\| \sigma^\alpha\|_{T_0},
\end{equation}
by definition.  We use the same value $\h_\sigma$ in the weight
for both $\sigma$ and $\bar\sigma$.  In particular,
$\h_\sigma = \| \sigma \|_{T_0}=\| \bar\sigma \|_{T_0}$.

On each of the subspaces on the right-hand side of \eqref{e:1Ncaldecomp},
we choose a value for the parameter $d_+$ and construct corresponding
spaces $\Vcal^\varnothing, \Vcal^{\pp},\Vcal^{\qq} ,\Vcal^{\pp\qq}$
as in Definition~\ref{def:Vcal}.
We allow the freedom to choose different values for the
parameter $d_{+}$
in each subspace, and in our application in \cite{BBS-rg-pt,BS-rg-IE}
we will make use
of this freedom.  Then we define
\begin{equation}
\label{e:Vcalbigspace}
    \Vcal = \Vcal^{\varnothing} \oplus
    \Vcal^{\pp} \oplus
    \Vcal^{\qq} \oplus
    \Vcal^{\pp\qq}
.
\end{equation}

The following definition extends the definition of the localisation
operator by applying it in a graded fashion in the above direct sum
decomposition.

\begin{defn}
\label{def:LTXYsym}
Let $\Lambda '$ be a coordinate patch. Let $a,b
\in \Lambda'$ be fixed.  Let $X(\varnothing)=X$, $X(a) = X \cap
\{a\}$, $X(b)=X \cap \{b\}$, and $X(ab) = X \cap\{a,b\}$.  For
$Y\subset X \subset \Lambda'$ and $F \in \Ncal_{X}$, we define the
linear operator $\LT_{X,Y}:\Ncal_{X} \to \Vcal(Y)$ by specifying its
action on each subspace in \refeq{1Ncaldecomp} as
\begin{align}
    \label{e:LTXYdef}
    &
    \LT_{X,Y} F^\alpha  =
    \sigma^\alpha \LTsym_{X(\alpha),Y(\alpha)}^\alpha F_\alpha
    ,
\end{align}
and the linear map $\LT_X : \Ncal_{X} \to \Vcal(X)$ by
\begin{align}
    \label{e:LTdef2}
    &
    \LT_{X}F = \LT_{X,X}F =
    \LTsym_{X}^\varnothing F_{\varnothing} +
    \sigma \LTsym_{X(a)}^a F_{\pp} +
    \bar\sigma \LTsym_{X(b)}^b F_{\qq} +
    \sigma \bar\sigma\LTsym_{X(ab)}^{ab} F_{\pp \qq}
    .
\end{align}
The space $\Vcal$ is defined by \eqref{e:Vcalbigspace}.  Different
choices of $d_+$ are permitted on each subspace, and the label
$\alpha$ appearing on the operators $\LTsym$ on the right-hand side of
\eqref{e:LTXYdef}--\eqref{e:LTdef2} is present to reflect these
choices.  The use of $\Vcal(X)$ to denote the range of $\LT_X$ is a
convenient abuse of notation, which does not explicitly indicate that
the range on the four subspaces in the four terms on the right-hand
side of \refeq{LTdef2} are actually $\Vcal^\alpha(X(\alpha))$.
\end{defn}

It is immediate from the definition that
\begin{align}
    &
    \pi_{\alpha} \LT_{X,Y} = \LT_{X,Y} \pi_{\alpha}\quad
    \text{for $\alpha = \varnothing , \pp ,\qq , \pp \qq$},
\label{e:LTXY1}
\end{align}
and from \eqref{e:LTsymXXi} that, for a partition $\{X_1,\ldots, X_m\}$ of $X$,
\begin{align}
    &
    \LT_{X} = \sum_{i=1}^m \LT_{X,X_{i}}
    \label{e:LTXY6}
    .
\end{align}
It is a consequence of Proposition~\ref{prop:LT2} that
\begin{equation}
\label{e:oLT4}
    \LT_{X'}\circ \LT_X = \LT_{X'} \quad
    \text{for $X' \subset X \subset \Lambda$},
\end{equation}
and therefore
\begin{equation}
\label{e:oLT4b}
    \LT_{X}\circ (\Id - \LT_X) = 0
.
\end{equation}
Also, by Proposition~\ref{prop:9LTdef}, for an automorphism $E\in
\Ecal(\Lambda)$,
\begin{equation}
\label{e:oLT3}
    E\big(\LT_{X} F\big) = \LT_{EX} (EF) \quad\quad
    \text{if $F \in \Ncal^{\varnothing}_{X}$}.
\end{equation}
Note that \eqref{e:oLT3} fails in general for $F \in \Ncal_{X}
\setminus \Ncal^\varnothing_{X}$, due to the fixed points $a,b$ in
the definition of $\LT_{X,Y}F$.  The following two propositions extend
the norm estimates for $\LTsym$ to $\LT$.

\begin{prop}\label{prop:opLTdefXY}
Suppose $\Lambda$ has period $L^N$ with $L,N>1$.  Let $j<N$, and let
$Y\subset X$ be $j$-polymers with $X_+ \subset U$ for a coordinate
patch $U\subset \Lambda$.  For $F \in \Ncal (U)$, there is a constant
$\bar{C}'$, which depends only on $L^{-j} {\rm diam}(U)$, such that
for $F \in \Ncal (U)$,
\begin{equation}
\label{e:LTXY5}
    \|\LT_{X,Y}F\|_{T_0} \le \bar{C}' \|F\|_{T_0}.
\end{equation}
Note that the case $X=Y$ gives \eqref{e:LTXY5} for $\LT_XF$.
\end{prop}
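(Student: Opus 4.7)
The plan is to reduce the proposition directly to Proposition~\ref{prop:LTsymXYbd} by exploiting the graded structure built into Definition~\ref{def:LTXYsym}. Since $\LT_{X,Y}$ is defined componentwise on the direct sum decomposition \eqref{e:1Ncaldecomp}, the hard analytic work has already been done in the observable-free setting.

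First I would write $F \in \Ncal(U)$ in the form $F = F^\varnothing + F^a + F^b + F^{ab}$, so that by linearity and \eqref{e:LTXYdef},
\begin{equation*}
    \LT_{X,Y} F \;=\; \sum_{\alpha \in \{\varnothing,a,b,ab\}} \sigma^{\alpha}\, \LTsym^{\alpha}_{X(\alpha),Y(\alpha)} F_{\alpha}.
\end{equation*}
The factorisation \eqref{e:Fnormsum} then gives
\begin{equation*}
    \|\LT_{X,Y}F\|_{T_0} \;=\; \sum_{\alpha} \|\sigma^{\alpha}\|_{T_0}\,\bigl\|\LTsym^{\alpha}_{X(\alpha),Y(\alpha)} F_{\alpha}\bigr\|_{T_0},
\end{equation*}
so the estimate reduces to bounding each $\|\LTsym^{\alpha}_{X(\alpha),Y(\alpha)} F_{\alpha}\|_{T_0}$ by $\bar{C}'\|F_\alpha\|_{T_0}$ and summing, using \eqref{e:Fnormsum} in reverse.

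For the $\alpha=\varnothing$ term, $X(\varnothing)=X$ and $Y(\varnothing)=Y$ are the original $j$-polymers satisfying $X_+\subset U$, and the desired bound is exactly the content of Proposition~\ref{prop:LTsymXYbd} (applied with the choice of $d_+$ appropriate to the $\varnothing$-subspace). The main (and only) obstacle is the handling of the observable components $\alpha\in\{a,b,ab\}$, for which $X(\alpha)\subset\{a,b\}$ need not itself be a $j$-polymer. Here I would observe that Proposition~\ref{prop:LTsymXYbd} requires only that $X(\alpha)\subset Y(\alpha)$ lie in a coordinate patch whose $L^{-j}$-rescaled diameter is controlled; since $\{a\},\{b\},\{a,b\}\subset U$, these sets are trivially contained in (or can be enlarged to) small $j$-polymers within $U$, so the proposition still applies to each observable subspace with a constant depending only on $L^{-j}\mathrm{diam}(U)$.

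Taking $\bar{C}'$ to be the maximum of the four constants obtained in this way (one per value of $\alpha$), which still depends only on $L^{-j}\mathrm{diam}(U)$, yields
\begin{equation*}
    \|\LT_{X,Y}F\|_{T_0} \;\le\; \bar{C}' \sum_{\alpha}\|\sigma^{\alpha}\|_{T_0}\,\|F_\alpha\|_{T_0} \;=\; \bar{C}'\,\|F\|_{T_0},
\end{equation*}
as desired. No further ingredient beyond Proposition~\ref{prop:LTsymXYbd} and the multiplicative factorisation \eqref{e:Fnormsum} is needed.
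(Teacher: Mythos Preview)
Your proposal is correct and follows essentially the same approach as the paper: decompose $F$ over $\alpha\in\{\varnothing,a,b,ab\}$ using \eqref{e:Fnormsum}, apply Proposition~\ref{prop:LTsymXYbd} to each summand, and take $\bar C'$ to be the maximum of the four resulting constants. The paper's proof is in fact terser than yours---it writes $\LTsym_{X,Y}^\alpha$ without explicitly flagging that the observable sets $X(\alpha)\subset\{a,b\}$ need not be $j$-polymers---so your remark about enlarging these singletons to small $j$-polymers inside $U$ is a welcome clarification rather than a deviation. (Note a small typo: you wrote ``$X(\alpha)\subset Y(\alpha)$'' where you meant $Y(\alpha)\subset X(\alpha)$.)
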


\begin{proof}
By definition, the triangle inequality, Proposition~\ref{prop:LTsymXYbd},
and \refeq{Fnormsum},
\begin{align}
    \|\LT_{X,Y} F\|_{T_0}
    &=
    \sum_{\alpha = \varnothing, a,b, ab}
    \|\sigma^\alpha \LTsym_{X,Y}^\alpha F_\alpha\|_{T_0}
    \le
    \bar C' \sum_{\alpha = \varnothing, a,b, ab}
    \| \sigma^\alpha\|_{T_0}\| F_\alpha\|_{T_0}
    =
    \bar{C}' \|F\|_{T_0},
\end{align}
where $\bar C'= \max_\alpha \bar C_\alpha'$,
with $\bar C_\alpha'$ the constant arising in each of the four
applications of
Proposition~\ref{prop:LTsymXYbd}.
\end{proof}

For the next proposition, which is applied in
\cite[Proposition~\ref{IE-prop:1-LTdefXY}]{BS-rg-IE}, we write
$d_{\alpha}$ for the choice of $d_{+}$, and
$[\varphi_{\rm min}]$ for the common
minimal field dimension on each space $\Ncal^\alpha$ for
$\alpha = \varnothing , \pp ,\qq$ and $\pp \qq$.  We choose the spaces
$\Phi(\h)$ and $\Phi'(\h')$ as in Proposition~\ref{prop:LTKbound}.
With $d_\alpha'$
defined as in \refeq{dplusprimedef},
let
\begin{equation}
\label{e:cgamobs}
    \gamma_{\alpha,\beta}
        =
    (L^{-d_{\alpha}'} +  L^{-(A+1)[\varphi_{\rm min}]})
    \left( \frac{\h'_\sigma}{\h_\sigma} \right)^{|\alpha \cup \beta|}
    .
\end{equation}
As in Proposition~\ref{prop:LTKbound},
for the next proposition we again
require that $p_\Phi \ge d_+' -[\varphi_{\rm min}]$ and consider the case
where $\Lambda$ has period $L^N$.

\begin{prop}\label{prop:1-LTdefXY}
Let $j<N$, let $A < p_\Ncal$ be a positive integer, let $L$ be
sufficiently large, let $X$ be a $j$-polymer with enlargement $X_+$
contained in a coordinate patch, and let $Y \subset X$ be a nonempty
$L^j$-polymer.  For $i=1,2$, let $F_{i} \in \Ncal(X)$, with
$F_{2,\alpha}=0$ when $Y(\alpha)=\varnothing$.  Let $F =
F_1(1-\LT_{Y})F_2$.  Then
\begin{align}
\label{e:LTXY5a}
    \|F\|_{T_{\phi}'}
&\le
   \bar{C}
    \!\! \!\!
    \sum_{\alpha,\beta=\varnothing ,\pp ,\qq,\pp\qq}
    \cgam_{\alpha,\beta}
    \left(1 + \|\phi\|_{\Phi'}\right)^{A'}
    \nnb & \quad\quad\quad \times
    \sup_{0\le t \le 1}
    \big(
    \|F_{1,\beta}F_{2,\alpha}\|_{T_{t\phi}}
    +
    \|F_{1,\beta}\|_{T_{t\phi}}\|F_{2,\alpha}\|_{T_{0}}\big)
    \|\sigma^{\alpha\cup\beta}\|_{T_0},
\end{align}
where $\cgam$ is given by \eqref{e:cgam}, $A'=A+d_+/[\varphi_{\rm
min}]+1$, and $\bar{C}$ depends only on $L^{-j} {\rm diam}(X)$.
\end{prop}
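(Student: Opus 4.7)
The plan is to reduce Proposition~\ref{prop:1-LTdefXY} to Proposition~\ref{prop:LTKbound} applied termwise in the graded decomposition \eqref{e:1Ncaldecomp}, with the transition from the $T_\phi$ to $T_\phi'$ semi-norm on the $\sigma$-variables producing the factor $(\h'_\sigma/\h_\sigma)^{|\alpha\cup\beta|}$ in $\gamma_{\alpha,\beta}$.

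First, I would decompose $F_1=\sum_\beta F_{1,\beta}\sigma^\beta$ and $F_2=\sum_\alpha F_{2,\alpha}\sigma^\alpha$ using the canonical projections $\pi_\alpha$ and the direct sum \eqref{e:1Ncaldecomp}. Since $\LT_Y$ preserves the grading by \eqref{e:LTXY1} and acts on the $\alpha$-component via $\LTsym^\alpha_{Y(\alpha)}$ as in \eqref{e:LTXYdef}, one can write
\begin{equation}
(1-\LT_Y)F_2=\sum_\alpha \sigma^\alpha (1-\LTsym^\alpha_{Y(\alpha)})F_{2,\alpha},
\end{equation}
where, for $\alpha$ with $Y(\alpha)=\varnothing$, the hypothesis $F_{2,\alpha}=0$ eliminates the term (this is precisely where the assumption is needed, since $\LTsym^\alpha_\varnothing$ is defined to be zero and Proposition~\ref{prop:LTKbound} applies only to nonempty $Y$). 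Multiplying by $F_1$ and collecting $\sigma$-powers gives
\begin{equation}
F_1(1-\LT_Y)F_2=\sum_{\alpha,\beta}\sigma^{\alpha\cup\beta}\,F_{1,\beta}(1-\LTsym^\alpha_{Y(\alpha)})F_{2,\alpha},
\end{equation}
where pairs $(\alpha,\beta)$ with $\alpha\cap\beta\ne\varnothing$ drop out because $\sigma^2=\bar\sigma^2=0$ in the truncated algebra.

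Next, I would apply the product and triangle inequalities together with the factorisation \eqref{e:Fnormsum} in the $T_\phi'$ semi-norm to bound
\begin{equation}
\|F\|_{T_\phi'}\le\sum_{\alpha,\beta}\|F_{1,\beta}(1-\LTsym^\alpha_{Y(\alpha)})F_{2,\alpha}\|_{T_\phi'}\,\|\sigma^{\alpha\cup\beta}\|_{T_0'}.
\end{equation}
For each nonempty $Y(\alpha)$, Proposition~\ref{prop:LTKbound} (applied in the subspace $\Ncal^\alpha$ with its own $d_+=d_\alpha$, hence its own $d_\alpha'$ and corresponding $\cgam^\alpha = L^{-d_\alpha'}+L^{-(A+1)[\varphi_{\rm min}]}$) supplies the estimate
\begin{equation}
\|F_{1,\beta}(1-\LTsym^\alpha_{Y(\alpha)})F_{2,\alpha}\|_{T_\phi'}\le \cgam^\alpha\bar C(1+\|\phi\|_{\Phi'})^{A'}\sup_{0\le t\le 1}\bigl(\|F_{1,\beta}F_{2,\alpha}\|_{T_{t\phi}}+\|F_{1,\beta}\|_{T_{t\phi}}\|F_{2,\alpha}\|_{T_0}\bigr).
\end{equation}

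Finally, I would convert $\|\sigma^{\alpha\cup\beta}\|_{T_0'}$ to $\|\sigma^{\alpha\cup\beta}\|_{T_0}$ via the identity $\|\sigma^{\alpha\cup\beta}\|_{T_0'}=(\h'_\sigma/\h_\sigma)^{|\alpha\cup\beta|}\|\sigma^{\alpha\cup\beta}\|_{T_0}$, which follows directly from the definition of the $\sigma$-weight in the observable norm. Combining these factors produces exactly $\gamma_{\alpha,\beta}$ as in \eqref{e:cgamobs}, yielding \eqref{e:LTXY5a}. The main obstacle is purely bookkeeping—correctly tracking the $\sigma$-grading, the vanishing of overlapping $\sigma$-factors, and the use of the assumption $F_{2,\alpha}=0$ when $Y(\alpha)=\varnothing$ so that Proposition~\ref{prop:LTKbound} is only invoked with nonempty polymer; no new analytic input beyond Proposition~\ref{prop:LTKbound} is required.
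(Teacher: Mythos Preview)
Your proposal is correct and follows essentially the same approach as the paper: decompose according to the $\sigma$-grading, apply Proposition~\ref{prop:LTKbound} termwise, and convert $\|\sigma^{\alpha\cup\beta}\|_{T_0'}$ to $\|\sigma^{\alpha\cup\beta}\|_{T_0}$ via the ratio $(\h'_\sigma/\h_\sigma)^{|\alpha\cup\beta|}$. Your write-up is in fact more explicit than the paper's about why the hypothesis $F_{2,\alpha}=0$ for $Y(\alpha)=\varnothing$ is needed and about the vanishing of terms with $\alpha\cap\beta\neq\varnothing$.
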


\begin{proof}
We use
\begin{equation}
    \|F\|_{T_{\phi}'}
    \le \sum_{\alpha,\beta} \|\sigma^{\alpha\cup\beta} \|_{T_0'}
    \|F_{1,\beta}(1-\LTsym_{Y(\alpha)}^\alpha)F_{2,\alpha}\|_{T_\phi'}
\end{equation}
and apply Proposition~\ref{prop:LTKbound} to each term.
We also use
\begin{equation}
    \|\sigma^{\alpha\cup\beta} \|_{T_0'}
    =
    (\h_\sigma')^{|\alpha \cup\beta|}
    =
    \|\sigma^{\alpha\cup\beta} \|_{T_0}
    \left( \frac{\h'_\sigma}{\h_\sigma} \right)^{|\alpha \cup \beta|}
    .
\end{equation}
The constant $\bar C$ is the largest of the
four constants $\bar C_\alpha$ arising from
Proposition~\ref{prop:LTKbound}.
\end{proof}

\section{The operator \texorpdfstring{$\LTsym$}{loc}}
\label{sec:LTsym}

In Section~\ref{sec:LTsym-eu}, we prove existence of the operator
$\LTsym$ and prove Proposition~\ref{prop:LTsymexists}.  In
Section~\ref{sec:LTnormestimates}, we prove
Propositions~\ref{prop:Locbd}--\ref{prop:LTKbound}, using the results
on Taylor polynomials proven in Section~\ref{sec:Taylor}.  Finally, in
Section~\ref{sec:PM}, we now prove the claim which guaranteed
existence of the polynomials $\hat P$ used to define $\Vcal$ in
Definition~\ref{def:Vcal}.

Throughout this section, $\Lambda'$ is a coordinate patch in
$\Lambda$, and the space of polynomial test functions is $\Phipol =
\Phipol[\Lambda']$.

\subsection{Existence and uniqueness of \texorpdfstring{$\LTsym$}{loc}: Proof of
Proposition~\ref{prop:LTsymexists}}
\label{sec:LTsym-eu}

Recall from \cite[Proposition~\ref{norm-prop:pairingS}]{BS-rg-norm}
that the pairing obeys
\begin{equation}
\label{e:FSg}
    \pair{F,g}_\phi = \pair{F,Sg}_\phi
\end{equation}
for all $F \in \Ncal$, $g\in \Phi$, and for all boson fields $\phi$.
The symmetry operater $S$ is defined in
\cite[Definition~\ref{norm-def:S}]{BS-rg-norm}; it obeys $S^2=S$.  Let
$m \in \mathfrak{m}$ have components $m_k=(i_k,\alpha_k)$ for
$k=1,\ldots, p(m)$.
Recall that $m$ determines an abstract monomial $M_{m}$ by
\eqref{e:Mm} and, given $a \in \Lambda$, $M_{m}$ determines $M_{m,a}
\in \Ncal$ by evaluation of $M_{m}$ at $a$.  Recall from
\cite[Example~\ref{norm-ex:pairing}]{BS-rg-norm} that, for any test
function $g$,
\begin{equation}
\label{e:Mmg}
    \pair{M_{m,a}, g}_0
    =
    \nabla^{m} (Sg)_{\vec a},
\quad\quad
    \nabla^{m}
=
    \prod_{k=1}^{p (m)}
    \nabla^{\alpha_{k}},
\end{equation}
where on the right-hand side $\vec a$ indicates that each of the
$p(m)$ arguments is evaluated at $a$, and $\nabla^{\alpha_k}$ acts on
the variable $z_k$.

We specified a basis for $\Phipol$ in \eqref{e:pmdef}, but now we
require another basis. For $z=(x_1,\ldots,x_d)$ a coordinate on
$\Lambda'$, and $\alpha = (\alpha_{1},\dots ,\alpha_{d})\in
\Nbold_{0}^{d}$, we define the binomial coefficient $\binom{z}{\alpha}
= \binom{x_{1}}{\alpha_{1}} \dots \binom{x_{d}}{\alpha_{d}}$.  The new
basis is obtained by replacing, in the definition \eqref{e:pmdef} of
$p_{m}$, the monomial $z_{k}^{\alpha_{k}}$ by the polynomial
$\binom{z_{k}}{\alpha_{k}}$. More generally, we can also move the
origin. Thus for $m\in \bar{\mathfrak{m}}_+$ and $a \in \Lambda '$ we
define
\begin{equation}
\label{e:fmadef}
    b_{m,z}^{(a)}
=
    \prod_{k=1}^{p}\binom{z_{k}-a}{\alpha_{k}}
.
\end{equation}
This is a polynomial function defined on $\Lambda_{i_1}'^{(1)} \times
\cdots\times\Lambda_{i_{p(m)}}'^{(1)}$.  We implicitly extend it by
zero so that it is a test function defined on $\vec{\Lambdabold}^{*}$.
For $p(m)=0$, we set $b_\varnothing^{(a)}=1$.  For any $a \in \Lambda
'$, the set $\{b_{m,z}^{(a)}:m \in \bar{\mathfrak{v}}_{+} \}$ is a
basis for $\Phipol$.  For $g \in \Phi$, we define $\Tay_{a}: \Phi
\rightarrow \Phipol$ by
\begin{equation}
    \label{e:Tay-def}
    (\Tay_{a} g)_z
=
    \sum_{m \in \bar{\mathfrak{v}}_{+}}
    (\nabla^{m}g)_{\vec{a}} \, b_{m,z}^{(a)}
.
\end{equation}
The following lemma shows that $\Tay_{a} g$ is the lattice analogue of
a Taylor polynomial approximation to $g$.  Its proof is given in
Section~\ref{sec:Tay1}.

\begin{lemma}
\label{lem:Tay1}
Let $\Lambda'$ be a coordinate patch, and let $a,z\in\Lambda'$.
\\
(i) For $g\in \Phi$, $\Tay_{a}g$ is the unique $p \in \Phipol$ such
that $\nabla^{m} (g-p)_z|_{z=\vec{a}} = 0$ for all $m \in
\bar{\mathfrak{v}}_{+}$.
\\
(ii) $\Tay_{a}$ commutes with $S$.
\\
(iii) For $g \in \Phipol$, $(\Tay_{a}g)_z = g_z$.
\end{lemma}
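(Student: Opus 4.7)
The plan is to prove the three parts in order. Part (i) is the core; parts (ii) and (iii) will fall out from the uniqueness characterisation established in (i). The single computational ingredient needed throughout is the scalar lattice binomial identity
\begin{equation*}
    \nabla^\beta \binom{x-a}{\alpha}\bigg|_{x=a}=\delta_{\alpha,\beta},
    \qquad \alpha,\beta \in \N_0^d,
\end{equation*}
which follows by iterating the one-dimensional Pascal-type identity $\nabla^{e_i}\binom{x_i}{k}=\binom{x_i}{k-1}$, together with $\binom{0}{k}=\delta_{0,k}$ and the observation that $\nabla^\beta$ annihilates $\binom{x-a}{\alpha}$ whenever $\beta$ exceeds $\alpha$ componentwise.

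For (i), I would take tensor products of this scalar identity across the arguments $z_1,\ldots,z_{p(m)}$ to obtain $\nabla^{m'}b^{(a)}_{m,z}\big|_{z=\vec a}=\delta_{m,m'}$ for $m,m'\in\bar{\mathfrak{v}}_+$. Substituting into \eqref{e:Tay-def} then yields $\nabla^{m'}(\Tay_a g)_z\big|_{z=\vec a}=(\nabla^{m'}g)_{\vec a}$, which establishes that $\Tay_a g$ satisfies the stated vanishing condition. For uniqueness, I would use that $\{b^{(a)}_{m,z}: m\in\bar{\mathfrak{v}}_+\}$ is a basis for $\Phipol$ (as claimed just after \eqref{e:fmadef}); expanding any candidate $p\in\Phipol$ in this basis and applying the same identity reads off its coefficients as $(\nabla^m g)_{\vec a}$, forcing $p=\Tay_a g$.

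For (iii), the function $g\in\Phipol$ itself trivially obeys $\nabla^m(g-g)_{\vec a}=0$, so the uniqueness in (i) forces $\Tay_a g=g$. For (ii), I would invoke the uniqueness in (i) once more: to show $\Tay_a Sg=S\Tay_a g$ it suffices to verify that $S\Tay_a g\in\Phipol$ and that $\nabla^m(Sg-S\Tay_a g)_{\vec a}=0$ for all $m\in\bar{\mathfrak{v}}_+$. Membership in $\Phipol$ holds because $\Phipol$ is spanned by $\{b^{(a)}_{m,z}:m\in\bar{\mathfrak{v}}_+\}$, an index set obtained precisely by dropping the ordering condition defining $\mathfrak{v}_+$; hence $\Phipol$ is preserved by the argument permutations defining $S$. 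For the vanishing, I would unfold $S$ from \cite[Definition~\ref{norm-def:S}]{BS-rg-norm} as a signed average over permutations $\pi$ acting on arguments within each species, evaluate $\nabla^m(Sh)_{\vec a}$ using that $\pi\vec a=\vec a$, and then perform the reindexing $m\leftrightarrow \pi m$ inside the average to match the two sides.

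The main obstacle lies in (ii): all the bookkeeping about how $S$ interacts with the unordered index set $\bar{\mathfrak{v}}_+$, including fermionic sign factors from the definition of $S$, is concentrated there. This is precisely the reason the definition \eqref{e:Tay-def} uses $\bar{\mathfrak{v}}_+$ rather than $\mathfrak{v}_+$: it makes $\Phipol$ stable under $S$ and ensures the reindexing $m\mapsto\pi m$ lands back in the index set. By contrast, parts (i) and (iii) are purely formal consequences of the scalar binomial identity.
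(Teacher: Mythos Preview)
Your proposal is correct and follows essentially the same route as the paper. Part~(i) is identical (the scalar identity $\nabla^\beta\binom{x-a}{\alpha}\big|_{x=a}=\delta_{\alpha,\beta}$ tensored over arguments, then uniqueness via the basis $\{b^{(a)}_m\}$), and part~(iii) is identical (immediate from uniqueness).

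The only difference is in (ii). The paper dispatches it in one sentence: from \eqref{e:Tay-def} one sees directly that Taylor-expanding $g$ with its arguments permuted yields $\Tay_a g$ with \emph{its} arguments permuted, so $\Tay_a$ commutes with each permutation in the average defining $S$, hence with $S$ itself. Your route via the uniqueness characterisation---checking $S\Tay_a g\in\Phipol$ and $\nabla^m(Sg-S\Tay_a g)_{\vec a}=0$---is correct and amounts to the same observation unpacked, but the sign bookkeeping you flag as the ``main obstacle'' never actually needs to be confronted: once you know each permutation commutes with $\Tay_a$, the signs $\sgn(\sigma_f)$ are carried along passively on both sides. So (ii) is in fact the easiest part, not the hardest.
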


For $m \in \mathfrak{m}_+$, let
\begin{equation}
\label{e:therealfmadef}
    f_{m}^{(a)}
    =
    N_m Sb_{m}^{(a)}
,
\end{equation}
where $N_{m}$ is a normalisation constant (whose value is chosen in
\refeq{Nmdef} so that case $m=m'$ holds in \eqref{e:dualbasis} below).
The lexicographic ordering on $\mathfrak{m}_+$ implies that
$f_{m}^{(a)} \not = f_{m'}^{(a)} \not = 0$ for $m\not =m'$.  Since
$\{b_{m}^{(a)}\}_{m\in \bar{\mathfrak{v}}_+}$ forms a basis of
$\Phipol$, the linearly independent set $\{f_{m}^{(a)}\}_{m\in
\mathfrak{v}_+}$ forms a basis of $S\Phipol$. The next lemma, which is
proved in Section~\ref{sec:dualpairing}, says that $\{M_{m,a}\}_{m\in
\mathfrak{v}_+}$ and $\{f_{m'}^{(a)}\}_{m'\in \mathfrak{v}_+}$ are
dual bases of $\Vcal_+$ and $S\Phipol$ with respect to the zero-field
pairing.

\begin{lemma}
\label{lem:Tay2}
Let $\Lambda'$ be a coordinate patch, and let $a,z\in\Lambda'$.
\\
(i) For $m,m' \in \mathfrak{m}_+$,
\begin{align}
    \label{e:dualbasis}
    \pair{M_{m,a},f_{m'}^{(a)}}_{0}
    &= \delta_{m,m'}.
\end{align}
(ii) For $g \in \Phi$,
\begin{equation}
    \label{e:Taydual}
    (\Tay_a S g)_z
    =
    \sum_{m\in \mathfrak{v}_+}
    \pair{M_{m,a},g}_0 f_{m,z}^{(a)}.
\end{equation}
\end{lemma}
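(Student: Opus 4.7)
The plan is to prove (i) by a direct computation from the definitions and then to deduce (ii) from (i) by a duality argument together with Lemma \ref{lem:Tay1}.

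For part (i), I would combine \eqref{e:FSg} with the definition $f_{m'}^{(a)} = N_{m'} S b_{m'}^{(a)}$ to reduce
\[
    \pair{M_{m,a}, f_{m'}^{(a)}}_0 = N_{m'} \pair{M_{m,a}, S b_{m'}^{(a)}}_0 = N_{m'} \pair{M_{m,a}, b_{m'}^{(a)}}_0,
\]
and then apply \eqref{e:Mmg} to obtain $N_{m'} \nabla^m (S b_{m'}^{(a)})_{\vec a}$. The key calculation is the identity $\nabla^\alpha_z \binom{z - a}{\beta}\big|_{z = a} = \delta_{\alpha, \beta}$ for forward lattice differences of binomials, extended componentwise to forward multi-indices via \eqref{e:nabla-alpha}. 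The symmetrization $S$ expresses $S b_{m'}^{(a)}$ as a sum over permutations $\sigma$ of the test-function arguments within each species (with signs for fermions); by the above identity, only permutations that match the multiset $\{(i_k,\alpha_k)\}$ of $m$ with that of $m'$ contribute. Since $m, m' \in \mathfrak{m}_+$ are both in the fixed lexicographic order, such a $\sigma$ exists only when $m = m'$, giving orthogonality for $m \ne m'$. For $m = m'$, the number of matching permutations is a combinatorial factor determined by the multiplicities $n_{(i,\alpha)}(m)$, and $N_m$ is defined in \refeq{Nmdef} to be its reciprocal, so that the pairing equals $1$.

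For part (ii), both sides of \eqref{e:Taydual} lie in $S\Phipol$: the right-hand side is a linear combination of the basis vectors $f_m^{(a)}$ of $S\Phipol$, and for the left-hand side, commutation from Lemma \ref{lem:Tay1}(ii) together with $S^2 = S$ gives $S(\Tay_a S g) = \Tay_a S^2 g = \Tay_a S g$. By part (i), $\{M_{m,a}\}_{m \in \mathfrak{v}_+}$ and $\{f_m^{(a)}\}_{m \in \mathfrak{v}_+}$ are dual bases of $\Vcal_+$ and $S\Phipol$, so it suffices to check that both sides of \eqref{e:Taydual} pair to the same value with each $M_{m,a}$, $m \in \mathfrak{v}_+$. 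For the left-hand side, \eqref{e:Mmg} yields $\pair{M_{m,a}, \Tay_a S g}_0 = \nabla^m (S \Tay_a S g)_{\vec a} = \nabla^m (\Tay_a S g)_{\vec a}$ by $S$-invariance, and the interpolation property of Lemma \ref{lem:Tay1}(i) applied to $S g$ (valid since $\mathfrak{v}_+ \subset \bar{\mathfrak{v}}_+$) identifies this with $\nabla^m (S g)_{\vec a} = \pair{M_{m,a}, g}_0$. For the right-hand side, bilinearity of the pairing together with part (i) immediately gives $\pair{M_{m,a}, g}_0$ as well.

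The main obstacle is the combinatorial bookkeeping in part (i): correctly tracking the normalization of $S$, the multiplicities of repeated $(i,\alpha)$-pairs in $m$, and the fermionic sign conventions, so that $N_m$ comes out in the clean form of \refeq{Nmdef}. Part (ii) is then a short duality argument requiring no further computation.
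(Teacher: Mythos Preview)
Your proof of part (i) follows essentially the same route as the paper's (carried out there as \eqref{e:dualbasispf} in Lemma~\ref{lem:dualbasis}): apply \eqref{e:Mmg}, expand the symmetrisation $S$ acting on $b_{m'}^{(a)}$ as a signed sum over $\vec\Sigma(m')$, invoke \eqref{e:Tay1a} to pick out the permutations with $\sigma m' = m$, and use the lexicographic ordering on $\mathfrak{m}_+$ together with $n_{(i,\alpha)}\le 1$ for fermion species to reduce to the stabiliser count that cancels $N_m$.

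For part (ii) you take a genuinely different route. The paper computes $\Tay_a Sg$ directly: it writes out the Taylor expansion \eqref{e:Tay-def} with coefficients $\nabla^m(Sg)_{\vec a}=\pair{M_{m,a},g}_0$, regroups the sum over $\bar{\mathfrak{v}}_+$ as a double sum over $m\in\mathfrak{v}_+$ and $\sigma\in\vec\Sigma(m)$, and then recognises the inner sum (after pulling $\sgn(\sigma_f)$ out of the pairing via anticommutativity) as $N_m S b_m^{(a)}=f_m^{(a)}$, with a final application of $S$ removed since $Sf_m^{(a)}=f_m^{(a)}$. Your argument instead observes that both sides lie in $S\Phipol$ and then tests them against the dual basis $\{M_{m,a}\}_{m\in\mathfrak{v}_+}$, using Lemma~\ref{lem:Tay1}(i) to evaluate the left-hand pairing and part~(i) for the right. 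This is correct and arguably cleaner; the trade-off is that your proof of (ii) depends on (i), whereas the paper's direct computation of (ii) is logically independent of (i) and makes the combinatorial origin of the normalisation $N_m$ visible a second time.
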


\begin{defn}
\label{def:LTold}
Given $a \in \Lambda$, we define a linear map $\LTsym_{+,a} :
\Ncal_{\{a\}} \to \Vcal_{+}(\{a\})$ by
\begin{equation}
\label{e:LTolddef}
    \LTsym_{+,a} F
    =
    \sum_{m\in \mathfrak{v}_+} \pair{F,f_m^{(a)}}_0 M_{m,a}
.
\end{equation}
\end{defn}

It is an immediate consequence of \eqref{e:LTolddef} and
\eqref{e:dualbasis} that $\LTsym_{+,a} M_{m,a}=M_{m,a}$ for all $m\in
\mathfrak{v}_+$.  Since $\Vcal_+$ is spanned by the monomials
$(M_m)_{m\in \mathfrak{v}_+}$, it follows that
\begin{equation}
\label{e:LocPisP}
    \LTsym_{+,a} P_a = P_a \quad\quad P \in \Vcal_+.
\end{equation}
The following lemma shows that the map $\LTsym_{+,a}$ is dual to
$\Tay_a$ with respect to the zero-field pairing of $\Ncal$ and $\Phi$.

\begin{lemma}
\label{lem:Locid}
For any $a\in \Lambda$, $F \in \Ncal_{\{a\}}$, and $g\in \Phi$,
\begin{equation}
\label{e:Locid0}
    \langle \LTsym_{+,a}F, g \rangle_0 = \langle  F, \Tay_a g \rangle_0.
\end{equation}
In particular, if $g \in \Phipol$, then
\begin{equation}
\label{e:Locid}
    \langle \LTsym_{+,a} F, g \rangle_0 = \langle  F,  g \rangle_0.
\end{equation}
\end{lemma}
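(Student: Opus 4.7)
The plan is to expand $\LTsym_{+,a}F$ using its definition \eqref{e:LTolddef}, recognize the resulting combination of pairings as the pairing of $F$ against a lattice Taylor polynomial, and then use the symmetry property \eqref{e:FSg} of the pairing together with Lemma~\ref{lem:Tay1}(ii) to dispose of the symmetrization operator $S$.

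Concretely, starting from \eqref{e:LTolddef} and using bilinearity of $\pair{\cdot,\cdot}_0$, I write
\begin{align}
\pair{\LTsym_{+,a}F, g}_0
&= \sum_{m\in \mathfrak{v}_+} \pair{F, f_m^{(a)}}_0 \, \pair{M_{m,a}, g}_0
= \left\langle F, \, \sum_{m\in \mathfrak{v}_+} \pair{M_{m,a}, g}_0 \, f_m^{(a)} \right\rangle_0.
\end{align}
By Lemma~\ref{lem:Tay2}(ii) the inner sum equals $\Tay_a S g$, so the right-hand side is $\pair{F, \Tay_a S g}_0$. Lemma~\ref{lem:Tay1}(ii) allows me to commute $\Tay_a$ past $S$, and \eqref{e:FSg} then removes the $S$:
\begin{equation}
\pair{F, \Tay_a S g}_0 = \pair{F, S\Tay_a g}_0 = \pair{F, \Tay_a g}_0,
\end{equation}
which establishes \eqref{e:Locid0}. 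For the special case $g\in\Phipol$, Lemma~\ref{lem:Tay1}(iii) gives $\Tay_a g = g$, and \eqref{e:Locid} follows immediately.

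There is no substantive obstacle: the three preceding lemmas (the dual basis identity, the Taylor duality formula, and the commutation of $\Tay_a$ with $S$) are precisely tuned to reduce this statement to a brief calculation. The only delicate point worth flagging is that the pairing identity in Lemma~\ref{lem:Tay2}(ii) produces $\Tay_a S g$ rather than $\Tay_a g$, which is exactly why the $S$-invariance of the pairing together with Lemma~\ref{lem:Tay1}(ii) is invoked at the end.
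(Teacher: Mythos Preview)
Your proof is correct and follows essentially the same route as the paper: expand via \eqref{e:LTolddef}, apply Lemma~\ref{lem:Tay2}(ii) to recognise $\Tay_a Sg$, then use Lemma~\ref{lem:Tay1}(ii) and \eqref{e:FSg} to remove the $S$, with Lemma~\ref{lem:Tay1}(iii) handling the $g\in\Phipol$ case.
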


\begin{proof}
For \eqref{e:Locid0}, we use Definition~\ref{def:LTold},
linearity of the pairing, \eqref{e:Taydual}, Lemma~\ref{lem:Tay1}(ii)
and \eqref{e:FSg} to obtain
\begin{align}
    \langle \LTsym_{+,a} F, g \rangle_0
    & =
    \sum_{m \in \mathfrak{v}_+}
    \pair{F,f_m^{(a)}}_0
    \pair{M_{m,a},g}_0
    =
    \pair{F, \Tay_a S g}_0
    \nnb &
    =
    \pair{F, S \Tay_a g}_0
    =
    \pair{F, \Tay_a g}_0.
\end{align}
For \eqref{e:Locid}, we use \eqref{e:Locid0} and the
fact that $\Tay_a g = g$
for $g\in\Phipol$, by Lemma~\ref{lem:Tay1}(iii).
\end{proof}

\begin{lemma}
\label{lem:newLT} Let $a\in \Lambda$ and $X \subset \Lambda$ be such
that $X \cup \{a\}$ is contained in a coordinate patch.  Given $V_+
\in \Vcal_+$, there exists a unique $V \in \Vcal$ (depending on $V_+$,
$a$, and $X$) such that
\begin{equation}
    \LTsym_{+,a} V(X) = V_{+,a}.
\end{equation}
In particular, the map $V_{+} \mapsto V$ defines an isomorphism from
$\Vcal_+$ to $\Vcal$.
\end{lemma}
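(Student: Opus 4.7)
The plan is to establish the lemma by showing that the linear map $T: \Vcal \to \Vcal_+$ defined by $T(V) = V_+$ with $V_{+,a} = \LTsym_{+,a} V(X)$ is a bijection (identifying $\Vcal_+$ with $\Vcal_+(\{a\})$ via evaluation at $a$, which is injective since the monomials $\{M_{m,a}\}_{m \in \mathfrak{v}_+}$ are linearly independent in $\Ncal$). Assuming $X$ is nonempty (the case $X = \varnothing$ is vacuous or requires separate handling), the strategy is to introduce an auxiliary map $\Psi: \Vcal_+ \to \Vcal$, defined on the basis $\{M_m\}_{m \in \mathfrak{v}_+}$ by $\Psi(M_m) = \hat{P}(M_m)$ and extended linearly (well-defined by Definition~\ref{def:Vcal}), and then compute the matrix of $T \circ \Psi: \Vcal_+ \to \Vcal_+$ in the basis $\{M_m\}_{m \in \mathfrak{v}_+}$ ordered by dimension $[M_m]$ ascending.

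For the computation I would use Definition~\ref{def:Vcal}(ii) to write $\hat{P}(M_m) = M_m - Q_m - R_m$, where $Q_m \in \Pcal_{t_m}$ for some $t_m > [M_m]$ and $R_m \in \Rcal_1$, so that evaluating at $x \in X$ gives $\hat{P}(M_m)_x = M_{m,x} - Q_{m,x}$ in $\Ncal$ because $R_{m,x} = 0$ (the reduction procedure defining $\Rcal_1$ uses \eqref{e:Vcal-relation}, which is a pointwise identity). The key pairing computation, using $Sf_{m'}^{(a)} = f_{m'}^{(a)}$ and \eqref{e:Mmg}, is
\begin{equation*}
    \pair{M_{n,x}, f_{m'}^{(a)}}_0 = \nabla^n f_{m'}^{(a)}(\vec{x}).
\end{equation*}
Since $f_{m'}^{(a)}$ has total polynomial degree $[M_{m'}] - \sum_k [\varphi_{i_k(m')}]$ in $(z - a)$, and each forward or backward lattice derivative lowers polynomial degree by one, this pairing equals a polynomial in $(x-a)$ of total degree $[M_{m'}] - [M_n]$ when species match (zero otherwise). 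Consequently the pairing vanishes by over-differentiation when $[M_n] > [M_{m'}]$, and is constant in $x$ when $[M_n] = [M_{m'}]$; in the latter case with $n, m' \in \mathfrak{v}_+$, the value equals $\delta_{n,m'}$ by Lemma~\ref{lem:Tay2}(i).

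Applying these facts to the coefficient $t_{m',m}$ of $M_{m'}$ in $T(\hat{P}(M_m)) = \sum_{x \in X}\LTsym_{+,a}(M_{m,x} - Q_{m,x})$: when $[M_{m'}] < [M_m]$, both pairings vanish by over-differentiation; when $[M_{m'}] = [M_m]$, the pairing with $Q_{m,x}$ vanishes (since each monomial of $Q_m$ has dimension $\ge t_m > [M_m] = [M_{m'}]$), while the pairing with $M_{m,x}$ is $x$-independent and equals $\delta_{m,m'}$, contributing $|X|\delta_{m,m'}$ to $t_{m',m}$. Thus the matrix of $T \circ \Psi$ is upper triangular with diagonal entries $t_{m,m} = |X| \neq 0$, hence invertible. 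This makes $\Psi$ injective, so $\dim \Vcal \ge \dim \Vcal_+$; combined with the upper bound $\dim \Vcal \le |\mathfrak{v}_+| = \dim \Vcal_+$ from the spanning set $\{\hat{P}(M) : M \in \Mcal_+\}$ in Definition~\ref{def:Vcal}, equality holds and both $\Psi$ and $T$ are isomorphisms. The desired unique $V \in \Vcal$ is $T^{-1}(V_+)$, and $V_+ \mapsto V$ gives the isomorphism $\Vcal_+ \to \Vcal$.

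The main obstacle is the careful polynomial-degree bookkeeping underlying the over-differentiation argument, particularly the verification that the symmetrization operator $S$ preserves total polynomial degree and that both forward and backward lattice derivatives contribute the same degree reduction when applied to the polynomials $b_{m'}^{(a)}$, so that the degree-counting bound holds uniformly.
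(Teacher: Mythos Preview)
Your proposal is correct and follows essentially the same approach as the paper's proof. The paper also computes the matrix $B_{m',m}=\pair{\hat P_{m'}(X),f_m^{(a)}}_0$ (the transpose of your $t_{m',m}$) and shows it is triangular with $|X|$ on the diagonal; the only cosmetic difference is that the paper invokes the prepared identities \eqref{e:forlocplus}--\eqref{e:forRcal2} of Lemma~\ref{lem:dualbasis} where you argue directly via polynomial degree, and your explicit dimension count $\dim\Vcal\le|\mathfrak{v}_+|=\dim\Vcal_+$ makes slightly more visible a step the paper compresses into one sentence.
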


\begin{proof}
Fix $V_+ = \sum_{m \in \mathfrak{v}_+}\alpha_m M_{m,a}
\in \Vcal_{+}(\{a\})$; then $\alpha_m = \pair{V_{+,a},f_{m}^{(a)}}_0$
by \eqref{e:dualbasis}.
Let $\hat{P}_m = \hat{P}(M_m)$.
We want to show that there is a unique
$V  = \sum_{m'\in \mathfrak{v}_+} \beta_{m'} \hat P_{m'} \in \Vcal$ such
that
\begin{equation}
    \alpha_m =
    \sum_{m'\in \mathfrak{v}_+}\beta_{m'}
    \pair{\hat P_{m'}(X),f_m^{(a)}}_0
    =
    \sum_{m'\in \mathfrak{v}_+}\beta_{m'}
    B_{m',m},
\end{equation}
where $B_{m',m}=\pair{\hat P_{m'}(X),f_m^{(a)}}_0$.  Let $\hat
Q_{m'}=\hat{P}_{m'}-M_{m'}$.  According to Definition~\ref{def:Vcal},
$\hat Q_{m'} \in \Pcal_t + \Rcal_1$ for some $t > [M_{m'}]$.  By
definition, elements of $\Rcal_1(X)$ annihilate test functions in
pairings.  With \eqref{e:forlocplus}--\eqref{e:forRcal2} below, this
implies that, for $[M_{m'}]\ge [M_m]$,
\begin{equation}
    B_{m',m} =
    \pair{M_{m'}(X),f_m^{(a)}}_0
    +
    \pair{\hat Q_{m'}(X),f_m^{(a)}}_0
    =
    |X|\delta_{m',m} + 0 = \delta_{m',m}.
\end{equation}
Thus the matrix $B$ is triangular, with $|X|$ on the diagonal, and
hence $B^{-1}$ exists.  Then the row vector $\beta$ is given in terms
of the row vector $\alpha$ by $\beta = \alpha B^{-1}$, and this
solution is unique.  Since $\Vcal_+$ and $\Vcal$ have the same
finite dimension, the map $V_{+} \mapsto V$ defines an isomorphism
between these two spaces.
\end{proof}

The following commutative diagram illustrates the construction of
$\LTsym_X$ in the next proof:
\setlength{\unitlength}{.8mm}
\begin{center}
\begin{picture}(120,40)
\put(38,30){$\Ncal$}
\put(48,32){\vector(1,0){30}}
\put(57,36){$\LTsym_X$}
\put(80,30){$\Vcal(X)$}
\put(60,0){$\Vcal_{+}(\{a\})$}
\put(42,28){\vector(1,-1){21}}
\put(39,11){$\LTsym_{+,a}$}
\put(78,28){\vector(-1,-2){11}}
\put(79,14){$ \LTsym_{+,a}=\mu_{X,a}^{-1}$}
\end{picture}
\end{center}

\begin{proof}[Proof of Proposition~\ref{prop:LTsymexists}] (i)
\emph{Existence of $V \in \Vcal$}.  Given $a$ in $X$, let $\mu_{X,a}:
\Vcal_+(\{a\}) \to \Vcal(X)$ denote the map which associates the
polynomial $V(X)$ to $V_{+,a}$ in Lemma~\ref{lem:newLT}.  Let $V (X) =
(\mu_{X,a} \circ \LTsym_{+,a}) F$.  By \eqref{e:Locid} and
Lemma~\ref{lem:newLT}, for all $g \in \Phipol$,
\begin{equation}
    \pair{V (X),g}_0
    = \pair{\LTsym_{+,a} V (X),g}_0
    = \pair{\LTsym_{+,a} \mu_{X,a} \LTsym_{+,a} F,g}_0
    = \pair{\LTsym_{+,a} F,g}_0
    = \pair{F,g}_0.
\end{equation}
This establishes \eqref{e:LT2pair}.

\smallskip\noindent
(ii) \emph{Uniqueness}. Given two polynomials in $\Vcal$ that satisfy
\eqref{e:LT2pair}, let $P$ be their difference. Then $P$ is a
polynomial in $\Vcal$ such that, for all $g \in \Phipol$ and $a \in
X$,
\begin{equation}
    0
    = \pair{P (X),g}_{0}
    = \pair{\LTsym_{+,a} P (X),g}_{0}
,
\end{equation}
where we used \eqref{e:Locid}.  By \eqref{e:dualbasis}, $\LTsym_{+,a} P
(X)=0$ is zero as an element of $\Vcal_{+} (\{a \})$.  By
Lemma~\ref{lem:newLT}, $P=0$. This proves uniqueness.

\smallskip\noindent
(iii)
\emph{Independence of coordinate and coordinate patch}.
Recall the definition of $F \in \Ncal_X$ above
Proposition~\ref{prop:LTsymexists}. Suppose there are two coordinate
patches $\Lambda',\Lambda''$ with corresponding coordinates $z',z''$
that imply $F \in \Ncal_X$. Then there exists $V'$ such that
\eqref{e:LT2pair} holds for all $g \in \Phipol[\Lambda ']$ and $V''$
such that \eqref{e:LT2pair} holds for all $g \in \Phipol[\Lambda
'']$. In particular, $V'$ and $V''$ satisfy \eqref{e:LT2pair} for all
$g \in \Phipol[\Lambda ' \cap \Lambda '']$.  Since $\Lambda ' \cap
\Lambda ''$ with either of the coordinates $z',z''$ is also a valid
choice of coordinate patch that contains $X$, the uniqueness part (ii)
with coordinate patch $\Lambda ' \cap \Lambda ''$ implies $V'=V''$. So
the polynomial $V$ does not depend on the choice of $\Lambda'$
implicit in the requirement $F \in \Ncal_X$.

\smallskip\noindent
(iv) \emph{Duality}.
For $n \in \mathfrak{v}_+$, let $c_n$ be the
vector $(c_n)_{n'} = B^{-1}_{n,n'}$, where $B$ is the matrix
in the proof of Lemma~\ref{lem:newLT}.  It follows from that proof
that the pairing of
$\sum_{n'}(c_n)_{n'} \hat P_{n'}(X)$ with $f_m^{(a)}$ is
$\delta_{n,m}$.
Thus the basis $(c_n)_{n \in \mathfrak{v}_+}$
is dual to the basis $(f_m^{(a)})_{m \in \mathfrak{v}_+}$ of
$\Phipol$.
This completes the proof of
Proposition~\ref{prop:LTsymexists}.
\end{proof}

It follows from (i) and (ii) above that, for any $a \in X$,
\begin{equation}
\label{e:LTsymXdef}
    \LTsym_X F = (\mu_{X,a} \circ \LTsym_{+,a}) F,
\end{equation}

\subsection{Proof of norm estimates for \texorpdfstring{$\LTsym$}{loc}}
\label{sec:LTnormestimates}

We now prove
Propositions~\ref{prop:Locbd}, \ref{prop:LTKbound} and
\ref{prop:LTsymXYbd},
using the following definition which we recall from
\cite[\eqref{norm-e:PhiXdef}]{BS-rg-norm}. Given $X \subset
\Lambda$ and a test function $g \in \Phi$, we define
\begin{equation}
\label{e:PhiXdef}
    \|g\|_{\Phi(X)}
    =
    \inf \{ \|g -f\|_{\Phi} :
    \text{$f_{z} = 0$ if  all components of $z$ lie in $X$}\}.
\end{equation}
Let $f$ be as in \eqref{e:PhiXdef}. By definition, if $F \in \Ncal(X)$
then $\pair{F,g}_\phi = \pair{F,g - f}_\phi$.  Hence $ |\pair{F,g
}_{\phi}| \le \|F\|_{T_{\phi}}\, \|g - f\|_{\Phi }$, and by taking the
infimum over $f$ we obtain
\begin{equation}
\label{e:FXbd}
    |\pair{F,g}_\phi| \le  \|F\|_{T_{\phi}}\, \|g \|_{\Phi (X)}
    \quad \quad F\in \Ncal(X).
\end{equation}

\begin{proof}[Proof of Propositions~\ref{prop:Locbd} and
\ref{prop:LTsymXYbd}.]  We use the notation in the proof of
Lemma~\ref{lem:newLT}.  By definition, $\LTsym_{+,a}F = \sum_{m' \in
\mathfrak{v}_+} \alpha_{m'} M_{m',a}$ with $\alpha_{m'} =
\pair{F,f_{m'}^{(a)}}_0$.  Therefore, by \eqref{e:LTsymXdef} and the
formula $\beta = \alpha B^{-1}$ of the proof of Lemma~\ref{lem:newLT},
\begin{equation}
    \LTsym_X F = \sum_{m \in \mathfrak{v}_+} \beta_m \hat P_m(X)
    =
    \sum_{m,m' \in \mathfrak{v}_+} \pair{F,f_{m'}^{(a)}}_0
    B_{m',m}^{-1} \hat P_m(X).
\end{equation}
By Definition~\ref{def:LTsymXY}, this implies that
\begin{equation}
    \LTsym_{X,Y} F = \sum_{m \in \mathfrak{v}_+} \beta_m \hat P_m(Y)
    =
    \sum_{m,m' \in \mathfrak{v}_+} \pair{F,f_{m'}^{(a)}}_0
    B_{m',m}^{-1} \hat P_m(Y).
\end{equation}
Hence, writing $A=|X|^{-1}B$, and estimating the norm of $\hat P_m(Y)=
\sum_{y\in Y} \hat P_{m,y}$ by the triangle inequality, we obtain
\begin{align}
    \|\LTsym_{X,Y} F \|_{T_0}
    &\le
    \sum_{m,m' \in \mathfrak{v}_+} |\pair{F,f_{m'}^{(a)}}_0|\,
    |B_{m',m}^{-1}|\, \|\hat P_m(Y)\|_{T_0}
    \nnb & \le
    \frac{|Y|}{|X|}
    \sum_{m,m' \in \mathfrak{v}_+} |\pair{F,f_{m'}^{(a)}}_0|\,
    |A_{m',m}^{-1}|\, \|\hat P_{m,0}\|_{T_0}
    \nnb & \le
    \|F\|_{T_0}
    \frac{|Y|}{|X|}
    \sum_{m,m' \in \mathfrak{v}_+} \|f_{m'}^{(a)}\|_{\Phi(U)}
    \,
    |A_{m',m}^{-1}|\, \|\hat P_{m,0}\|_{T_0},
\end{align}
where we used \eqref{e:FXbd} in the last inequality.

It is shown in Lemmas~\ref{lem:Pmbdpf} and \ref{lem:TayX} that
\begin{equation}
\label{e:fPbds}
    \|\hat P_{m,0}\|_{T_0} \le R^{-|\alpha(m)|_1}\h^{m}
    ,
    \quad   \quad
    \|f_{m'}^{(a)}\|_{\Phi(U)} \le \bar C \h^{-m'} R^{|\alpha(m')|_1},
\end{equation}
where $\h^m$ denotes the product of $\h_{i_k}$ over the components
$(i_k,\alpha_k)$ of $m$.  It therefore suffices to show that
\begin{equation}
\label{e:Amm}
    |A_{m',m}^{-1}|
    \le
    \bar C \h^{m'} R^{-|\alpha(m')|_1}
    R^{|\alpha(m)|_1}\h^{-m}.
\end{equation}
The matrix elements $A_{m',m}$ can be computed using the formula
\begin{equation}
\label{e:Ainv}
    A_{m',m}^{-1} = \left( I + (A- I) \right)^{-1}
    = \sum_{j=0}^{|\mathfrak{v}_+|-1}(-1)^j (A- I)^j,
\end{equation}
where we have used the fact that the upper triangular matrix
$A- I$ with zero diagonal is nilpotent.  Consequently,
$A_{m',m}^{-1}$ is bounded by a sum of products of factors of the
form
\begin{equation}
    |X|^{-1}|\pair{\hat P_{m'}(X),f_m^{(a)}}_0|
    \le \|\hat P_{m',0}\|_{T_0} \|f_m^{(a)}\|_{\Phi(\hat X)},
\end{equation}
where $\hat X$ is a polymer which extends $X$ in a minimal way to
ensure that $P_{m'}(X) \in \Ncal(\hat X)$ for all $m'\in
\mathfrak{v}_+$. The extension is present because the discrete derivatives in
$P_{m'}$ cause $P_{m'}(X)$ to depend on points near the boundary, but
outside $X$. Now repeated application of \eqref{e:fPbds} gives rise to
a telescoping product in which the powers of $R$ and $\h$ exactly
cancel, leading to an upper bound
\begin{equation}
    \|\LTsym_{X,Y} F \|_{T_0}
    \le
    \bar C \|F\|_{T_0}.
\end{equation}
This proves Proposition~\ref{prop:LTsymXYbd},
and the special case $Y=X$ then gives Proposition~\ref{prop:Locbd}.
\end{proof}

For the proof of Proposition~\ref{prop:LTKbound}, we need some
preliminaries.  For $X$ contained in a coordinate patch $\Lambda'$,
let $\Phipol(X) \subset \Phi$ denote the set of test functions whose
restriction to every argument in $X$ agrees with the restriction of an
element of $\Phipol$.  This is not the same as $\Phipol[\Lambda']$
defined previously.)  Let
\begin{equation}
\lbeq{Phipolperpdef}
    \Phipol^{\perp} (X) = \{G\in \Ncal (X) :
    \pair{G,f}_{0}=0 \; \text{for all $f\in \Phipol(X)$}\}.
\end{equation}
We claim that $\Phipol^{\perp} (X)$ is an ideal in $\Ncal (X)$, namely that
\begin{equation}
    \label{e:Rcal-ideal}
    \pair{FG,f}_{0}=0
    \;\;
    \text{for all $F\in \Ncal (X)$, $G \in \Phipol^{\perp} (X)$, $f \in \Phipol(X)$}
.
\end{equation}

To prove \refeq{Rcal-ideal}, it suffices to consider test functions
$f\in \Phipol(X)$ which vanish except on sequences $z= (z_{1},\dots
,z_{p (z)})$ in $\vec\Lambdabold^{*}$ with $p (z)$ fixed equal to some
positive integer $n$.  Likewise, we can assume that $f_{z}=0$ unless
the component species $i (z_{1}),\dots ,i (z_{n})$ have specified
values.  These restrictions are sufficient because such test functions
span $\Phipol(X)$.  For such test functions, it follows from
\cite[\eqref{norm-e:Fdagf}]{BS-rg-norm} that $\pair{FG,f}_\phi =
\pair{G,F^\dagger f}_\phi$, where, for some constants $c_{z'}$,
\begin{align}
\label{e:Fdagf}
    (F^{\dagger}f)_{z''}
    &=
    \sum_{z'} c_{z'}F_{z'} \tilde f^{(z')}_{z''}
    \quad \text{with} \quad
     \tilde f^{(z')}_{z''} =     \sum_{z \in z' \diamond z''} f_z
.
\end{align}
For each fixed $z'$, the test function $\tilde f^{(z')}$ is an element
of $\Phipol(X)$, and hence $\pair{G,\tilde f^{(z')}}_0=0$.  Then
\eqref{e:Rcal-ideal} follows from \eqref{e:Fdagf} and the linearity of
the pairing.

We  define, on $\Phi$, the semi-norm
\begin{equation}
\label{e:Phitilnorm}
    \| g \|_{\tilde{\Phi}  (X)}
=
    \inf \{ \| g -f\|_{\Phi } : f \in \Phipol (X)\}
.
\end{equation}

\begin{lemma}
\label{lem:testfndecomp} Let $\epsilon >0$, $X\subset \Lambda'$, and
$g \in \Phi$.  Then there exists a decomposition $g=f+h$ with $f \in
\Phipol (X)$, $\|g\|_{\tilde{\Phi} (X)} \le \|h\|_{\Phi} \le
(1+\epsilon )\|g\|_{\tilde{\Phi} (X)}$ and $\|f \|_{\Phi} \le
(2+\epsilon) \|g\|_{\Phi}$.
\end{lemma}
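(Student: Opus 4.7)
The plan is to use the definition of the infimum in the semi-norm $\|\cdot\|_{\tilde\Phi(X)}$ to pick $f$ near-optimally, set $h = g-f$, and then derive all three bounds by combining the definition with a triangle inequality and the trivial observation that $0 \in \Phipol(X)$.

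First I would note that since $\Phipol(X)$ contains the zero function, we have $\|g\|_{\tilde\Phi(X)} \le \|g\|_\Phi$ for every $g \in \Phi$. Then, by the definition \eqref{e:Phitilnorm} of $\|g\|_{\tilde\Phi(X)}$ as an infimum, for any $\epsilon > 0$ there exists $f \in \Phipol(X)$ with $\|g-f\|_\Phi \le (1+\epsilon)\|g\|_{\tilde\Phi(X)}$. I would define $h = g - f$, so that $g = f + h$ with $f \in \Phipol(X)$ by construction.

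The upper bound $\|h\|_\Phi \le (1+\epsilon)\|g\|_{\tilde\Phi(X)}$ is then immediate from the choice of $f$. The lower bound $\|g\|_{\tilde\Phi(X)} \le \|h\|_\Phi$ follows because $f \in \Phipol(X)$, so the pair $(g,f)$ is an admissible candidate in the infimum \eqref{e:Phitilnorm}, meaning $\|g\|_{\tilde\Phi(X)} \le \|g-f\|_\Phi = \|h\|_\Phi$.

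For the remaining bound on $\|f\|_\Phi$, I would apply the triangle inequality $\|f\|_\Phi = \|g - h\|_\Phi \le \|g\|_\Phi + \|h\|_\Phi$ and then combine the upper bound on $\|h\|_\Phi$ with the observation $\|g\|_{\tilde\Phi(X)} \le \|g\|_\Phi$ noted above to conclude $\|h\|_\Phi \le (1+\epsilon)\|g\|_\Phi$, hence $\|f\|_\Phi \le (2+\epsilon)\|g\|_\Phi$. There is no serious obstacle here: the only subtle point is to recognize that $0 \in \Phipol(X)$ (true because the zero function trivially agrees with the zero element of $\Phipol$ on any restriction), which is what makes $\|g\|_{\tilde\Phi(X)}$ controlled by $\|g\|_\Phi$ and hence converts the $\tilde\Phi(X)$ bound on $\|h\|_\Phi$ into a $\Phi$ bound suitable for the triangle-inequality step.
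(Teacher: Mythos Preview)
Your proof is correct and follows essentially the same approach as the paper: choose $f \in \Phipol(X)$ near-optimally from the infimum defining $\|g\|_{\tilde\Phi(X)}$, set $h=g-f$, and bound $\|f\|_\Phi$ via the triangle inequality together with $\|g\|_{\tilde\Phi(X)} \le \|g\|_\Phi$. The paper's version is simply terser, omitting the explicit mention of $0 \in \Phipol(X)$ and the verification of the lower bound on $\|h\|_\Phi$.
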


\begin{proof}
By \eqref{e:Phitilnorm}, we can choose $f \in \Phipol (X)$ so that
$h=g-f$ obeys $\|g\|_{\tilde{\Phi}(X)} \leq \|h \|_{\Phi} \le
(1+\epsilon)\|g\|_{\tilde{\Phi}(X)}$, and then $\|f \|_{\Phi}\le
\|h\|_{\Phi} + \|g\|_{\Phi} \le (2+\epsilon) \|g\|_{\Phi}$.
\end{proof}

\begin{proof}[Proof of Proposition~\ref{prop:LTKbound}.]  Let $R=L^j$.
We write $c$ for a generic constant and $\bar c$ for a generic
constant that depends on $R^{-1}{\rm diam}(X)$.  Let $F \in \Ncal(X)$
and $A<p_\Ncal$.  We first apply
\cite[Proposition~\ref{norm-prop:Tphi-bound}]{BS-rg-norm} to obtain
\begin{align}
    \| F \|_{T_{\phi}'}
&\le
    \left(1 + \|\phi\|_{\Phi'}\right)^{A+1}
    \left[
    \|F\|_{T_{0}'}
     +
     \rho^{(A+1)} \sup_{0\leq t\leq 1} \|F\|_{T_{t\phi}} \right]
,
\end{align}
where, due to our choice of norm, $\rho^{(A+1)}\le c
L^{-(A+1)[\varphi_{\rm min}]}$.  To estimate $\|F\|_{T_{0}'}$, given a
test function $g$, we choose $f\in \Phipol(X)$ as in
Lemma~\ref{lem:testfndecomp}, and obtain
\begin{align}
\label{e:Fgf}
    \left| \pair{F,g}_{0}  \right|
&\le
    \left| \pair{F,f}_{0}\right| +
    \left| \pair{F,g-f}_{0}  \right|
.
\end{align}
Now we set $F=F_1(1-\LTsym_Y)F_2$.  By \eqref{e:LT2} and
\eqref{e:Phipolperpdef}, $(1-\LTsym_Y)F_2 \in \Phipol^{\perp} (X)$.
By \eqref{e:Rcal-ideal}, this implies that $F \in \Phipol^{\perp}
(X)$, so the first term on the right-hand side of \eqref{e:Fgf} is
zero.  For the second term, we use
\begin{align}
    \left| \pair{F,g-f}_{0}  \right|
&\le
    \|F\|_{T_{0} } \|g-f\|_{\Phi}
\le
    \|F\|_{T_{0}} (1+\epsilon)\|g\|_{\tilde{\Phi}}
\le
    \|F\|_{T_{0}} (1+\epsilon) \bar c
    L^{-d_+ '}\|g\|_{\Phi'}
,
\end{align}
where the final inequality is a consequence of Lemma~\ref{lem:phij}.
After taking the supremum over $g \in B (\Phi')$, followed by the
infimum over $\epsilon >0$, we obtain $ \|F\|_{T_{0} '}\le \bar
c\,
L^{-d_+'} \|F\|_{T_{0} }$, and hence
\begin{equation}
\label{e:cru1}
    \| F \|_{T_{\phi}'}
\le
    \left(1 + \|\phi\|_{\Phi'}\right)^{A+1}
    \bar c
    \,\left(
    L^{-d_+ '}
     +
    L^{-(A+1)[\varphi_{\rm min}]} \right)
    \sup_{0\leq t\leq 1} \|F\|_{T_{t\phi}}
.
\end{equation}

Next, we apply the triangle inequality and the product property of the
$T_\phi$ semi-norm to obtain
\begin{align}
    \|F \|_{T_{t\phi}}
&\le
    \|F_1F_2\|_{T_{t\phi}} +
    \|
    F_1
    \|_{T_{t\phi}}
    \|
    \LTsym_Y F_2
    \|_{T_{t\phi}}
.
\end{align}
Since $\LTsym_Y F_2 \in \Vcal$, it is a polynomial of dimension at most
$d_+$, and hence of degree at most
$d_+/[\varphi_{\rm min}]$. It follows from \cite[Proposition~\ref{norm-prop:T0K}]{BS-rg-norm} that
$\|\LTsym_Y F_2 \|_{T_{t\phi}}\le (1+\|\phi\|_\Phi)^{d_+/[\varphi_{\rm min}]}\|\LTsym_Y F_2 \|_{T_{0}}$.
With Proposition~\ref{prop:Locbd}, this gives
\begin{align}
\label{e:cru2}
    \|F \|_{T_{t\phi}}
&\le
    \|F_1F_2\|_{T_{t\phi}} +
    \bar C' (1+\|\phi\|_\Phi)^{d_+/[\varphi_{\rm min}]}
    \|
    F_1
    \|_{T_{t\phi}}
    \|
    F_2
    \|_{T_{0}}
.
\end{align}
Since $\|\phi\|_{\Phi} \leq cL^{-[\varphi_{\rm min}]} \|\phi\|_{\Phi'}
\leq c  \|\phi\|_{\Phi'}$
due to our choice of norm, this gives
\begin{align}
\label{e:cru3}
    \|F \|_{T_{t\phi}}
&\le
    \|F_1F_2\|_{T_{t\phi}} +
    \bar c (1+\|\phi\|_{\Phi'})^{d_+/[\varphi_{\rm min}]}
    \|
    F_1
    \|_{T_{t\phi}}
    \|
    F_2
    \|_{T_{0}}
.
\end{align}
Substitution of \eqref{e:cru3} into \eqref{e:cru1} completes the proof.
\end{proof}

\subsection{The polynomials \texorpdfstring{$P(M)$}{P (M)}}
\label{sec:PM}

We now prove the claim which guaranteed existence of the
polynomials $\hat P$ of Definition~\ref{def:Vcal}.  These polynomials
were used to define the
$\Sigma$-invariant subspace $\Vcal$ of $\Pcal$.

\begin{lemma}
\label{lem:Pm-exists} For any $M \in \Mcal_{+}$, the polynomial $P=P(M)$
of \eqref{e:Pm-def} obeys: (i) $P (M)$ is
$\Sigma_{\text{axes}}$-covariant, (ii) $M-P (M) \in \Pcal_{t} +
\Rcal_{1}$ for some $t>[M]$, and (iii) $P (\Theta M)=\Theta P (M)$
for $\Theta \in \Sigma_{+}$.
\end{lemma}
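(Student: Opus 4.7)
My plan has three separate arguments, one for each condition of the lemma.

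For (i), I would simply substitute $\Xi = \Theta'\Theta$ in the defining sum \eqref{e:Pm-def}. This substitution is a bijection of $\Sigma_{\text{axes}}$, and combined with the fact noted just after \eqref{e:Pm-def} that $\lambda(\cdot,M)$ is a homomorphism $\Sigma_{\text{axes}}\to\{-1,1\}$ (so that $\lambda((\Theta')^{-1},M)=\lambda(\Theta',M)$), it yields
\begin{equation*}
\Theta' P(M) = |\Sigma_{\text{axes}}|^{-1}\!\!\sum_{\Theta\in\Sigma_{\text{axes}}} \!\lambda(\Theta,M)\,(\Theta'\Theta)M = \lambda(\Theta',M)\,P(M).
\end{equation*}
This gives $\Sigma_{\text{axes}}$-covariance with $\lambda(\cdot,P(M))=\lambda(\cdot,M)$.

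For (iii), I would use that $\Sigma_{\text{axes}}$ is a normal subgroup of $\Sigma$, so that conjugation by $\Theta\in\Sigma_{+}$ is a bijection of $\Sigma_{\text{axes}}$. Because elements of $\Sigma_{+}$ permute $\{e_i\}$ and $\{-e_i\}$ separately, one has $\Theta(-e)=-\Theta(e)$; consequently a direction $e$ is reversed by $\Theta'\in\Sigma_{\text{axes}}$ if and only if $\Theta(e)$ is reversed by $\Theta\Theta'\Theta^{-1}$. Since $\Theta M$ is obtained from $M$ by the relabeling $e\mapsto\Theta(e)$ on each derivative, the count of reversed derivatives is preserved, giving $\lambda(\Theta\Theta'\Theta^{-1},\Theta M)=\lambda(\Theta',M)$. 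Substituting $\Xi=\Theta\Theta'\Theta^{-1}$ in the definition of $\Theta P(M)$ then identifies the result with $P(\Theta M)$.

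The main content is (ii), and the key observation is that the identity \eqref{e:Vcal-relation} rewrites as $\nabla^{-e}=-\nabla^e-\nabla^{-e}\nabla^e$ modulo $\Rcal_{1}$. Since $M\in\Mcal_{+}$ has only forward derivatives, $\Theta M$ for $\Theta\in\Sigma_{\text{axes}}$ differs from $M$ only in that certain $\nabla^{e_i}$ have become $\nabla^{-e_i}$, in a number of positions whose parity is encoded in $\lambda(\Theta,M)$. Replacing each such $\nabla^{-e_i}$ by $-\nabla^{e_i}$ inside the product $\Theta M$ produces exactly $\lambda(\Theta,M)M$, with an error consisting of correction terms that each carry one additional derivative factor $\nabla^{-e_i}\nabla^{e_i}$, hence lie in $\Pcal_{[M]+1}$, plus elements of $\Rcal_{1}$ arising from embedding the generating relation into the ambient monomial. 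Thus $\Theta M-\lambda(\Theta,M)M\in\Pcal_{[M]+1}+\Rcal_{1}$, and averaging over $\Theta$ (using $\lambda(\Theta,M)^{2}=1$) gives $M-P(M)\in\Pcal_{t}+\Rcal_{1}$ with $t=[M]+1>[M]$.

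The main obstacle I anticipate is the bookkeeping in (ii): making precise that, when the substitution $\nabla^{-e}\mapsto-\nabla^e$ is carried out inside a product of several factors, the accumulated corrections genuinely land in $\Pcal_{[M]+1}+\Rcal_{1}$. This requires using that $\Rcal_{1}$ is stable under left and right multiplication by monomials, which follows from its definition via the recursive substitution procedure. Once this closure property is recorded, a one-at-a-time substitution in the factors of $\Theta M$ suffices; no induction on the number of reversed derivatives is needed beyond this.
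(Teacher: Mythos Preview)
Your proposal is correct and follows essentially the same approach as the paper's own proof: part~(i) via the substitution $\Theta\mapsto\Theta'^{-1}\Theta$ and the homomorphism property of $\lambda$, part~(iii) via the conjugation change of variables together with the observation that the count of reversed derivatives is preserved under the relabelling $e\mapsto\Theta(e)$, and part~(ii) via repeated use of \eqref{e:Vcal-relation} to restore backward derivatives to forward ones at the cost of a sign and a higher-dimension correction. Your explicit remark that $\Rcal_{1}$ is stable under multiplication by monomials (which follows because the reduction procedure acts factor-by-factor and is therefore multiplicative) is a detail the paper leaves implicit, but otherwise the two arguments are the same.
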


\begin{proof}
\emph{(i)} For $\Theta ' \in \Sigma_{\text{axes}}$,
\begin{align}
    \Theta 'P
&=
    |\Sigma_{\text{axes}}|^{-1}
    \sum_{\Theta  \in \Sigma_{\text{axes}}}
    \lambda (\Theta ,M) \Theta '\Theta M
\nnb
&=
    |\Sigma_{\text{axes}}|^{-1}
    \sum_{\Theta  \in \Sigma_{\text{axes}}}
    \lambda (\Theta '^{-1}\Theta ,M) \Theta M
\nnb
&=
    \lambda (\Theta '^{-1},M)
    |\Sigma_{\text{axes}}|^{-1}
    \sum_{\Theta  \in \Sigma_{\text{axes}}}
    \lambda (\Theta ,M) \Theta M
\nnb
&=
    \lambda (\Theta '^{-1},M) P
=
    \lambda (\Theta ',M) P
,
\end{align}
as required.
\\
\emph{(ii)}
Given $M\in \Mcal_{+}$ and $\Theta \in \Sigma_{\text{axes}}$, the monomial
$\Theta M$ is equal to $M$ with derivatives switched from forward to
backward in each coordinate where $\Theta$ changes sign.  Any
derivative that was switched can be restored to its original direction
using \eqref{e:Vcal-relation}, modulo a term in $\Pcal_t +
\Rcal_{1}$.  The use of \eqref{e:Vcal-relation} introduces a sign
change for each restored derivative, with the effect that $M$ is equal
to $\lambda (\Theta ,M) \Theta M$ modulo $\Pcal_t$.  Therefore,
$M-P(M)$ is also in $\Pcal_{t} + \Rcal_{1}$.
\\
\emph{(iii)} Let $M \in \Mcal_{+}$, $\Theta '\in \Sigma_{+}$, and
$\Theta \in \Sigma_{\text{axes}}$.  Since $\Theta'^{-1}\Theta \Theta'
\in \Sigma_{\text{axes}}$, it makes sense to write $\lambda (\Theta
'^{-1}\Theta \Theta ',M)$.  Also, since the number of derivatives that
change direction in the transformation $M \mapsto \Theta '^{-1}\Theta
\Theta 'M$ is equal to the number that change direction in the
transformation $\Theta'M \mapsto \Theta\Theta'M$, it follows that
$\lambda (\Theta '^{-1}\Theta \Theta ',M)= \lambda (\Theta ,\Theta
'M)$.  Therefore, by the change of variables $\Theta \mapsto
\Theta '^{-1}\Theta \Theta '$ in the sum,
\begin{align}
    \Theta 'P (M)
&=
    |\Sigma_{\text{axes}}|^{-1}
    \sum_{\Theta  \in \Sigma_{\text{axes}}}
    \lambda (\Theta ,M) \Theta '\Theta M
\nnb
&=
    |\Sigma_{\text{axes}}|^{-1}
    \sum_{\Theta  \in \Sigma_{\text{axes}}}
    \lambda (\Theta '^{-1}\Theta \Theta ',M) \Theta  \Theta 'M
\nnb
&=
    |\Sigma_{\text{axes}}|^{-1}
    \sum_{\Theta  \in \Sigma_{\text{axes}}}
    \lambda (\Theta ,\Theta 'M) \Theta  (\Theta 'M)
=
    P (\Theta 'M)
,
\end{align}
and the proof is complete.
\end{proof}

\section{Lattice Taylor polynomials}
\label{sec:Taylor}

\subsection{Taylor polynomials}
\label{sec:Tay1}

Let $\Lambda'$ be a coordinate patch, and let $a \in \Lambda'$.
Recall the definition of the test functions $b_{m}^{(a)}$ in
\eqref{e:fmadef}, for $m \in \bar{\mathfrak{m}}_+$.  We now prove
Lemma~\ref{lem:Tay1}.

\begin{proof}[Proof of Lemma~\ref{lem:Tay1}]
(i)
To show that $p=\Tay_a g$ obeys the desired identity
$\nabla^m(g-p)|_{z=\vec a} =0$, it suffices to show that
\begin{equation}
    \label{e:Tay1a}
    \nabla^{m}b_{m',z}^{(a)}|_{z=\vec a}=\delta_{m,m'},
    \quad\quad
    m,m' \in \bar{\mathfrak{m}}_{+}
.
\end{equation}
To prove \eqref{e:Tay1a},
it suffices to consider one species and the 1-dimensional case, since
the derivatives and binomial coefficients all factor.  For
non-negative integers $k,n$, it suffices to show that $\nabla_+^n
\binom{x-a}{k}|_{x=a} = \delta_{n,k}$, where we write $\nabla_+$ to
emphasise that this is a forward derivative.  We use induction on $n$,
noting first that when $n =0$ we have $\nabla_+^n
\binom{x-a}{k}|_{x=a} = \binom{0}{k} = \delta_{0,k} =\delta_{n, k}$.
To advance the induction, we assume that the identity holds for $n -1$
(for all $k\in \N_0$).  Since $\nabla_+ \binom{x-a}{k} =
\binom{x-a+1}{k}-\binom{x-a}{k} = \binom{x-a}{k -1}$ for all $x \in
\Z$, the induction hypothesis gives, as required,
\begin{equation}
\label{e:Dbin}
    \left. \nabla_+^n \binom{x-a}{k} \right|_{x=a}
    = \left. \nabla_+^{n-1} \binom{x-a}{k -1} \right|_{x=a}
    = \delta_{n -1,k -1} = \delta_{n,k}.
\end{equation}
For the uniqueness, suppose $q \in \Phipol$ obeys
$\nabla^m(g-q)|_{z=\vec a} =0$.  Since $\{b_{m}^{(a)}, m \in
\bar{\mathfrak{v}}_{+} \}$ is a basis of $\Phipol$, there are
constants $c_m$ such that $q=\sum_{m \in \bar{\mathfrak{v}}_+}c_m
b_{m}^{(a)}$.  By our assumption about $q$ and \refeq{Tay1a},
$\nabla^m g_{\vec a}=\nabla^m q_{\vec a}=c_m$, so $q=\Tay_a g$ as
required.
\\
(ii) It follows from \eqref{e:Tay-def} that the Taylor expansion of
$g$ with permuted arguments is obtained by permuting the arguments of
$\Tay_{a} g$, and from this it follows that $\Tay_{a}$ commutes
with $S$.
\\
(iii) This follows from the uniqueness in (i).
\end{proof}

We also make note of a simple fact that
we use below.  Suppose
the components of $m\in \bar{\mathfrak{m}}_+$ are $(i_{k},\alpha_{k})$
and the components of
$m'\in \bar{\mathfrak{m}}_+$ are $(i_{k},\alpha_{k}')$ where $k \in \{1,\dots ,p\}$ and
$\alpha_{k},\alpha_{k}' \in \Nbold_{0}^{d}$. We
say $\alpha_{k} \ge \alpha_{k}'$ if each component of $\alpha_{k}$ is
at least as large as the corresponding component of $\alpha_{k}'$. By
examining the proof of \eqref{e:Tay1a}, we find that
\begin{align}
    \label{e:Tay1}
    \nabla^{m}b_{m',z}^{(a)} & = 0
    \quad
    \text{if $\alpha_{k} > \alpha_{k}'$ for some $k = 1,\dots ,p$,}
    \\
    \label{e:nabb1}
    \nabla^{m}b_{m,z}^{(a)} & = 1.
\end{align}
In other
words, the condition $z = \vec{a}$ is not needed in these cases.

\subsection{Dual pairing}
\label{sec:dualpairing}

For $m \in \mathfrak{m}_+$ let $\vec\Sigma (m)$ be the set of
permutations of $1,\ldots, p (m)$ that fix the species when they act
on $m$ by permuting components, i.e., $\pi (i_{k},\alpha_{k}) =
(i_{\pi k},\alpha_{\pi k})$ with $i_{\pi k}= i_{k}$.   Let
$|\vec\Sigma (m) |$ be the order of this group.
There is also the subgroup
$\vec\Sigma_{0} (m)$ of permutations that fix $m$.  It has order
\begin{equation}
    |\vec\Sigma_{0} (m)|
    =
    \prod_{(i,\alpha)}n_{(i,\alpha)}(m)!
,
\end{equation}
with $n_{(i,\alpha)}$ as defined below \refeq{Mmex}: $n_{(i,\alpha)}$ denotes
the number of times that $(i,\alpha)$ appears as a component of $m$.

For example, for
$m=((1,\alpha_1),(1,\alpha_1),(1,\alpha_2),(1,\alpha_2),
(1,\alpha_2),(2,\alpha_3))$ with $\alpha_1 < \alpha_2$, we have
$|\vec\Sigma (m) |=5!1!$
and $|\vec\Sigma_{0} (m)|= 2!3!1!$. For this choice of $m$,
\begin{equation}
\label{e:bmex}
    b_{m,z}^{(a)}
    =
    \binom{z_1-a}{\alpha_1}\binom{z_2-a}{\alpha_1}
    \binom{z_3-a}{\alpha_2}\binom{z_4-a}{\alpha_2}
    \binom{z_5-a}{\alpha_2}\binom{z_6-a}{\alpha_3}.
\end{equation}
For this, or for any other
$m \in \bar{\mathfrak{m}}_{+}$, a permutation $\pi$
in $\vec\Sigma (m)$ has an action on $b_{m,z}^{(a)}$ either by mapping it to
$b_{\pi m,z}^{(a)}$ or to $b_{m,\pi z}^{(a)}$, where
$\pi (z_{1},\dots ,z_{p}) = (z_{\pi 1},\dots ,z_{\pi
p})$. The two actions are related by $b_{\pi m,z}^{(a)} =
b_{m,\pi^{-1}z}^{(a)}$.  Therefore $\vec\Sigma_{0} (m)$ is the set of
permutations that leave $b_{m,z}^{(a)}$ invariant.

By the definition of the symmetry operator $S: \Phi \to \Phi$ in
\cite[Definition~\ref{norm-def:S}]{BS-rg-norm}, for $m \in
\mathfrak{m}_{+}$,
\begin{equation}
    \label{e:Sb}
    (Sb_{m}^{(a)})_z
    =
    |\vec\Sigma (m) |^{-1}\sum_{\sigma \in \vec\Sigma (m)}
    \sgn(\sigma_f) b_{m,\sigma z}^{(a)},
\end{equation}
where
$\sigma_f$ denotes the restriction of $\sigma$ to the fermion
components of $z$, and $\sgn(\sigma_f)$ denotes the sign of this
permutation. In \eqref{e:therealfmadef}, we defined
\begin{equation}
\label{e:therealfmadefbis}
    f_{m}^{(a)}
    =
    N_m Sb_{m}^{(a)},
\end{equation}
and we now specify that
\begin{equation}
\label{e:Nmdef}
    N_m
    =
    \frac{|\vec\Sigma (m)|}
    {|\vec\Sigma_{0} (m) |}.
\end{equation}

We are now in a position to prove Lemma~\ref{lem:Tay2}.
Lemma~\ref{lem:Tay2}(i) is subsumed by Lemma~\ref{lem:dualbasis} and
is proved in \refeq{dualbasispf}.

\begin{proof}[Proof of Lemma~\ref{lem:Tay2}(ii)] Let $g \in \Phipol$.
By Lemma~\ref{lem:Tay1}(ii), $\Tay_{a} S = \Tay_{a} S^2 = S
\Tay_{a}S$.  With \eqref{e:Tay-def} and \refeq{Mmg}, this gives
\begin{align}
    \Tay_a (S g)
&=
    S\sum_{m \in \bar{\mathfrak{v}}_{+}}
    \pair{M_{m,a},g}_0 b_{m}^{(a)}.
\end{align}
Since $\vec\Sigma_0(m)$ is the set of permutations that leave
$m$ invariant, the sum over $\bar{\mathfrak{v}}_+$ can be written as
a sum over $\mathfrak{v}_+$, as
\begin{equation}
    S\sum_{m \in \bar{\mathfrak{v}}_{+}}
    \pair{M_{m,a},g}_0 b_{m}^{(a)}
    =
    S\sum_{m \in \mathfrak{v}_{+}}\frac{1}{|\vec\Sigma_0(m)|}
    \sum_{\sigma \in \vec\Sigma(m)}
    \pair{M_{\sigma m,a},g}_0 b_{\sigma m}^{(a)}.
\end{equation}
The anticommutativity of the fermions implies that
$\pair{M_{\sigma m,a},g}_0=\sgn(\sigma_f)\pair{M_{ m,a},g}_0$.
Since $b_{\sigma m,z}^{(a)} = b_{m,\sigma^{-1}z}^{(a)}$, it follows
from \refeq{Sb}--\refeq{Nmdef} and the fact that
$Sf_m^{(a)}=f_m^{(a)}$ that
\begin{equation}
    \Tay_a (S g)
    =
    S\sum_{m \in \mathfrak{v}_{+}}
    \pair{M_{m,a},g}_0 N_mSb_{ m}^{(a)}
    =
    S\sum_{m \in \mathfrak{v}_{+}}
    \pair{M_{m,a},g}_0 f_{ m}^{(a)}
    =
    \sum_{m \in \mathfrak{v}_{+}}
    \pair{M_{m,a},g}_0 f_{ m}^{(a)},
\end{equation}
and the proof is complete.
\end{proof}

The next lemma provides statements concerning the duality of field
monomials and test functions, for use in Section~\ref{sec:LTsym}.
In particular, \eqref{e:dualbasispf} gives Lemma~\ref{lem:Tay2}(i).

\begin{lemma}
\label{lem:dualbasis}
The following identities hold, for $a, x \in \Lambda'$:
\begin{align}
\label{e:dualbasispf}
    \pair{M_{m,a},f_{m'}^{(a)}}_{0} & = \delta_{m,m'}
    \quad\quad  m,m' \in \mathfrak{m}_+,
    \\
\label{e:forlocplus}
    \pair{M_{m,x},f_{m'}^{(a)}}_{0} & =
    \delta_{m,m'}
    \quad\quad
    m,m' \in \mathfrak{m}_+ \; \text{with $[M_m] = [M_{m'}]$},
\\
\label{e:forRcal2}
    \pair{M_{m,x}, f_{m'}^{(a)}}_0 & =0
    \quad\quad\quad\quad m \in \mathfrak{m} ,m' \in \mathfrak{m}_+\;\;
    \text{with $[M_m]>[M_{m'}]$}
.
\end{align}
\end{lemma}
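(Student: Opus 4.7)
The plan is to reduce all three identities to a single direct computation by combining \eqref{e:Mmg} with the definition $f_{m'}^{(a)} = N_{m'} S b_{m'}^{(a)}$ and the idempotence $S^2 = S$. Using \eqref{e:Sb} and the identity $b_{m', \sigma z}^{(a)} = b_{\sigma^{-1} m', z}^{(a)}$, each pairing becomes
\[
   \pair{M_{m,x}, f_{m'}^{(a)}}_0
   =
   \frac{N_{m'}}{|\vec\Sigma(m')|}
   \sum_{\sigma \in \vec\Sigma(m')}
   \sgn(\sigma_f)\,
   \nabla^m b_{\sigma^{-1} m', z}^{(a)}\big|_{z=\vec x}.
\]
Since $b_n^{(a)}$ is supported on sequences whose species match those of $n$, nonvanishing requires the species multiset of $m$ to equal that of $m'$; I assume this throughout.

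For \eqref{e:forRcal2}, I show each summand vanishes by a polynomial-degree count. Writing $n = \sigma^{-1} m'$ with components $(j_k, \beta_k)$, the test function $b_{n,z}^{(a)} = \prod_{k,i} \binom{z_{k,i} - a_i}{\beta_k(e_i)}$ is a polynomial of degree $\beta_k(e_i)$ in $z_{k,i}$, and both $\nabla^{+e_i}$ and $\nabla^{-e_i}$ strictly reduce that degree; the summand is therefore identically zero unless $\alpha_k(e_i) + \alpha_k(-e_i) \le \beta_k(e_i)$ for every $k$ and every axis $i$. Summing these inequalities yields $\sum_k |\alpha_k|_1 \le \sum_k |\alpha'_k|_1$, and species matching gives $\sum_k[\varphi_{i_k}] = \sum_k[\varphi_{i'_k}]$. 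Together these force $[M_m] \le [M_{m'}]$, contradicting the hypothesis.

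For \eqref{e:forlocplus} the same inequality must be saturated, forcing $\alpha_k = \beta_k$ for each $k$ and hence $\sigma^{-1} m' = m$. Since $m, m' \in \mathfrak{m}_+$ are both lex-ordered, this is possible only when $m = m'$ and $\sigma \in \vec\Sigma_0(m)$; condition (ii) of $\mathfrak{m}_+$ makes the fermionic components of such $m$ pairwise distinct, so every such $\sigma$ fixes the fermionic arguments and $\sgn(\sigma_f) = 1$. Then \eqref{e:nabb1} evaluates each surviving summand to $1$ at arbitrary $z$, and \eqref{e:Nmdef} is chosen so that $N_m |\vec\Sigma(m)|^{-1} |\vec\Sigma_0(m)| = 1$, producing $\delta_{m,m'}$. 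For \eqref{e:dualbasispf} the same setup applies with $x=a$, except that \eqref{e:Tay1a} replaces \eqref{e:nabb1}: each summand reads $\sgn(\sigma_f)\delta_{m,\sigma^{-1} m'}$, and the Kronecker delta forces $\sigma^{-1} m' = m$ without any degree hypothesis, so the conclusion follows by the identical normalization argument.

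The main obstacle is the permutation/species/sign bookkeeping: organising the sum over $\sigma$ so that the combinatorial factors $N_{m'}$ and $|\vec\Sigma(m')|$ cancel exactly against $|\vec\Sigma_0(m)|$. A secondary subtlety in \eqref{e:forRcal2} is that one must verify that backward lattice derivatives reduce polynomial degree on equal footing with forward ones, so that the full $|\alpha_k|_1$ (not merely the forward part) appears in the degree inequality against $|\alpha'_k|_1$.
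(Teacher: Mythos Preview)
Your proof is correct and follows essentially the same route as the paper's: reduce via \eqref{e:Mmg} and $S^2=S$ to a sum over $\sigma\in\vec\Sigma(m')$ of $\sgn(\sigma_f)\,\nabla^{m}b_{\sigma m',z}^{(a)}|_{z=\vec x}$, then argue that the only surviving terms are those with $\sigma m'=m$, and use the lex-ordering of $\mathfrak{m}_+$ together with $n_{(i,\alpha)}\le 1$ for fermionic species to collapse this to $\delta_{m,m'}$ with the correct normalisation.

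The one genuine (if minor) difference is in how you prove the vanishing. The paper invokes \eqref{e:Tay1} directly and uses an ordering argument: if the ordered tuples $(\alpha_k)$ and $(\alpha'_{\sigma k})$ have equal $\ell^1$-sums but differ, some coordinate must strictly exceed. You instead count polynomial degrees in each coordinate $z_{k,i}$, obtaining the pointwise inequality $\alpha_k(e_i)+\alpha_k(-e_i)\le\beta_k(e_i)$ and summing. The two are equivalent for \eqref{e:forlocplus}, but your phrasing is cleaner for \eqref{e:forRcal2}: since there $m\in\mathfrak{m}$ may carry backward derivatives, the paper's appeal to \eqref{e:Tay1} (stated only for $m\in\bar{\mathfrak{m}}_+$) is slightly informal, whereas your observation that $\nabla^{-e_i}$ also strictly lowers the degree of $\binom{z_{k,i}-a_i}{\beta_k(e_i)}$ covers this case directly. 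This is exactly the ``secondary subtlety'' you flagged, and your resolution is the right one.
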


\begin{proof}
We begin with a preliminary observation.
Let $m \in \mathfrak{m}$ and $m' \in \mathfrak{m}_+$.
It follows from \eqref{e:Mmg}, the identity $S^2=S$, and
\eqref{e:Sb}--\eqref{e:Nmdef} that
\begin{align}
\label{e:Mmgbis}
    \pair{M_{m,x}, f_{m'}^{(a)}}_0
    &= \nabla^{m} (Sf_{m'}^{(a)})|_{z={\vec x}}
    = |\vec\Sigma_{0} (m') |^{-1}
    \sum_{\sigma \in \vec\Sigma (m')}
    \sgn(\sigma_f)
    \nabla^{m}  b_{m',\sigma z}^{(a)}|_{z={\vec x}}
    \nnb
    &= |\vec\Sigma_{0} (m') |^{-1}
    \sum_{\sigma \in \vec\Sigma (m')}
    \sgn(\sigma_f)
    \nabla^{m}  b_{\sigma m',z}^{(a)}|_{z={\vec x}}
,
\end{align}
where for the last step we recall that
$b_{\pi m,z}^{(a)} =
b_{m,\pi^{-1}z}^{(a)}$.

It is now easy to prove \eqref{e:dualbasispf}.
Indeed, by \eqref{e:Tay1a} with $x=a$,
$\nabla^{m}  b_{\sigma m',z}^{(a)}|_{z={\vec a}}
=\delta_{m,\sigma m'}$.
For $m,m' \in \mathfrak{m}_{+}$, $m=\sigma m'$ holds if and only if
$m=m'$ and $\sigma \in \vec\Sigma_{0} (m')$. Since
$n_{(i,\alpha)}=1$ for fermion species $i$, we have
$\sgn(\sigma_f)=1$ for permutations that fix $m$, and
\eqref{e:dualbasispf} follows.

For the proof of \refeq{forlocplus}--\refeq{forRcal2}, we first
observe that by the definition of the zero-field pairing,
$M_{m,x}$ has nonzero pairing only with test functions
with the same number of variables as there are
fields in $M_{m,x}$.  Therefore, we may assume
that the number $p(m)$ of fields in $M_{m,x}$
is equal to the number
$p (m')$ of variables in $f_{m'}^{(a)}$.
Furthermore, the pairing only replaces the fields in $M_{m,x}$
with test functions whose arguments match the species of the fields.
Thus, for $m,m' \in \mathfrak{m}$,
the pairing $\pair{M_{m,x}, f_{m'}^{(a)}}_0$ is zero unless
$p(m)=p(m')$ and the
components $(i_{k},\alpha_{k})$ of $m$ and
the components $(i_{k}',\alpha_{k}')$ of $m'$
obey $i_k=i_k'$ for all $k=1,\ldots,p(m)$.
For \eqref{e:forlocplus},
the condition that $[M_m] = [M_{m'}]$ therefore becomes the condition that
$|\alpha|_1 = |\alpha'|_1$.  Consider first the case where $\alpha_{k}
\not = \alpha_{k}'$ for some $k$. Then, for some $k$,
$\alpha_{k}>\alpha_{k}'$. Since $m,m'$ are elements of
$\mathfrak{m}_{+}$ both the $\alpha_{k}$ and the
$\alpha_{k}'$ are ordered within each species.  Therefore it
is also true that for any permutation $\sigma \in \vec\Sigma (m')$
there is some $k$ such that $\alpha_{k}>\alpha_{\sigma
k}'$. By \eqref{e:Tay1}, in this case $\nabla^{m} b_{\sigma
m',z}^{(a)}=0$, so the right-hand side of \eqref{e:Mmgbis} is
zero. We are now reduced to the case $\alpha_{k}=\alpha_{k'}$ for all
$k$.  This means that $m=m'$ and we complete the proof of
\eqref{e:forlocplus} as in the proof of \refeq{dualbasispf},
applying \refeq{nabb1} rather than \refeq{Tay1a}.

Finally, we prove \eqref{e:forRcal2}.
As in the proof of \refeq{forlocplus},
the condition that $[M_m]>[M_{m'}]$ implies that for any $\sigma$
there is some $k$ such that $\alpha_{k}>\alpha_{\sigma k}'$. By
\eqref{e:Tay1}, this implies that $\nabla^{m} b_{\sigma m',z}^{(a)}=0$,
and hence the right-hand
side of \eqref{e:Mmgbis} is zero, and
\eqref{e:forRcal2} is proved.
\end{proof}

The following lemma is used in the proof of Proposition~\ref{prop:Locbd}.

\begin{lemma}
\label{lem:Pmbdpf}
For $m \in \mathfrak{v}_+$, let $\hat P_{m,x} = \hat P(M_{m,x})$,
with $\hat P$ given by Definition~\ref{def:Vcal}.  Then
there is a constant $c$ such that
\begin{equation}
\label{e:Pbdpf}
    \|\hat{P}_{m,x}\|_{T_0} \le  R^{-|\alpha(m)|_1} \h^{m},
\end{equation}
where $\h^m$ denotes the product of $\h_{i_k}$ over the components $(i_k,\alpha_k)$ of
$m$.
\end{lemma}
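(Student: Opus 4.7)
The plan is to reduce the bound on $\|\hat{P}_{m,x}\|_{T_0}$ to the bound on the $T_0$ semi-norm of a single monomial $M_{n,x}$, since by Definition~\ref{def:Vcal} the polynomial $\hat P(M_m)$ is a finite linear combination $\sum_j c_j M_{m_j}$ of monomials all sharing the degree and dimension of $M_m$.

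First, I would establish the bound $\|M_{n,x}\|_{T_0} \le R^{-|\alpha(n)|_1}\h^n$ for any $n \in \mathfrak{m}$ with $[M_n] \le d_+$. By the pairing identity \eqref{e:Mmg}, $\pair{M_{n,x},g}_0 = \nabla^n(Sg)_{\vec x}$. The symmetry operator $S$ does not increase the $\Phi$-norm (from its definition in \cite{BS-rg-norm}), so any $g \in B(\Phi)$ gives $\|Sg\|_\Phi \le 1$; the norm definition \eqref{e:Phignorm} then bounds each factor $\nabla^{\alpha_k}$ acting on the $k$-th argument of $Sg$ by $R^{-|\alpha_k|_1}\h_{i_k}$, provided $|\alpha_k|_\infty \le p_\Phi$. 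This last condition holds under the standing assumption on $p_\Phi$, since $|\alpha_k|_\infty \le |\alpha_k|_1 \le [M_n] \le d_+$. Taking the product over $k = 1,\ldots,p(n)$ yields
\begin{equation}
|\pair{M_{n,x},g}_0| \le R^{-|\alpha(n)|_1}\h^n,
\end{equation}
and hence $\|M_{n,x}\|_{T_0} \le R^{-|\alpha(n)|_1}\h^n$ after taking the supremum over $g \in B(\Phi)$.

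Second, I would verify that every monomial $M_{m_j}$ appearing in $\hat P(M_m)$ satisfies $\h^{m_j} = \h^m$ and $|\alpha(m_j)|_1 = |\alpha(m)|_1$. For the canonical choice \eqref{e:Pm-def}, the action of $\Theta \in \Sigma_{\text{axes}}$ on $M_m$ only flips signs of derivative directions within each axis, which preserves both the species indices $i_k$ and the total order $|\alpha_k|_1$ of every factor. For admissible alternative choices obtained via the relation \eqref{e:Vcal-relation} (as in Example~\ref{ex:Pm-nonunique}), the clause of Definition~\ref{def:Vcal} requiring $\hat P(M_m)$ to consist of monomials of the same degree and dimension as $M_m$ forbids the lower-dimensional terms produced by applying \eqref{e:Vcal-relation}, so only sign/direction reshufflings of derivatives enter. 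Applying the triangle inequality then gives
\begin{equation}
\|\hat P_{m,x}\|_{T_0} \le \Big(\sum_j |c_j|\Big)\, R^{-|\alpha(m)|_1}\h^m,
\end{equation}
and since the coefficients in \eqref{e:Pm-def} sum in absolute value to $|\Sigma_{\text{axes}}|^{-1}\cdot|\Sigma_{\text{axes}}|=1$ (and remain bounded by a universal constant depending only on $d_+$ and the number of field species for any other admissible choice), the stated bound follows. The only mild obstacle is this last bookkeeping step, which amounts to checking that the finitely many combinatorial possibilities for $\hat P$ all yield uniformly bounded coefficients; this is routine given the degree/dimension constraints of Definition~\ref{def:Vcal}.
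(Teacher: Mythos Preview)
Your proposal is correct and follows essentially the same approach as the paper: reduce to a single monomial via Definition~\ref{def:Vcal}, then use \eqref{e:Mmg} together with the $\Phi(\h)$ norm definition \eqref{e:Phignorm} and the contractivity of $S$ to bound the pairing. The paper's proof is terser---it simply asserts that it suffices to bound a single monomial $\tilde M_m$ of the same degree and dimension---whereas you spell out why each monomial in $\hat P(M_m)$ has the same $\h^{m_j}$ and $|\alpha(m_j)|_1$ (namely, because the admissible constructions via \eqref{e:Pm-def} and \eqref{e:Vcal-relation} preserve species and total derivative order), and you also track the coefficient sum to justify the implicit constant $c$ in the statement; this extra bookkeeping is a harmless elaboration of the same argument.
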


\begin{proof}
By Definition~\ref{def:Vcal}, $\hat P_m$ is a sum of monomials of the
same degree and dimension as $M_m$, so it suffices to prove
\eqref{e:Pbdpf} for a single such monomial $\tilde M_m$.  But for any
test function $g$, by \refeq{Mmg} and by the definition of the
$\Phi(\h)$ norm in \eqref{e:Phignorm}, we have
\begin{equation}
    |\pair{\tilde M_{m,x}, g}_0|
    =
    |\nabla^{\tilde\alpha(m)}(Sg)_z|_{z=\vec x}|
    \le
    R^{-|\alpha(m)|_1} \h^{m} \|Sg\|_{\Phi(\h)}
    \le
    R^{-|\alpha(m)|_1} \h^{m} \|g\|_{\Phi(\h)},
\end{equation}
as required.
\end{proof}

\subsection{Norm estimates and Taylor approximation}
\label{sec:Taybd}

The main results in this section are Lemmas~\ref{lem:TayX} and
\ref{lem:phij}, which are used in the proofs of
Propositions~\ref{prop:Locbd} and \ref{prop:LTKbound} respectively.
Lemma~\ref{lem:gX} is used to prove Lemmas~\ref{lem:TayX} and
\ref{lem:phij}, and Lemma~\ref{lem:taylor-theorem} is used to prove
Lemma~\ref{lem:phij}.  Lemmas~\ref{lem:gX}--\ref{lem:phij} are in
essence statements about test functions and Taylor approximation on
the infinite lattice $\Zd$, which we can apply to the torus $\Lambda$
by judicious restriction to a coordinate patch.  The correspondence
between $\Zd$ and a coordinate patch is possible since norms of test
functions are preserved by a coordinate $z$ as defined at the
beginning of Section~\ref{sec:oploc}, since nearest-neighbours and
hence derivatives are preserved by $z$.  Thus we work primarily in
this section on $\Zd$, with commentary in the statements of
Lemmas~\ref{lem:TayX} and \ref{lem:phij} concerning applicability on
the torus $\Lambda$.

Let $j<N$ and let $X$ be a $j$-polymer in $\Lambda$ or $\Zd$,
depending on context.  Recall that we defined an enlargement $X_+$ of
$X$ by doubling its blocks, above the statement of
Proposition~\ref{prop:Locbd}.  We extend this notion, as follows.  For
real $t>0$ and a nonempty $j$-polymer $X\subset \Zd$, let
$X_{t}\subset \Zd$ be the smallest subset that contains $X$ and all
points in $\Zd$ that are within distance $tL^{j}$ of $X$.  In
particular, $X_+=X_{1/2}$.  Below, we frequently write $R=L^j$.

The following lemma shows that, given $t>0$, it is possible to
estimate the $\Phi(X)$ norm of a test function $g$ using the values of
$g$ only in $X_{2t}$.  In its statement, we write $z \in {\mathbf
X}_{2t}$ to mean that each component $z_i$ of $z$ lies in $X_{2t}$.
Recall from \eqref{e:PhiXdef} that the $\Phi(X)$ norm is defined in
terms of the $\Phi=\Phi(\h)$ norm of \eqref{e:Phignorm} by
\begin{align}
\label{e:PhiXnormdef}
    \|g\|_{\Phi(X)}
    &=
    \inf \{ \|g -f\|_{\Phi} :
    \text{$f_{z} = 0$
    if  all components of $z$ lie in $X$}\},
\end{align}
where we can interpret $g$ as a test function either on $\Zd$ or
on $\Lambda$, depending on context.

\begin{lemma}
\label{lem:gX} Let $t>0$, $p \ge 1$, $j<N$, and let $X \subset \Zd$ be
a $j$-polymer.  There is a function $\chi_t$ of $p$ variables, which
takes the value $1$ if each variable lies in $X$, and the value $0$ if
any variable lies in $\Zd\setminus X_{2t}$, and a positive constant
$c_0$, independent of p, $X$ and $R=L^j$, such that for any
test function $g$ on $\Zd$ which depends on $p$ variables,
\begin{equation}
    \|g\|_{\Phi(X)}
    \le \|g\chi_t\|_{\Phi (\Zd)}
    \le
    \left((1+c_0t^{-1})\h^{-1}\right)^p
    \sup_{z \in {\mathbf X}_{2t}} \sup_{|\beta|_\infty \le p_\Phi}
    |\nabla^\beta_R g_{z}|.
\end{equation}
\end{lemma}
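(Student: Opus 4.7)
My plan is to take $\chi_t$ to be a tensor product of per-variable cutoffs, namely $\chi_t(z_1,\dots,z_p) = \prod_{i=1}^p \eta_t(z_i)$, where $\eta_t : \Zd \to [0,1]$ is a single-variable cutoff that equals $1$ on $X$, vanishes outside $X_{2t}$, and obeys the derivative bound $|\nabla_R^{\gamma}\eta_t(x)| \le C_\gamma\, t^{-|\gamma|_1}$ for all $|\gamma|_\infty \le p_\Phi$. An explicit choice is $\eta_t(x) = \rho(d_\infty(x,X)/(tR))$, where $\rho:\R\to[0,1]$ is a fixed smooth function with $\rho \equiv 1$ on $(-\infty,1]$ and $\rho \equiv 0$ on $[2,\infty)$. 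Since $d_\infty(\cdot,X)$ is $1$-Lipschitz on $\Zd$, iterating the discrete chain rule expresses $\nabla^\gamma\eta_t(x)$ as a linear combination of values of $\rho^{(k)}$ evaluated at points of bandwidth $(tR)^{-1}$, so that multiplying by $R^{|\gamma|_1}$ cancels the lattice scale and yields the stated bound on $\nabla_R^\gamma\eta_t$. The product structure then ensures $\chi_t = 1$ when every component of $z$ lies in $X$ and $\chi_t = 0$ when any component leaves $X_{2t}$.

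The first inequality is immediate from the definition \eqref{e:PhiXdef}: setting $f = g(1-\chi_t)$, the identity $\chi_t(z) = 1$ on $\mathbf X$ forces $f_z = 0$ whenever all components of $z$ lie in $X$, so $f$ is an admissible competitor, yielding $\|g\|_{\Phi(X)} \le \|g-f\|_\Phi = \|g\chi_t\|_{\Phi(\Zd)}$.

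For the second inequality, I use the discrete Leibniz rule variable-by-variable. Writing $\alpha = (\alpha_1,\dots,\alpha_p)$ with $|\alpha_i|_\infty \le p_\Phi$,
\begin{equation*}
\nabla_R^\alpha(g\chi_t)(z) = \sum_{\beta \le \alpha} \binom{\alpha}{\beta}\, (\tau_\beta \nabla_R^\beta g)(z) \prod_{i=1}^p \nabla_R^{\alpha_i - \beta_i}\eta_t(z_i),
\end{equation*}
where $\tau_\beta$ denotes the harmless forward shifts coming from the discrete product rule, and all derivatives factor across variables because of the product form of $\chi_t$. The support condition on $\chi_t$ restricts the supremum in $\|g\chi_t\|_{\Phi(\Zd)}$ to $z \in \mathbf X_{2t}$. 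Substituting the bound $|\nabla_R^{\alpha_i-\beta_i}\eta_t(z_i)| \le C t^{-|\alpha_i-\beta_i|_1}$ and factorising the sum over $\beta$ across variables gives a per-variable contribution bounded by $1 + \sum_{0 \ne \gamma_i \le \alpha_i} \binom{\alpha_i}{\gamma_i} C\, t^{-|\gamma_i|_1} \le 1 + c_0 t^{-1}$, after absorbing the finitely many (since $|\alpha_i|_\infty \le p_\Phi$) combinatorial constants into a constant $c_0 = c_0(d,p_\Phi)$ independent of $p$, $X$, and $R$. Combined with the factor $\h^{-z} = \prod_i \h_{k(z_i)}^{-1} \le \h^{-p}$ from the $\Phi$-norm definition, the $p$ per-variable factors assemble into the claimed bound $((1+c_0 t^{-1})\h^{-1})^p$ times the supremum of $|\nabla_R^\beta g_z|$ over $z \in \mathbf X_{2t}$ and $|\beta|_\infty \le p_\Phi$.

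The main technical obstacle is the derivative estimate $|\nabla_R^\gamma \eta_t| \le C_\gamma t^{-|\gamma|_1}$: one must verify that the discrete chain rule applied to the composition of a smooth bump with a Lipschitz distance function genuinely loses one factor of $(tR)^{-1}$ per derivative, uniformly in $R$, $t$, and $X$. Once this estimate is in hand the Leibniz expansion is routine, and the key simplification that $c_0$ absorbs the combinatorial and $p_\Phi$-dependent factors into a single per-variable prefactor $(1 + c_0 t^{-1})$ completes the argument.
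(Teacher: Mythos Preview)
Your overall plan matches the paper's: build $\chi_t$ as a tensor product of a single-variable cutoff, get the first inequality from the definition of $\|\cdot\|_{\Phi(X)}$, and get the second by a Leibniz expansion. The difference, and the place where your argument breaks, is the construction of the single-variable cutoff.

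Your choice $\eta_t(x)=\rho\bigl(d_\infty(x,X)/(tR)\bigr)$ does \emph{not} obey $|\nabla_R^{\gamma}\eta_t|\le C_\gamma t^{-|\gamma|_1}$ for $|\gamma|_1\ge 2$, because $d_\infty(\cdot,X)$ is only Lipschitz and there is no discrete chain rule that turns a $k$-fold lattice difference of $\rho\circ u$ into $\rho^{(k)}$ times $k$ first differences of $u$. Concretely, take $d=2$ and $X=[0,R-1]^2$ a single $j$-block. For $x=(R-1+n,\,R-1+n)$ with $n\ge 1$ one has $d_\infty(x,X)=n$, $d_\infty(x+e_1,X)=d_\infty(x+e_2,X)=d_\infty(x+e_1+e_2,X)=n+1$, so
\[
\nabla^{e_1}\nabla^{e_2}\eta_t(x)
=\rho\!\Bigl(\tfrac{n+1}{tR}\Bigr)-2\rho\!\Bigl(\tfrac{n+1}{tR}\Bigr)+\rho\!\Bigl(\tfrac{n}{tR}\Bigr)
=\rho\!\Bigl(\tfrac{n}{tR}\Bigr)-\rho\!\Bigl(\tfrac{n+1}{tR}\Bigr),
\]
which is a \emph{first} difference of $\rho$ and hence of order $(tR)^{-1}$, not $(tR)^{-2}$. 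Thus $|\nabla_R^{e_1}\nabla_R^{e_2}\eta_t|\asymp R/t$ at such points (which lie in $X_{2t}$ when $n\sim tR$), and your constant $c_0$ would depend on $R$. The sentence ``iterating the discrete chain rule expresses $\nabla^\gamma\eta_t$ as a linear combination of values of $\rho^{(k)}$\ldots'' is exactly the step that fails.

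The paper avoids this by taking $\chi$ to be a mollification: it convolves the indicator of a continuum thickening $Y_t$ of $X$ with a smooth bump at scale $tR$, obtaining a genuinely $C^\infty$ function on $\Rbold^d$ whose \emph{continuum} derivatives obey $|\partial^\alpha\chi|\le c(\alpha)(tR)^{-|\alpha|_1}$; the lattice derivative bound then follows by the mean-value theorem. If you replace your $\eta_t$ by such a mollified indicator, the rest of your argument (product structure, first inequality from the infimum definition, Leibniz expansion for the second) goes through essentially as you wrote it.
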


\begin{proof}
By definition, $g$ is a function of finite sequences each of
whose components is
in a disjoint union ${\mathbf X}$ of copies of $X$, where the
copies label species (fermions, bosons, field and conjugate field).
We give the proof for the special case
${\mathbf X}=X$, so that $g$
is a function of $z=(z_{1},\dots ,z_{p})$ with $z_i\in \Zd$. The
general proof is a straightforward elaboration of the notation.

Let $t>0$. We first construct a $t$-dependent function $\chi:\Rd
\rightarrow [0,1]$ such that
\begin{equation}
    \label{e:chi-properties}
    \chi\rvert_{X} = 1,
\quad\quad
    \chi\rvert_{\Zd \setminus X_{2t}} = 0,
\quad\quad
    \big|\nabla_{R}^{\alpha}\chi\rvert_{\Zd}  \big|
\le
    c (\alpha) t^{-|\alpha |_{1}}
,
\end{equation}
where $\nabla_{R}^{\alpha}=R^{|\alpha|_1}\nabla^\alpha$, and where the
estimate holds for all multi-indices $\alpha$ and is uniform in $X$.
Let $Y_{t}$ be the subset of $\Rd$ obtained by taking the union over
lattice points in $X_{t}$ of closed unit cubes centred on lattice
points.  Let $\varphi$ be a smooth non-negative function on $\Rd$
supported inside a ball of radius one and normalised so that $\int
\varphi dx = 1$. For $a = tR$, let $\varphi_{a} (x) = a^{-d}\varphi
(a^{-1}x)$ and let $\chi (x) = \int_{Y_{t}} \varphi_{a}(x-y)\,dy$. Then
\begin{equation}
    0
\le
    \chi (x)
\le
    \int_{\Rd}
    \varphi_{a} (x-y)
    \,dy
=
    \int_{\Rd}
    \varphi (x-y) \,dy
=
    1
\end{equation}
as required. For
$x\in X \subset \Rd$, the distance between $x$ and the
complement of $Y_{t}$ is at least $a$ and therefore $\chi (x) =
\int_{Y_{t}} \varphi_{a} (x-y)\,dy = \int_{\Rd } \varphi
(x-y)\,dy=1$. Therefore $\chi\rvert_{X} = 1$ as required. For $x \not
\in X_{2t}$, in the definition of $\chi$, $x-y$ is not in the support
of $\varphi_{a}$ so $\chi (x)=0$ as required.  The partial derivative
$\chi^{(\alpha)}$ of $\chi$ of total order $|\alpha |_{1}$ obeys
\begin{align}
    \big| \chi^{(\alpha)} (x)  \big|
&\le
    a^{-|\alpha |_{1}}
    \int_{X_{t}}
    \big|\varphi^{(\alpha)} (\frac{x-y}{a})\big|
    a^{-d}\,dy
\nnb & \le
    a^{-|\alpha |_{1}}
    \int_{\Rd}
    \big|\varphi^{(\alpha)} (x-y)\big| \,dy
\le
    c(\alpha) a^{-|\alpha |_{1}}
.
\end{align}
By the mean-value theorem, the finite difference derivative
$\nabla^{\alpha} \chi\rvert_{\Zd}$ is bounded by the continuum
derivative which is less than $c (\alpha) a^{-|\alpha |_{1}}$. When we
convert $\nabla$ derivatives to $\nabla_{R}$ derivatives the factors
of $R$ convert this estimate to $c (\alpha) t^{-|\alpha |_{1}}$ as
claimed.  This establishes the last estimate in
\eqref{e:chi-properties} and concludes the construction of $\chi$.

We extend $\chi$ to a function on sequences: for a sequence $z
=(z_1,\ldots,z_p)$, we define $\chi_{t}(z) = \prod_{i=1}^{p}
\chi(z_i)$.  Since $g\chi_t$ agrees with $g$ when evaluated on
${\mathbf X}$, and is zero outside ${\mathbf X}_{2t}$, it follows from
the definition of the $\Phi(X)$ norm in \eqref{e:PhiXdef} that
\begin{align}
    \|g\|_{\Phi(X)} \le \|g\chi_t\|_{\Phi(\Zd)}
    & \leq
    \sup_{z\in {\mathbf X}_{2t}} \h^{-z} \sup_{|\beta|_\infty  \le p_\Phi}
    | \nabla_R^\beta   (g\chi_t )_{z}   |.
\end{align}
Recall the lattice product rule $\nabla_{e} (hf)= (T_{e}f) \nabla h +
h\nabla f$ for differentiating a product, where $T_{e}$ is translation
by the unit vector $e$.  When the derivatives in $\nabla_{R}^{\beta}
(g\chi_t)$ are expanded using the lattice product rule, one of the
terms is $\chi_t \nabla_{k}^{\beta}g$.  The remaining terms all
involve derivatives of $\chi_t$, at most $p_\Phi$ in each coordinate.
This leads to a number of terms that grows exponentially in $p$, so
that, as required,
\begin{equation}
    \sup_{|\beta|_\infty  \le p_\Phi} | \nabla_R^\beta   (g\chi_t )_{z}   |
    \le
    \big(1+O(t^{-1})\big)^p\sup_{|\beta|_\infty  \le p_\Phi} | \nabla_R^\beta   g_{z}   |.
\end{equation}
This completes the proof.
\end{proof}

\begin{lemma}
\label{lem:TayX} Let $j<N$, let $m \in \mathfrak{m}_+$, let $X$ be a
$j$-polymer in $\Zd$, and let $a \in X$.  There is a constant
$\bar{C}$, independent of $m$ but dependent on the diameter of
$R^{-1}X$, such that for the polynomial $f_m^{(a)}$ defined on all of
$\Zd$,
\begin{equation}
    \label{e:Tay-bound}
    \|f_m^{(a)}\|_{\Phi (X)} \le \bar{C} \h^{-m} R^{|\alpha(m)|_1}.
\end{equation}
The same inequality holds for $f_m^{(a)}$ as we have defined it on the
torus, provided $X_+$ lies in a coordinate patch.
\end{lemma}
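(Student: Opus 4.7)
The plan is to localize the $\Phi(X)$-semi-norm of $f_m^{(a)}$ to a bounded neighborhood of $X$ by applying Lemma \ref{lem:gX}, and then to estimate derivatives of the lattice binomial coefficients that appear in $f_m^{(a)}$ directly. Applying Lemma \ref{lem:gX} with a fixed $t>0$ (say $t=1$) yields a bound roughly of the form
\begin{equation*}
   \|f_m^{(a)}\|_{\Phi(X)} \le C^{p(m)}\,\h^{-m}\sup_{z\in {\mathbf X}_{2}}\sup_{|\beta|_\infty\le p_\Phi}\bigl|\nabla_R^\beta f_{m,z}^{(a)}\bigr|,
\end{equation*}
where the factor $\h^{-m}$ arises because the scalar $\h^{-1}$ appearing in Lemma \ref{lem:gX} must be read component-wise with the species prescribed by the components of $m$; the $p(m)$ factors of $\h_{i_k}^{-1}$ assemble into $\h^{-m}$.

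Next, I would use the representation $f_m^{(a)}=N_m\, S b_m^{(a)}$ from \eqref{e:therealfmadef}--\eqref{e:Nmdef} to reduce the problem to estimating the derivatives of $b_{m,z}^{(a)}=\prod_{k=1}^{p(m)}\binom{z_k-a}{\alpha_k}$. The symmetrizer $S$ is a sign-weighted average of permutations of arguments and therefore does not increase the sup-norm over ${\mathbf X}_2$, while $N_m$ is bounded uniformly over the finite set $\mathfrak{v}_+$. Applying the lattice product rule to $\nabla_R^\beta b_{m,z}^{(a)}$ together with the key identity $\nabla_+\binom{x-a}{\ell}=\binom{x-a}{\ell-1}$ from \eqref{e:Dbin} (and its analog for backward derivatives $\nabla^{-e}\binom{x-a}{\ell}=-\binom{x-1-a}{\ell-1}$), each $\nabla_R^\beta$ distributes across the product as a factor $R^{|\beta|_1}$ times a sum of products of binomial coefficients of reduced orders $|\alpha_k|_1-|\beta_k|_1$, with $\sum_k|\beta_k|_1=|\beta|_1$. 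On $X_2$ one has $|z_k-a|_\infty\le (D+2)R$ with $D=\mathrm{diam}(R^{-1}X)$, so each reduced binomial is bounded by $C(D+2)^{|\alpha_k|_1-|\beta_k|_1}R^{|\alpha_k|_1-|\beta_k|_1}$. Multiplying the factors together and cancelling the $R^{|\beta|_1}$ yields $|\nabla_R^\beta b_{m,z}^{(a)}|\le \bar C R^{|\alpha(m)|_1}$ with $\bar C$ depending on $D$ but not on $m$, using that $p(m)$, $|\alpha(m)|_1$, and $|\beta|_1$ are bounded uniformly over $\mathfrak{v}_+$.

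Combining the two estimates gives \eqref{e:Tay-bound}. For the torus version, the hypothesis that $X_+$ lies in a coordinate patch $\Lambda'$ provides a nearest-neighbor-preserving identification of $\Lambda'^{(1)}$ with a subset of $\Zd$ through the coordinate map $z$, and because $f_m^{(a)}$ on the torus vanishes outside sequences with arguments in $\Lambda'^{(1)}$, this identification preserves all the derivatives entering both the $\Phi(X)$-semi-norm and the sup-bound above, so the $\Zd$ estimate transfers directly. The main obstacle is the combinatorial bookkeeping for derivatives of the binomial product and the correct translation of the scalar $\h^{-1}$ in Lemma \ref{lem:gX} into the species-dependent product $\h^{-m}$; once the identity \eqref{e:Dbin} is in hand everything reduces to straightforward polynomial estimates.
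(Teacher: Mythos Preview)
Your approach is essentially the same as the paper's: apply Lemma~\ref{lem:gX} to localise the $\Phi(X)$ semi-norm, then estimate derivatives of the binomial product $b_m^{(a)}$ using the identity $\nabla_+\binom{x-a}{\ell}=\binom{x-a}{\ell-1}$, and finally invoke finiteness of $\mathfrak{v}_+$ for uniformity in $m$. The paper phrases the derivative estimate via direct factorisation into one-dimensional binomials rather than the product rule, but this is cosmetic.

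One small point: you apply Lemma~\ref{lem:gX} with $t=1$, giving a supremum over $\mathbf{X}_2$, whereas the torus hypothesis only guarantees that $X_+=X_{1/2}$ lies in a coordinate patch. To make the torus transfer work as you describe, you should take $t=\tfrac12$ (as the paper does) so that the localised region stays inside the patch where the coordinate identification is available.
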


\begin{proof}
For the case of $\Zd$, by the definition of $f_m^{(a)}$ in
\eqref{e:therealfmadefbis}, and by Lemma~\ref{lem:gX} with $t=\frac
12$, it suffices to show that for $z \in {\mathbf X}_+$ and for
$|\beta|_\infty \le p_\Phi$,
\begin{equation}
\label{e:nabf}
    |\nabla_R^\beta b_{m,z}^{(a)}|
    \leq
    \bar{c} R^{|\alpha|_1},
\end{equation}
where $\bar{c}$ depends on $m$ and $R^{-1} X$. Note that any
dependence on $p$ (from Lemma~\ref{lem:gX}) and $m$ is uniformly
bounded since the number of variables in bounded when
$m \in \mathfrak{m}_{+}$.

To prove \refeq{nabf}, we first note that if any component of $\beta$
exceeds the corresponding component of $\alpha = \alpha(m)$ then the
left-hand side of \eqref{e:nabf} is equal to zero as in the proof of
\eqref{e:forRcal2}.  Thus we may assume that each component of $\beta$
is at most the corresponding component of $\alpha$, and without loss
of generality we may consider the 1-dimensional case.  In this case,
for $j=j_-+j_+ \le k$, $|\nabla_-^{j_-}\nabla_+^{j_+} \binom{x-a}{k}|
= |\binom{x-a-j_-}{k-j}|$ and this is at most a multiple of $R^{k-j}$,
with the multiple dependent on the ratio of the diameter of $X$ to
$R$.  This proves \eqref{e:nabf} and completes the proof of
\eqref{e:Tay-bound} for $\Zd$.  There is no dependence of $\bar
C$ on $m \in \mathfrak{m}_+$, since $\mathfrak{m}_+$ is a finite set.

This then implies the extension to the torus, since derivatives of
$b_m^{(a)}$ are the same on a coordinate patch and its image rectangle
in $\Zd$.
\end{proof}

The following Taylor remainder estimate is used to prove
Lemma~\ref{lem:phij}, which plays an important role in the proof of
the crucial change of scale bound in Proposition~\ref{prop:LTKbound}.
For its statement, given $a\in \Z^d$, $p \in \N$, $z=(z_1,\ldots,
z_p)$ with $z_1,\ldots,z_p \in \Z^d$ and with $(z_i)_j \ge a_j$ for
all $i=1,\ldots, p$ and $j=1,\ldots,d$, and $t \in \N$, we define
$S_t(a,z) = \{ y =(y_1,\ldots,y_p): y_i\in \Z^d : a_j -t \le (y_i)_j
\le (z_i)_j\}$.  We make use of the map $\Tay_{a} :\Phi \to \Phipol$
given by \eqref{e:Tay-def}, interpreted as a map on test functions $g$
defined on $\Zd$.  The range of $\Tay_a$ involves polynomials in the
components of $z$ to maximal degree $s = d_+ - \sum_{k=1}^p
[\varphi_{i(z_k)}]$, where $i(z_k)$ denotes the field species
corresponding to the component $z_k$.  Also, given a test function
$g\in \Phi^{(p)}$, we write $M_g = \sup_{y \in S_{s}(a,z)}
\sup_{|\alpha|_\infty =s+1} |\nabla^\alpha g_y|$ where the
supremum over $\alpha$ is a supremum over only \emph{forward}
derivatives.

\begin{lemma}\label{lem:taylor-theorem}
For $a \in \Zd$, components of $z=(z_1,\ldots, z_p)$ in $\Zd$ with
$(z_i)_j \ge a_j$ for all $i,j$, and for $|\beta|_{1}=t\le s$ (forward
or backward derivatives), the remainder in the approximation of
$g=g_z$ by its Taylor polynomial obeys
\begin{equation}
\label{e:Tayrem2}
    |\nabla^\beta (g-\Tay_a g)_z|
    \le
    M_g \binom{|z-\vec{a}|_{1} }{s-t+1} ,
\end{equation}
with $M_g$ and $s$ as defined above.
\end{lemma}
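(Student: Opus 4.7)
The strategy is to reduce the multivariable remainder estimate to a one-dimensional lattice Taylor formula, and then to handle mixed forward and backward derivatives by shifting the evaluation point.

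First I would exploit the tensor-product structure of $b_{m,z}^{(a)} = \prod_k \binom{z_k - a}{\alpha_k}$. Because each binomial factor involves only one component of one argument, $\Tay_a$ can be written as the composition of univariate Taylor operators applied coordinate-by-coordinate and argument-by-argument. A standard telescoping decomposition then expresses $g-\Tay_a g$ as a sum of terms, each of which is a univariate lattice Taylor remainder in one coordinate of one argument, applied to a function that has already been Taylor-truncated in the earlier coordinates. Consequently, it suffices to establish the bound in one dimension; the multivariable version follows by iterating, using the standard product inequality $\binom{u_1}{k_1}\binom{u_2}{k_2} \le \binom{u_1+u_2}{k_1+k_2}$ to combine the coordinatewise binomial factors into the single $\binom{|z-\vec a|_1}{s-t+1}$ appearing in \eqref{e:Tayrem2}.

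Second, for the univariate case I would establish the Newton-type identity
\begin{equation*}
    f(x) - \sum_{k=0}^{s} \binom{x-a}{k} \nabla_+^k f(a)
    = \sum_{y=a}^{x-1} \binom{x-y-1}{s} \nabla_+^{s+1} f(y)
    \qquad (x \ge a)
\end{equation*}
by induction on $s$, using summation by parts and the identity $\nabla_+\binom{x-a}{k+1} = \binom{x-a}{k}$. Writing $R_s(x,a;f)$ for the right-hand side, the hockey-stick identity $\sum_{y=a}^{x-1}\binom{x-y-1}{s} = \binom{x-a}{s+1}$ immediately yields $|R_s(x,a;f)| \le M\binom{x-a}{s+1}$ with $M = \max_{a\le y\le x-1} |\nabla_+^{s+1} f(y)|$. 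This gives the base case $t=0$ of the univariate estimate.

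Third, I would track derivatives. A direct computation on the Newton formula shows $\nabla_+ R_s(\cdot,a;f) = R_{s-1}(\cdot,a;\nabla_+ f)$, so every forward derivative lowers the remainder order by one and raises the derivative of $f$ by one. For a backward derivative, the analogous computation (or the identity $\nabla_-=-T_{-e}\nabla_+$) gives $\nabla_- R_s(x,a;f) = -R_{s-1}(x-1,a;\nabla_+ f)$, at the cost of shifting the evaluation point by one unit in the negative direction. Iterating these two rules through a general $\beta$ with $|\beta|_1 = t \le s$ yields a remainder of order $s-t$ applied to $\nabla_+^t f$, evaluated at a point whose coordinates may have been shifted downward by up to $t$ units. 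Applying the base-case bound then produces $\binom{|x-a|_1}{s-t+1}$ times a supremum of $|\nabla_+^{s+1} f|$ over a range extending at most $s$ units below $a$, which is exactly the region $S_s(a,z)$ in the multivariable statement.

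The main obstacle is the bookkeeping under mixed forward/backward derivatives: one must verify that, after all shifts, the supremum is indeed over $S_s(a,z)$ (not something larger), and that the coordinatewise binomial estimates combine to the single binomial $\binom{|z-\vec a|_1}{s-t+1}$. Working one coordinate at a time and only consolidating binomials at the very end via the product inequality above avoids spurious combinatorial factors and keeps both the domain and the binomial degree exactly matched to the statement.
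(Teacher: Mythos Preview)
Your one–dimensional ingredients are correct and coincide with the paper's: your Newton remainder formula is exactly the paper's operator identity $T^{m}=\sum_{\alpha=0}^{s}\binom{m}{\alpha}D^{\alpha}+E$ rewritten with the kernel $\binom{x-y-1}{s}$, and your treatment of forward/backward derivatives (lower the remainder order, shift the point for $\nabla_{-}$, absorb the shift into $S_{s}(a,z)$) is precisely how the paper reduces to the case $t=0$.

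The gap is in your multivariable reduction. The operator $\Tay_a$ is the Taylor projection onto polynomials of \emph{total} degree at most $s$; it is \emph{not} the composition of univariate Taylor operators of degree $s$ in each coordinate (that composition produces polynomials of degree $\le s$ in every coordinate separately, a strictly larger space). So the sentence ``$\Tay_a$ can be written as the composition of univariate Taylor operators'' is false, and the naive telescope $\sum_k T_1\cdots T_{k-1}(1-T_k)$ bounds the wrong remainder. What is needed---and what the paper does---is an induction on the number of coordinates in which the univariate expansion in the new coordinate is carried only to degree $s-|\beta|_1$, where $|\beta|_1$ is the degree already spent on the earlier coordinates. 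The resulting error terms have the form $\binom{y}{\beta}\binom{z_p}{s-|\beta|_1+1}$ summed over $|\beta|_1\le s$, together with the inductive remainder $\binom{|y|_1}{s+1}$; these combine into the single $\binom{|z|_1}{s+1}$ via the Vandermonde identity (the paper phrases it as a counting argument). Your ``product inequality'' $\binom{u_1}{k_1}\binom{u_2}{k_2}\le\binom{u_1+u_2}{k_1+k_2}$ applied termwise would instead produce $\binom{|z|_1}{s+1}$ times the number of telescoping terms, which violates the constant-free bound claimed in the lemma. Once you replace the composition claim by the degree-linked induction and use Vandermonde to sum, your argument becomes the paper's.
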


\begin{proof}
The proof is by induction on the dimension of $z \in \Z^{dp}$
and does not depend on the grouping of these components of $z$ into
$\Zd$.  Therefore we give the proof for case $d=1$.
Also without loss of generality, we assume that $a=0$.  Let $f_z =
\Tay_a g_z = \Tay_0 g_z$.

We first show that it suffices to establish \eqref{e:Tayrem2} for the
case $|\beta|_{1}=t=0$, namely
\begin{equation}
\label{e:Tayrem}
|g_z-f_z|\le M_g \binom{|z|_{1}}{s+1},
\end{equation}
with the supremum defining $M$ taken over $S_0(z)$.  In fact, for the
case where $\beta$ involves only forward derivatives, $\nabla^\beta f$
is the degree $s-t$ Taylor polynomial for $\nabla^\beta g$, and it
follows from \eqref{e:Tayrem} that
\begin{equation}
\label{e:Tayrem1}
|\nabla^\beta (g-f)_z|\le M_{g }\binom{|z|_{1}  }{ s-t+1} ,
\end{equation}
which is better than \eqref{e:Tayrem2}.  To allow also backward
derivatives, we simply note that a single backward derivative is equal
in absolute value to a forward derivative at a point translated
backwards, and this translation is handled in our estimate by the
extension of $S_0(z)$ to $S_t(z)$ in the definition of $M_g$.

It remains to prove \eqref{e:Tayrem}.  The proof is by induction on
$p$ (with $s$ held fixed).
Consider first the case $p=1$.  For a function
$\phi$ on $\Z$, let $(T\phi)_x =\phi_{x+1}$ and let $D=T-I$.
For $m >0$, $T^{m} = I +
\sum_{n= 1}^m (T-I)T^{n-1}$.
Iteration of this formula $s$ times gives
\begin{align}
    T^{m}
&=
    I + \sum_{m\ge n_{1}\ge 1}D I +
    \sum_{m\ge n_{1}>n_{2}\ge 1}D^{2}T^{n_{2}-1}
=
    \dotsb
=
    \sum_{\alpha=0}^{s}\binom{m}{\alpha}D^{\alpha} + E,
\end{align}
where
\begin{equation}
    E
=
    \sum_{m\ge n_{1} > n_{2} > \dotsb > n_{s+1} \ge 1}
    D^{s+1}T^{n_{s+1}-1}.
\end{equation}
We apply this operator identity to $( T^{z_1}g)_0$ and obtain, for $p=1$,
\begin{equation}
    g_{z_1} = (T^{z_1}g)_0 = f_{z_1} + (Eg)_0.
\end{equation}
The remainder term obeys the estimate
\begin{align}
    |(Eg)_0|
&\le
    \sum_{m\ge n_{1} > n_{2} > \dotsb n_{s+1} \ge 1} \
    \sup_{x\in S_0 (z_1)}
    |D^{s+1}g_x |
    =
    \binom{m }{ s+1} \sup_{x\in S_0 (z_1)}
    |D^{s+1}g_x |
    .
\end{align}
This proves \eqref{e:Tayrem} for $p=1$.

To advance the induction, we assume that \eqref{e:Tayrem} holds for
$p-1$.  We write $y = (z_1,\ldots, z_{p-1})$ and
$z=(y, z_{p})$, and apply the case $p-1$ to $g$ with the
coordinate $z_{p}$ regarded as a parameter.  This gives
\begin{equation}
\label{e:Td}
    g_z = \sum_{|\beta|_{1} \leq s} \binom{y}{ \beta}
    D^{ \beta} g_{(  0, z_{p})} + \tilde E,
\end{equation}
where by the induction hypothesis $|\tilde E |
\leq M  \binom{|y|_{1}}{s+1}$.  We also apply the case $p=1$ to
obtain
\begin{equation}
\label{e:T1}
    D^{ \beta} g_{(  0, z_{p})}
    =
    \sum_{\alpha = 0}^{s-|\beta|_{1}}
    \binom{z_{p}}{\alpha}
    D^\alpha D^{\beta} g_0 + E_1,
\end{equation}
with $|E_1| \leq M \binom{ z_{p} }{s-|\beta|_{1}+1}$.
The insertion of \eqref{e:T1} into \eqref{e:Td} yields
\begin{equation}
    g_z = \sum_{|\beta|_{1} \leq s} \binom{y }{ \beta}
    \sum_{\alpha = 0}^{s-|\beta|_{1}}
    \binom{z_{p} }{ \alpha}
    D^\alpha D^{\beta} g_0
    + \sum_{|\beta|_{1} \leq s} \binom{y}{\beta}E_1 + \tilde E.
\end{equation}
The first term on the right-hand side is just the Taylor polynomial
$f_z$ for $g_z$.
It therefore suffices to show that
\begin{equation}
\label{e:Vandermonde}
    \sum_{|\beta|_{1} \leq s} \binom{y}{\beta}
    \binom{ z_{p} }{s-|\beta|_{1}+1}
    +
    \binom{|y|_{1}}{s+1}
    \le
    \binom{|z|_1}{s+1}.
\end{equation}
However, \eqref{e:Vandermonde} follows from a simple counting argument:
the right-hand side counts the number of ways to choose $s+1$ objects
from $|z|_1$, while the left-hand side decomposes this into two terms,
in the first of which at least one object is chosen from the last coordinate
of $z$, and in the second of which no object is chosen from the last
coordinate.
This completes the proof of \eqref{e:Tayrem}.
\end{proof}

The following lemma is used in this paper in the proof of
Proposition~\ref{prop:LTKbound}, and it is also used in
\cite[Lemma~\ref{IE-lem:mart}]{BS-rg-IE}.  Its most natural setting is
$\Zd$, but we do require it in the case of a torus $\Lambda$ with
period $L^N$ for integers $L,N>1$.  Given $j<N$, let $R=L^j$ and
$R'=L^{j+1}$.  Let $\Phi(\h),\Phi'(\h')$ be test function spaces
defined via weights involving parameters $R=L^j,\h$ and
$R'=L^{j+1},\h'$ respectively.  Suppose that $\h'_i/\h_i \le
cL^{-[\phi_i]}$, where $c$ is a universal constant.

\begin{lemma}
\label{lem:phij} Suppose that $p_\Phi \ge d_+'-[\varphi_{\rm min}]$.
Fix $L >1$.  Let $j<N$ and let $X$ be an $L^j$-polymer on $\Zd$ with
enlargement $X_+$ as in Lemma~\ref{lem:gX} with $t=\frac 12$.  There
exists $\bar C_3$, which is independent of $L$ and depends on $j$ only
via $L^{-j}{\rm diam}(X)$, such that for any test function $g$ on
$\Zd$,
\begin{equation}
\label{e:phij}
    \|g\|_{\tilde{\Phi} (X)}
    \le
    \bar C_3
    L^{-d_{+}'}   \|g\|_{\tilde{\Phi}' (X_+)}
,
\end{equation}
with $d_+'$ given by \refeq{dplusprimedef}.  In particular,
$\|g\|_{\tilde{\Phi} (X)} \le \bar C_3 L^{-d_+ '} \|g\|_{\Phi'}$.  The
bound \refeq{phij} also holds for a test function $g$ on the torus
$\Lambda$, provided $L$ is sufficiently large and there is a
coordinate patch $\Lambda' \supset X_+$.
\end{lemma}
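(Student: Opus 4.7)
The plan is to transfer the problem to $\Zd$ via the coordinate patch (which preserves all norms and derivative structure) and explicitly construct a near-optimal competitor $h\in\Phipol(X)$ for the infimum defining $\|g\|_{\tilde{\Phi}(X)}$. For any $\epsilon>0$, the definition of $\|\cdot\|_{\tilde{\Phi}'(X_+)}$ furnishes $f'\in\Phipol(X_+)$ with $\|g-f'\|_{\Phi'}\le(1+\epsilon)\|g\|_{\tilde{\Phi}'(X_+)}$; since $X\subset X_+$ we have $\Phipol(X_+)\subset\Phipol(X)$, so $f'\in\Phipol(X)$ as well. Setting $G=g-f'$ and fixing any $a\in X$, the natural polynomial approximation of $G$ near $a$ is the lattice Taylor polynomial $\Tay_a G\in\Phipol$ defined by \eqref{e:Tay-def}, which incorporates all admissible monomials of dimension at most $d_+$.

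To assemble a competitor that is small outside $X$, I introduce a smooth cutoff $\chi:\Zd\to[0,1]$ constructed as in the proof of Lemma~\ref{lem:gX}, with $\chi\equiv 1$ on $X$, $\chi\equiv 0$ off $X_+$, and $|\nabla_R^\alpha\chi|\le c(\alpha)$, extended multiplicatively to tuples. Setting
\begin{equation}
    h = \chi\,\Tay_a G + (1-\chi)\,G + f',
\end{equation}
one checks that on every $X$-tuple $h$ equals the polynomial $\Tay_a G + p'$, where $p'\in\Phipol$ represents $f'\in\Phipol(X_+)$, so $h\in\Phipol(X)$. A direct expansion yields $g-h=\chi(G-\Tay_a G)$, supported on tuples with all components in $X_+$. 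Arguing exactly as in the proof of Lemma~\ref{lem:gX} via the discrete product rule, $\|g-h\|_{\Phi}$ is then controlled by a constant multiple of $\sup_{z\in{\mathbf X}_+,\,|\beta|_\infty\le p_\Phi}\h^{-z}|\nabla_R^\beta(G-\Tay_a G)_z|$.

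For each such $z=(z_1,\ldots,z_p)$ and $\beta$, I apply Lemma~\ref{lem:taylor-theorem} with $s=d_+-\sum_k[\varphi_{i_k}]$, handling separately the easy cases $|\beta|_1>s$ (where $\nabla^\beta\Tay_a G=0$) and $s<0$ (where $\Tay_a G=0$ itself), in which $G-\Tay_a G$ is estimated directly from $\|G\|_{\Phi'}$. Bounding the $(s+1)$-st order derivative supremum $M_G$ via $|\nabla^\alpha G_y|\le(\h')^y(R')^{-|\alpha|_1}\|G\|_{\Phi'}$ (the requirement $|\alpha|_\infty\le p_\Phi$ is precisely where the hypothesis $p_\Phi\ge d_+'-[\varphi_{\rm min}]$ enters), using $|z-\vec a|_1\le\bar c\,R$ on $X_+$ to estimate $\binom{|z-\vec a|_1}{s+1-|\beta|_1}\le\bar c\,R^{s+1-|\beta|_1}$, and invoking $\h'_i/\h_i\le cL^{-[\varphi_i]}$ so that $(\h'/\h)^z\le c^p L^{-\sum_k[\varphi_{i_k}]}$, the combined scale factors telescope to
\begin{equation}
    R^{|\beta|_1}\cdot R^{s+1-|\beta|_1}\cdot (R')^{-(s+1)}\cdot c^p L^{-\sum_k[\varphi_{i_k}]}
    = c^p L^{-(d_+ + 1)}\le c^p L^{-d_+'},
\end{equation}
where the final inequality uses $d_+'\le d_++1$ (adding one derivative to any monomial of dimension $d_+$ produces one of dimension $d_++1\ge d_+'$). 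Letting $\epsilon\to 0$ yields the bound on $\Zd$, and the torus version follows because the coordinate identifies the patch with a rectangle in $\Zd$, preserving derivatives and norms. The main obstacle is the careful scale bookkeeping across all regimes of $|\beta|_1$ and $p$ — in particular, the $s<0$ regime, where the Taylor polynomial is absent and the $(\h'/\h)^z$ factor alone must supply the $L^{-d_+'}$ decay (using that $\sum_k[\varphi_{i_k}]>d_+$ forces $\sum_k[\varphi_{i_k}]\ge d_+'$ by the very definition of $d_+'$) — together with absorbing the cutoff-generated and combinatorial constants into $\bar C_3$.
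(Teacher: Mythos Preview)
Your approach is essentially the same as the paper's: both reduce to $\Zd$, pick a near-optimal polynomial competitor from the $\tilde\Phi'(X_+)$ infimum, subtract it to get a small remainder $G$ (the paper calls it $h$), use the lattice Taylor polynomial $\Tay_a G$ to build a competitor in $\Phipol(X)$, localise to $X_+$ with the cutoff of Lemma~\ref{lem:gX}, and then split into the case $\sum_k[\varphi_{i_k}]+|\beta|_1>d_+$ (direct bound) versus $\le d_+$ (Taylor remainder via Lemma~\ref{lem:taylor-theorem}). The only cosmetic difference is that you write the cutoff explicitly into the competitor $h=\chi\,\Tay_aG+(1-\chi)G+f'$, whereas the paper passes through the $\Phi(X)$ semi-norm and then invokes Lemma~\ref{lem:gX}; these are the same computation. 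One small point you gloss over: the paper chooses $a$ as the lexicographic bottom corner of $X$ so that the one-sided hypothesis $(z_i)_j\ge a_j$ of Lemma~\ref{lem:taylor-theorem} is (nearly) met on $X_+$, rather than taking ``any $a\in X$''.
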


\begin{proof}
We first consider the case of $\Zd$.  We assume that $X$ is connected;
if it is not then the following argument can be applied in a
componentwise fashion.  For connected $X$, let $a$ be the largest
point which is lexicographically no larger than any point in $X$.

Given $g$, we use Lemma~\ref{lem:testfndecomp} to choose $f \in
\Phipol(X)$ such that $h = g -f$ obeys $\|h\|_{\Phi'(X)} \le 2
\|g\|_{\tilde{\Phi}' (X)}$.  Then $g-(h-\Tay_a h) \in \Phipol(X)$, and
hence
\begin{equation}
\label{e:ghTay}
    \|g\|_{\tilde\Phi(X)} = \| h - \Tay_a h \|_{\tilde\Phi(X)}
    \leq \| h - \Tay_a h \|_{\Phi (X)}.
\end{equation}
It suffices to prove that for every test function $h$,
\begin{align}
\label{e:phijnew}
    \|h - \Tay_a h\|_{\Phi (X)}
    & \le
    \frac 12 \bar C_3
    L^{-d_+ '} \|h \|_{\Phi' ( X_+)},
\end{align}
since $\|h\|_{\Phi'( X_+)}
    \leq
    2   \|g\|_{\tilde\Phi'( X_+)}
    \le 2\|g\|_{\Phi'}$.

The rest of the proof is concerned with proving \eqref{e:phijnew}.  We
write $R=L^j$ and $R'=L^{j+1}$.  Let $r= h - \Tay_a h$.  By
Lemma~\ref{lem:gX} with $t=\frac 12$, there is a constant $K>1$ such
that
\begin{align}
\label{e:gTay1}
    \|r\|_{\Phi (X)}
    & \le
    \sup_{z \in {\mathbf X}_+} (K\h^{-1})^z
    \sup_{|\beta|_\infty \le p_\Phi}
    | \nabla_R^\beta  r_{z}  |.
\end{align}
By the hypothesis on $\h'$, \eqref{e:gTay1} implies that
\begin{align}
\label{e:rhognew}
    \|r\|_{\Phi (X)}
    & \le
    \sup_{z \in {\mathbf X}_+}
    (cK\h'^{-1})^z
    \sup_{|\beta|_\infty \le p_\Phi}
    L^{-(\sum_k [\varphi_{i_k}]+|\beta|_1)}
    | \nabla_{R'}^\beta  r_{z}  |,
\end{align}
where the sum on the right-hand side is over the components present in
$z$.  We write $u \prec v$ to denote $u \le {\rm const}\, v$ with a
constant whose value is unimportant.

Consider first the case $\sum_k [\varphi_{i_k}]+|\beta|_1 > d_{+}$,
for which $\nabla^\beta r_z=\nabla^\beta h_z$.  By definition of
$d_+'$ in \refeq{dplusprimedef}, $\sum_k [\varphi_{i_k}]+|\beta|_1
\geq d_{+}'$.  We claim that the contribution to the right-hand side
of \eqref{e:rhognew} due to this case is
\begin{align}
\label{e:phijcase1}
    & \prec
    L^{-d_{+}'}
    \|h\|_{\Phi'( X_+)},
\end{align}
as required.  In fact, here there is no dependence on $R^{-1}{\rm
diam}(X)$ in the constant, and the hypothesis on $p_\Phi$ ensures that
there are sufficiently many derivatives in the norm of $h$.  The
potentially dangerous factor $(cK)^z$ is uniformly bounded when $p(z)$
is uniformly bounded, in particular with $p(z) \le d_+'/[\varphi_{\rm
min}]$.  On the other hand, when $p(z) > d_+'/[\varphi_{\rm min}]$,
the excess $(cK)^{p(z)-d_+'/[\varphi_{\rm min}]}$ is more than
compensated by the number of excess powers of $L^{-1}$ from
\refeq{rhognew}, namely $\sum_k [\varphi_{i_k}]+|\beta|_1 - d_+' \ge
p(z)[\varphi_{\rm min}] -d_+'$, for large $L$.

For the case $\sum_k [\varphi_{i_k}]+|\beta|_1 \le d_{+}$, we write
$t=|\beta|_1$ and $s=d_{+}-\sum_k [\varphi_{i_k}] \ge t$.  In this
case, $p(z)$ must be uniformly bounded, and hence so is the factor
$(cK)^z$ in \refeq{rhognew}.  By Lemma~\ref{lem:taylor-theorem}, there
exists $\bar c$, depending on $R^{-1}{\rm diam}(X)$, such that
\begin{align}
\label{e:tayerrest}
      |\nabla^\beta   r_{z} |
    &\le
    \bar {c}  \sup_{|\alpha |=s +1}
    R^{s-t+1} \sup_z |\nabla^{\alpha} h_{z} |
      \leq
     \bar {c} R^{s-t+1} (R')^{-s-1} (\h')^{z}
     \|h \|_{\Phi'(X_+)},
\end{align}
(the power of $R$ in the first line arises from the binomial
coefficient in \eqref{e:Tayrem2}, and it is here that the constant
develops its dependence on $R^{-1}{\rm diam} (X)$) and hence
\begin{align}
     (\h')^{-z}|\nabla_{R'}^\beta r_{z} |
    &\le
    \bar {c} R^{s-t+1} (R')^{t-s-1}
     \|h \|_{\Phi'(X_+)}
     \prec
    \bar {c} L^{t-s-1}
     \|h \|_{\Phi'(X_+)}.
\end{align}
Thus the contribution to \eqref{e:rhognew} due to this case is
\begin{equation}
    \prec
    \bar{c}
    L^{-\sum_k [\varphi_{i_k}]-t+t-s-1}
     \|h \|_{\Phi'(X_+)}
     =
    \bar{c} L^{-d_{+}-1}
     \|h \|_{\Phi'( X_+)}
.
\end{equation}
Since $d_+ +1 \ge d_+'$ by the definition of $d_{+'}$, this completes
the proof for the case of $\Zd$.

The torus case follows from the $\Zd$ case by the coordinate patch
assumption, once we choose $L$ large enough to ensure that the set
$\cup_{z \in {\mathbf X}_+}S_s(a,z)$ lies in a coordinate patch if
$X_+$ does.  This is possible because $j<n$ and hence there is a gap
of diameter at least $L$ preventing $X_+$ from ``wrapping around'' the
torus, whereas the enlargement of $X_+$ due to the set $S_s(a,z)$
depends only on $d_+$.  This enlargement cannot wrap around the torus
if $L$ is large enough.
\end{proof}

\section*{Acknowledgements}

The work of both authors was supported in part by NSERC of Canada.
DB gratefully acknowledges the support and hospitality of the
Institute for Advanced Study at Princeton and of Eurandom during part
of this work.
GS gratefully acknowledges the support and hospitality of the Institut
Henri Poincar\'e, where part of this work was done.
We thank Beno\^it Laslier for insightful comments which led to important
corrections, and also to simplifications in the proofs of
Propositions~\ref{prop:LTsymexists} and \ref{prop:Locbd}.
We also thank an anonymous referee for numerous pertinent suggestions.

\bibliography{../../bibdef/bib}
\bibliographystyle{plain}

\end{document}